\DeclareMathAlphabet{\pazocal}{OMS}{zplm}{m}{n}
\NewDocumentEnvironment{myproof}{o}
  {\IfNoValueTF{#1}{\paragraph{{{\small Proof.}}}} {\paragraph{{{\small #1.}} }} }
  {\hfill$\qed$}
\title{
Strategic Hiring under Algorithmic Monoculture
}
\author{
Jackie Baek\thanks{Stern School of Business, New York University, \texttt{baek@stern.nyu.edu}} 
\and 
Hamsa Bastani\thanks{Wharton School, University of Pennsylvania, \texttt{hamsab@wharton.upenn.edu}}
\and 
Shihan Chen\thanks{Graduate Group in Applied Mathematics and Computational Science, University of
Pennsylvania, \texttt{gracsh@sas.upenn.edu}}
}
\begin{document}

\maketitle

\thispagestyle{empty}

\begin{abstract}
We study the impact of strategic behavior in labor markets characterized by \textit{algorithmic monoculture}, where firms compete for a shared pool of applicants using a common algorithmic evaluation. In this setting, ``naive'' hiring strategies lead to severe congestion, as firms collectively target the same high-scoring candidates. We model this competition as a game with capacity-constrained firms and fully characterize the set of Nash equilibria. We demonstrate that equilibrium strategies, which naturally diversify firms' interview targets, significantly outperform naive selection, increasing social welfare for both firms and applicants. Specifically, the \textit{Price of Naive Selection} (welfare gain from strategy) grows linearly with the number of firms, while the \textit{Price of Anarchy} (efficiency loss from decentralization) approaches 1, implying that the decentralized equilibrium is nearly socially optimal. Finally, we analyze convergence, and we show that a simple sequential best-response process converges to the desired equilibrium. 
However, we show that firms generally cannot infer the key input needed to compute best responses, namely congestion for specific candidates, from their own historical data alone.
Consequently, to realize the welfare gains of strategic differentiation, algorithmic platforms must explicitly reveal congestion information to participating firms.
\end{abstract}

\newpage
\setcounter{page}{1}

\section{Introduction}
\label{sec:intro}

The rise of online job platforms has dramatically transformed the hiring landscape. Applicants can now apply to a vast number of jobs with minimal effort, resulting in job postings often receiving an overwhelming volume of applications. To manage this influx, many employers rely on algorithmic tools to aid in the screening process---three-fourths of employers in the United States use automated tools to screen candidates \citep{fuller2021hidden}. These tools range from simple keyword filters to advanced machine learning models that evaluate answers to interview questions \citep{raghavan2020mitigating}. While research has shown that algorithmic recommendations can improve job fill rates \citep{horton2017effects}, demonstrating their potential to enhance labor market efficiency, the increasing reliance on and widespread adoption of these systems raise questions about their implications for labor market outcomes.

While larger firms may be able to develop such algorithmic systems in-house, many firms employ tools provided by external vendors (e.g., LinkedIn, HireVue, ThriveMap, etc.). This can result in multiple firms using the \textit{same} algorithm, which raises fundamental questions about competition and efficiency in hiring outcomes in such settings.   
This scenario where multiple agents leverage the same algorithm is referred to as \textit{algorithmic monoculture}, a term which draws parallels to the agricultural practice of cultivating a single crop species, which can lead to vulnerabilities such as disease outbreaks due to lack of diversity. Similarly, algorithmic monoculture can create inefficiencies and homogenization in hiring outcomes. Recent studies have highlighted potential risks associated with this phenomenon, including reduced firm utility and adverse effects on applicant welfare \citep{kleinberg2021algorithmic,bommasani2022picking,peng2023monoculture}. In this paper, we consider a setting where algorithmic monoculture exists, and we evaluate how firms can make \textit{strategic} interview and hiring decisions to mitigate inefficiencies.

Concretely, consider an algorithm that evaluates candidates based on their job application and assigns each a scalar score representing their predicted value to employers. When multiple firms use the same algorithm, they all see identical scores for each candidate. 
In a simplified setting where each firm can interview exactly one applicant, all firms might target the highest-scoring candidate.  This is an undesirable outcome from both the perspective of the firms (at most one firm will successfully hire) as well as for the applicants (at most one applicant will be hired). However, this inefficiency is not inevitable as firms can adopt \textit{strategic behaviors}, such as avoiding competing for top candidates when they anticipate high competition. The extent to which such strategies can mitigate inefficiencies and improve labor market outcomes is the focus of this paper.

We consider a model with $N$ firms and a shared pool of applicants.
Each firm makes interview decisions that are capacity-constrained, with the goal of maximizing the number of successful hires.
Each applicant is associated with a score $s \in [0, 1]$, visible to all firms, which represents the probability that the applicant will pass an interview.
Each firm decides which applicants to interview, and then gives an offer to all applicants who pass the interview.
If an applicant receives offers from multiple firms, they accept one of them at random.
We consider two variants of how the offer decisions are correlated across firms for the same applicant, either fully correlated or independent.  

We define social welfare to be the total number of applicants who accept a job. 
Under this model, we consider three different solutions that vary in the strategic nature of the firms:
\begin{enumerate}
    \item \textit{Naive}: All firms interview the highest scoring candidates.
    \item \textit{Nash equilibrium (NE)}: All firms are strategic in their interview decisions, and the strategies form a Nash equilibrium. 
    \item \textit{Centralized}: Every firm's decision is controlled by a centralized decision maker who chooses the outcome that maximizes social welfare.
\end{enumerate}

\subsection{Summary of Results}

\paragraph{Characterize Nash equilibria.} We parameterize strategies using \textit{thresholds} on applicant scores, where pairs of consecutive thresholds define the score ranges receiving a given number of interviews.
We show that a strategy profile forms an NE if and only if the corresponding thresholds satisfy a specific set of properties that we identify.
Interestingly, the resulting equilibrium strategies are often non-monotonic in candidate quality---for example, a firm may choose to interview the top and third deciles of candidates, but skip the second decile to avoid competition.

We also characterize how the number of interviews each applicant receives depends on their score under the NE. 
We show that when the firms' hiring decisions are independent, all applicants who receive interviews get nearly the same number of interviews, while under correlated decisions, the number of interviews increases steadily with applicant scores.

\paragraph{Social welfare comparisons.} 
We show that social welfare, the total number of applicants hired, increases as one goes down the list from Naive, to NE, to Centralized.
We analyze the \textit{Price of Naive Selection (PoNS)}, the ratio of social welfare between the NE and Naive solutions.
We establish that PoNS is high in regimes where there is a large number of firms and the interview capacity is small; specifically, when the interview capacity goes to 0, the PoNS goes to $N$.
We also show that for any interview capacity, the PoNS is higher when the offer decisions are independent across firms, compared to when they are correlated.
Therefore, under these regimes, naive strategies result in large inefficiencies, and there is a huge welfare gain when firms behave strategically. 

We also evaluate the \textit{Price of Anarchy (PoA)}, the ratio of welfare between the Centralized and NE solutions.
In the same regimes where we show that the PoNS is high, we show that the PoA goes to 1.
That is, there is essentially no loss in welfare due to the lack of a centralized decision-maker.
Therefore, the Nash equilibrium yields essentially the best possible social welfare, and hence strategic behavior by firms is desirable from both the firms' and the applicants' perspectives.

\paragraph{Convergence to NE.}
Though we show that the NE is desirable, it is unclear whether firms will be able to converge to an NE in practice.
To this end, we first show that best response dynamics always converge to an NE in a discretized version of the game. Moreover, when interview capacities are small, a simple set of ``one-turn'' best response dynamics---where each firm sequentially chooses the best response assuming they are the last to act---also converges to an NE.
Therefore, provided firms can compute best responses, an NE can arise relatively easily.

\paragraph{Best response is difficult to estimate.}
However, computing a best response requires information on competitors’ strategies, which is typically unavailable in practice. 
We show that it is difficult for a firm to estimate these strategies using only their own historical data. 
Specifically, in a simplified setting where a firm must choose between two applicant pools differing only in success probabilities, distinguishing which pool yields higher expected utility requires a prohibitively large number of samples (e.g., on the order of hundreds).
This implies that without direct congestion information provided by the platform, firms are unlikely to reliably estimate best responses, making convergence to equilibrium behavior unrealistic.

\paragraph{Extension: flexible capacity.}
We consider an extension where firms have a fixed target hiring yield rather than a fixed interview capacity, where they can adjust their capacity to meet this target. 
We show that for any target welfare level, the NE solution achieves the goal with far less total interview capacity than the Naive solution. Thus, strategic behavior enables firms to meet hiring needs much more efficiently.

\paragraph{Extension: applicant preferences over tiered firms.}
We also consider an extension where firms belong to two tiers, and all applicants prefer firms from the higher tier.
We provide sufficient conditions for an NE in this setting and show that social welfare is higher under the tiered NE compared to the non-tiered NE. This implies that it is even more important in the tiered regime for firms to be strategic and arrive at an NE.

\subsection{Implications}

\paragraph{Value of strategic behavior.} First, we demonstrate the substantial value of strategic hiring under algorithmic monoculture.
Strategic behavior improves welfare for both sides of the market.
Since simple best response dynamics converge to the Nash equilibrium, the optimal outcome is attainable provided firms can compute their best response.

\paragraph{Value of congestion information.}
For firms to compute their best response, they require information on the congestion level (the number of other firms interviewing an applicant).
In settings where firms operate on a common algorithmic hiring platform, the platform can play a crucial role by providing this data. With access to congestion information, firms can make informed, strategic decisions that reduce the inefficiencies caused by overlapping interview selections. 
Thus, platforms that prioritize transparency and provide insights into applicant competition levels enable more effective use of algorithmic hiring systems, benefiting both firms and applicants.

Conversely, if sharing congestion information is infeasible, our results imply that the utility of a common algorithmic hiring platform may be limited. Without such information, firms will likely default to naive strategies, leading to high congestion and low social welfare. This highlights a critical consideration for firms adopting algorithmic hiring platforms: if the platform cannot facilitate coordination or mitigate competition, its adoption may fail to deliver intended benefits and could exacerbate inefficiencies. 
This aligns with existing literature raising concerns regarding algorithmic monoculture \citep{kleinberg2021algorithmic,peng2023monoculture}.

\subsection{Related Work}

\paragraph{Algorithmic monoculture and homogenization.}
The concept of algorithmic monoculture was formalized by \citet{kleinberg2021algorithmic}, who showed that firms relying on a common algorithm may hire weaker applicants than when each firm uses an independent, but less individually accurate, hiring method. 
\citet{peng2023monoculture} incorporate two-sided preferences and competition to compare outcomes between monoculture and polyculture (when firms make independent decisions).
They leverage a two-sided matching model  and evaluate the stable matching outcome under monoculture and polyculture.
They show that monoculture can reduce firm utility compared to polyculture, but monoculture can improve average utility for the applicants.
\citet{kleinbergprice} generalize the model in \cite{kleinberg2021algorithmic} and quantify the social welfare loss.

Compared to the above works, the goal of this paper is not to evaluate the benefits or downsides of monoculture. Rather, we simply assume that monoculture exists, and then we evaluate how firms should make decisions in a setting with congestion effects.
We show that social welfare can drastically improve when firms make \textit{strategic} decisions based on the algorithm's output, compared to when they naively follow the algorithm's recommendation.

\cite{besbes2025impact} study the impact of public rankings (e.g., college rankings), which can be thought of as analogous to algorithmic monoculture. They show that under supply constraints, the value of public rankings are limited, whereas personalized recommendations can provide substantial welfare gains.
Though \cite{besbes2025impact} do not study strategic behavior, they arrive at a similar insight as us---global rankings may reduce efficiency by creating congestion for the top items.

One consequence of  algorithmic monoculture is \textit{outcome homogenization}, the idea that certain individuals systematically experience undesirable outcomes by many algorithmic systems. There is growing line of work that study homogenization caused by algorithms \citep{ajunwa2019paradox,bommasani2022picking,jain2024algorithmic,toups2024ecosystem}, and the recent advances in generative AI has sparked studies on its impact on diversity of outcomes \citep{padmakumar2023does,anderson2024homogenization,doshi2024generative,raghavan2024competition,zhou2024generative}.
Our paper studies outcome homogenization in the hiring context. 
Indeed, under the naive baseline where all firms interview the top-scoring candidates, every applicant receives the same outcome (interview decision) from every firm. 
We study whether this homogenization can be mitigated through strategic behavior. 

\paragraph{Congestion in matching markets.}
A key difference of our model to that of \citet{peng2023monoculture} is that the latter uses \textit{stable matching} as the solution concept, without specifying the \textit{process} in which the stable matching arises.
There are also papers that study stable matching in a market where interviews are conducted to learn the utility of match  
 \citep{beyhaghi2021randomness,allman2025signaling,ashlagi2025stable}.
A stable matching can be found, for example, iteratively using the deferred acceptance algorithm \citep{gale1962college}. 
In contrast, our paper fixes a \textit{one-step} process in which firms hire applicants: firms decide simultaneously which applicants to interview and gives offers to everyone who passes the interview.
This process is motivated by the fact that firms face screening costs and hence have capacity constraints on the number of applicants they can interview. 
This process causes issues due to congestion; i.e., reduced utility when multiple firms interview the same applicant.

Other papers have modeled congestion or search costs in matching models \citep{kadam2015interviewing,vohra2024matching, halaburda2018competing,arnosti2021managing,kanoria2021facilitating,lee2017interviewing,manjunath2023interview,donahue2025optimal}. 
\citet{kadam2015interviewing} analyzes how increases in student interview capacity affect total surplus and the number of matches, while \cite{vohra2024matching} focus on analyzing how the \textit{timing} of interviews and offer acceptances affects the total welfare, assuming firms are always strategic. The other papers show that various restrictions on the matching process can alter outcomes. For example, \citet{halaburda2018competing,arnosti2021managing,kanoria2021facilitating,lee2017interviewing} show that reducing the number of applications that an individual can send, or restricting which side can initiate a match can improve the matching outcome. Compared to these works, the main difference of our paper is that we assume all firms have access to an informative signal for all applicants, while these papers assume that each agent has no a priori knowledge about other agents (i.e., no algorithmic recommendation). \cite{manjunath2023interview} show that applicants can be potentially harmed if applicants can accept more interviews in the residency matches between doctors and hospitals, where they assume that hospitals always make naive decisions. 
\cite{donahue2025optimal} studies a single decision-maker where candidates who are higher quality tend to be less likely to be available, and they show that algorithms with higher accuracy can decrease social welfare. 

Several empirical studies have evaluated various interventions to improve efficiency under congestion.
\citet{gee2019more,besbes2023signaling,fradkin2023competition,filippas2024costly} empirically show that signaling the level of competition can improve efficiency.
The benefit of this intervention is also established in our paper, and therefore our work is complementary to these empirical findings.
\citet{manshadi2023redesigning} design a ranking algorithm to take congestion into account on an online platform to match volunteers to nonprofits.
Specific to the hiring application, \citet{horton2017effects} shows that leveraging algorithmic recommendations of applicants to firms substantially increase the fill rate, demonstrating the value of algorithmic hiring.
\citet{horton2024reducing} conduct a field experiment which showed that imposing a cap on the number of applications a job opening can receive can improve efficiency. \citet{coles2013preference} study a mechanism where applicants can send a signal of interest to firms, and this can improve applicant welfare.
\citet{dwork2024equilibria} studies congestion and incoordination in a social network where individuals can refer others for job opportunities.

\paragraph{Algorithmic hiring and discrimination.}
There is a large literature studying algorithmic hiring, on developing the algorithms themselves (e.g., \citet{purohit2019hiring,epstein2022order,aminian2023fair}), human perceptions of hiring algorithms (e.g., \citet{fumagalli2022ok,zhang2022examining}), as well as evaluating bias propagated or incited by such algorithms (e.g., \citet{cowgill2019bias,raghavan2020mitigating,li2020hiring,baek2023feedback,komiyama2024statistical,gaebler2024auditing,kim2025fair})

\cite{farajollahzadeh2025rooney} study interview-stage diversity interventions using a two-stage hiring model.
Our model has parallels to their two-stage framework of hiring, but their focus is on diversity and statistical discrimination, whereas we center on inter-firm competition and strategic behavior. \cite{castera2022statistical} study discrimination arising from differential correlation of firm's priority scores across groups, where this correlation can arise due to algorithmic monoculture.  \cite{parasurama2025algorithmic} show that a higher correlation between the algorithm's and the human hiring manager's screening criteria leads to lower diversity in hiring outcomes.

\section{Model}
\label{sec:model}

Consider a labor market with $N \geq 2$ firms hiring from a shared pool of applicants. We assume a continuum of applicants with total mass 1. All firms utilize a common hiring platform that evaluates applicants and assigns a score $s \in [0, 1]$ to each. These scores follow a continuous distribution $\cD$ on $[0, 1]$ with probability density function $\varphi(s)$. We assume $\varphi(s) > 0$ for all $s \in [0,1]$.

\paragraph{Firm Strategies.}
Firms make interview decisions based solely on applicant scores. Firm $i$ chooses a strategy function $f_i: [0,1] \rightarrow \{0,1\}$, where $f_i(s) = 1$ indicates that the firm interviews applicants with score $s$.
Each firm faces a capacity constraint $c \in (0, 1]$, such that the total mass of applicants interviewed cannot exceed $c$. Formally, a strategy $f_i$ is feasible if:
\[ \int_{0}^{1} f_i(s) \varphi(s) \, ds \leq c. \]
We assume the set of interviewed applicants, $\{s \in [0, 1]: f_i(s) = 1\}$, consists of a finite union of intervals. Let $\mathbf{f} = (f_1, \dots, f_N)$ denote a \textit{strategy profile}, and let $\mathcal{F}$ be the set of all feasible strategy profiles. We focus on pure strategies; we show in Appendix~\ref{appendix:B} that, even when mixed strategies are allowed, all Nash equilibria consist of pure strategies.

We define $M(s; \mathbf{f}) = \sum_{i=1}^N f_i(s)$ as the congestion level---the total number of firms interviewing an applicant with score $s$ under profile $\mathbf{f}$. We let $\Mmax(\mathbf{f}) = \max_{s\in [0,1]} M(s; \mathbf{f})$ denote the maximum competition for any candidate.

\paragraph{Interview and Hiring Process.}
The score $s$ represents the probability that an applicant passes an interview and receives an offer. We assume this probability is intrinsic to the applicant and identical across all firms. If an applicant receives multiple offers, they accept one uniformly at random.
We analyze two decision schemes, $\theta \in \{\textsc{corr}, \indep\}$, describing the correlation of interview outcomes across firms:

\begin{enumerate}
    \item \textbf{Correlated} ($\theta = \textsc{corr}$): Interview outcomes are perfectly correlated. An applicant with score $s$ interviewed by $n$ firms passes all interviews (receiving $n$ offers) with probability $s$, and fails all with probability $1-s$.
    \item \textbf{Independent} ($\theta = \indep$): Interview outcomes are independent. Each firm interviewing an applicant with score $s$ independently extends an offer with probability $s$.
\end{enumerate}

In practice, whether the decision scheme is correlated or independent (or somewhere in between) can depend on type of job and the interview process. For example, if the interview is very similar across all firms, then the results of the interview could be highly correlated for the same applicant across firms. On the other hand, if each firm uses a unique evaluation method, then the hiring decisions may be independent across firms.  

An instance of the model $\cI = (N, c, \cD, \theta)$ is specified by the number of firms $N$, the capacity $c$, the score distribution $\cD$, and the hiring decision scheme $\theta \in \{\textsc{corr}, \indep\}$.

\paragraph{Utility and social welfare.}
A firm's utility is the expected mass of applicants that they successfully hire.
The firm derives the same utility from any successful hire regardless of the applicant's score $s$; the score only affects the likelihood of an offer.
Let $U_n(s)$ denote the expected utility a firm derives from interviewing an applicant with score $s$, given that $n$ firms total are interviewing that applicant. This represents the probability that the firm successfully hires the applicant.
The form of $U_n(s)$ depends on the decision scheme $\theta$. When $\theta = \textsc{corr}$, $U_n(s) = s/n$. When $\theta = \indep$, $U_n(s) = (1-(1-s)^n)/n$.
Figure~\ref{fig:utility_lines} illustrates these functions for $n = 1, \dots, 6$. 

Some of our results rely on the decision scheme $\theta$, while others only rely on an assumption on the utility functions stated below.
\begin{assumption} \label{assump:utility_functions}
For all $n \in [N]$, $U_n(s)$ is continuous and strictly increasing in $s \in [0, 1]$, and $U_n(0) = 0$.
Moreover, for $n < n'$, we have $U_n(s) > U_{n'}(s)$ for all $s \in [0, 1]$.
\end{assumption}

\begin{figure}[ht]
    \centering

    \subfigure[$\theta =$\textsc{corr}]{
        \includegraphics[width=0.47\linewidth]{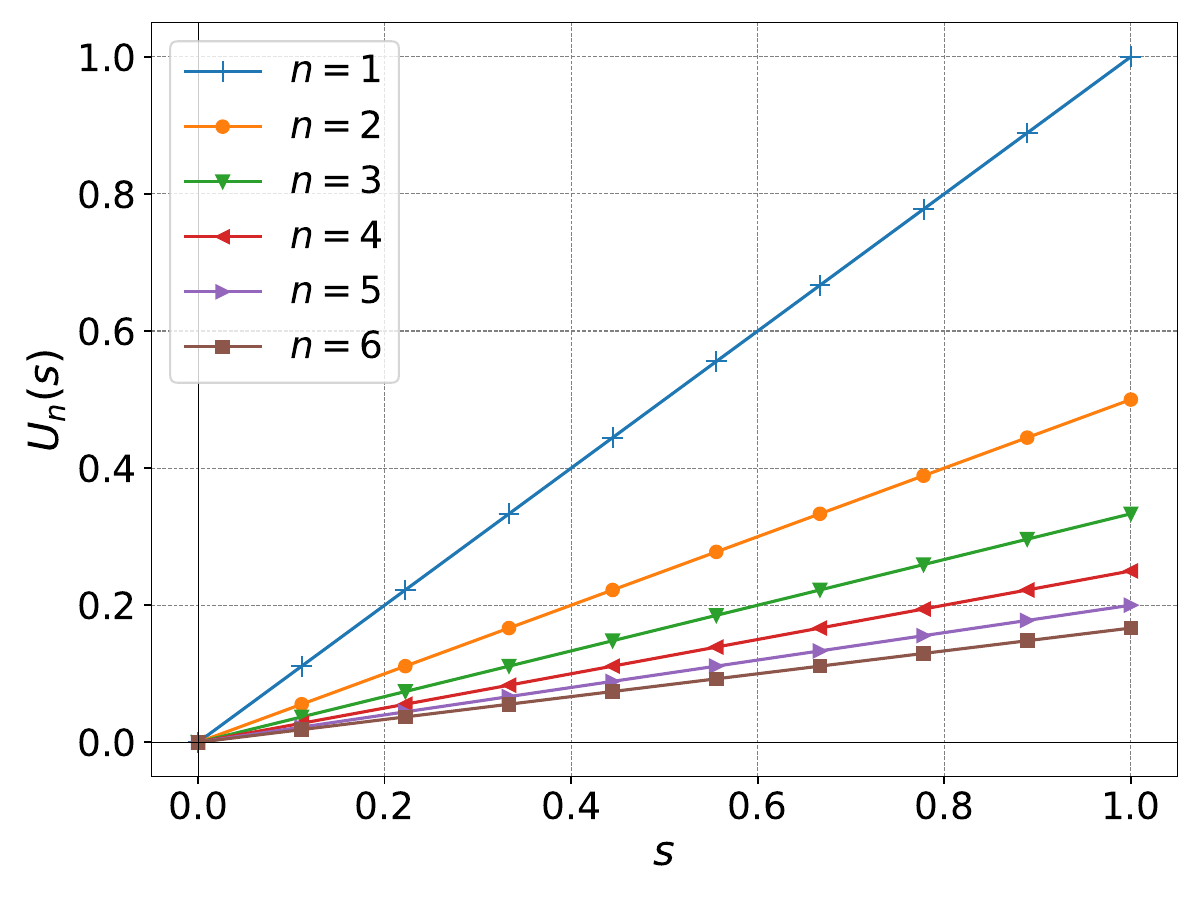}
        \label{fig:utility_correlated}
    }
    \hfill
    \subfigure[$\theta =$\textsc{indep}]{
        \includegraphics[width=0.47\linewidth]{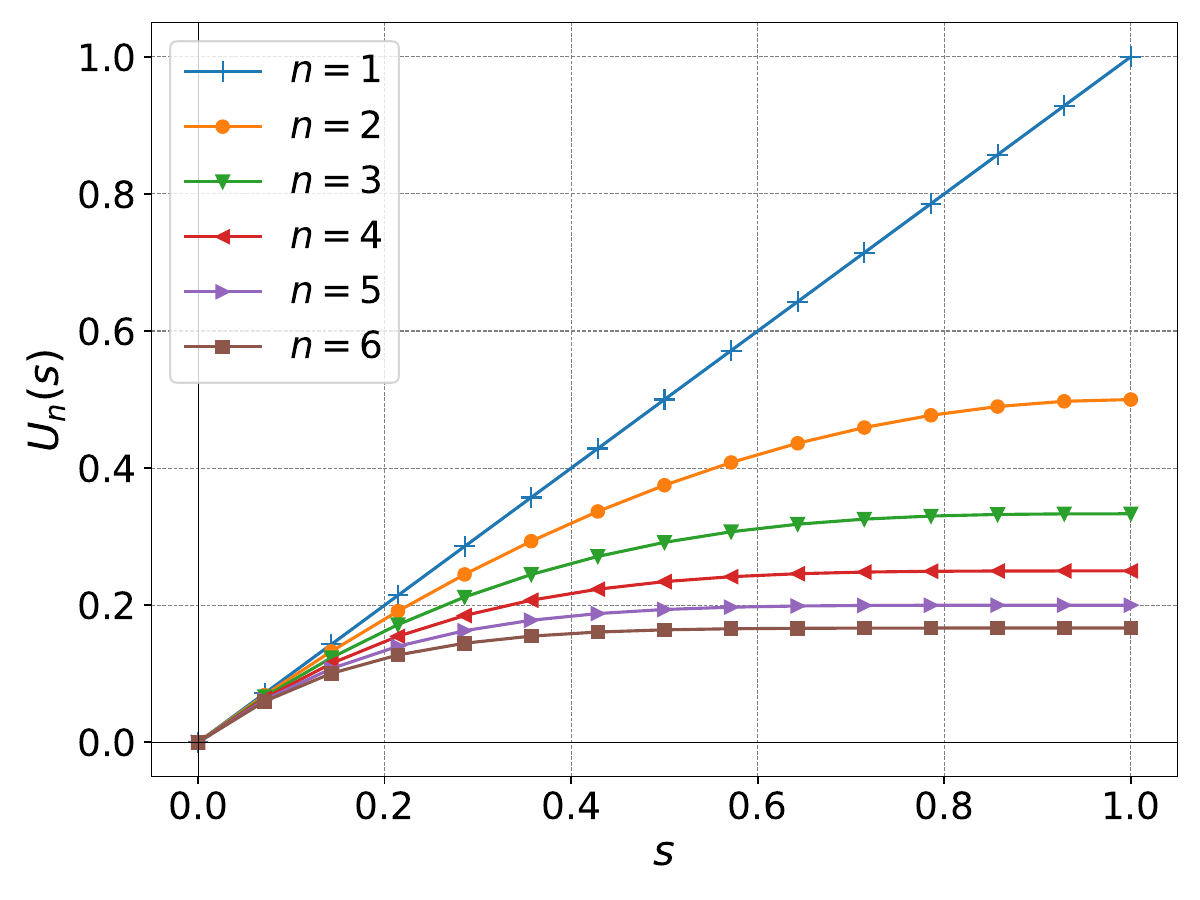}
        \label{fig:utility_indep}
    }

    \caption{Plots of the utility curves $U_n(s)$ when 
    $\theta \in \{\mathrm{corr}, \mathrm{indep}\}$ and $n = 1, \dots, 6$.}
    \label{fig:utility_lines}
\end{figure}

Then, given firm $i$'s action, $f_i$, and the actions of all other firms, $f_{-i}$, the utility for firm $i$ is
\begin{align*}
    u(f_i, f_{-i}) = \int_{0}^1 f_i(s) U_{M(s, \mathbf{f})}(s) \varphi(s)ds.
\end{align*} 
We define the social welfare to be the sum of the utilities for all firms:
\begin{align*}
    \SW(f_1, \dots, f_N) = \sum_{i=1}^N u(f_i, f_{-i}).
\end{align*}
This term also represents the social welfare on the applicant's side---since the firm's utility is its mass of successful hires, $\SW(f_1, \dots, f_N)$ equals the mass of applicants who get a job.

\paragraph{Solution concepts.}
For a problem instance $\cI$, we consider three different solution concepts that vary in how the strategy profiles $(f_i)_{i=1}^N$ are chosen.

\begin{enumerate}
\item  \textbf{Naive}: 
All firms non-strategically interview the highest-scoring applicants up to their capacity. Specifically, let $s_c \in [0, 1]$ be the threshold such that $\int_{s_c}^1 \varphi(s) ds = c$, and let $f^{\text{Naive}}(s) = 1$ for all $s \geq s_c$, and $f^{\text{Naive}}(s) = 0$ otherwise.
The social welfare is $\SW_\text{naive}(\cI) = \SW(f^{\text{Naive}}, \dots, f^{\text{Naive}})$.
\item \textbf{Nash equilibrium (NE)}:  A strategy profile $(f_1, \dots, f_N)$ is a Nash equilibrium if no firm can unilaterally deviate to another strategy to strictly increase their utility. 
Specifically, for every $i$ and for any strategy $f'$ that satisfies the capacity constraint,
\begin{align*}
   u(f', f_{-i}) \leq u(f_i, f_{-i}).
\end{align*}
Let $\mathcal{F}_{\text{NE}} \subseteq \mathcal{F}$ be the set of all strategy profiles that is a Nash equilibrium.
We define $\SW_\text{NE}$ to be the smallest social welfare from a Nash equilibrium solution:
\begin{align*}
    \SW_\text{NE}(\cI) = \inf_{(f_1, \dots, f_N) \in \mathcal{F}_{\text{NE}}}  \SW(f_1, \dots, f_N).
\end{align*}
\item \textbf{Centralized}: 
The centralized solution is one where all firms' decisions are controlled by a centralized decision maker who chooses the outcome that maximizes social welfare. We denote social welfare under this as
\begin{align*}
    \SW_\text{max}(\cI) = \sup_{(f_1, \dots, f_N) \in \mathcal{F}} \SW(f_1, \dots, f_N ).
\end{align*}
\end{enumerate}

We define the \textit{Price of Naive Selection (PoNS)} as the ratio between the social welfare under the Nash equilibrium solution to the Naive solution:
\begin{align*}
    \PoNS(\cI) = \frac{\SW_\text{NE}(\cI)}{\SW_\text{naive}(\cI)}.
\end{align*}
We define the \textit{Price of Anarchy} as the ratio between the social welfare under the Centralized solution to the Nash equilibrium.
\begin{align*}
    \PoA(\cI) = \frac{\SW_\text{max}(\cI)}{\SW_\text{NE}(\cI)}.
\end{align*}

We discuss several of the modeling assumptions and their limitations in \cref{sec:discussion_and_conclusion}.

\section{Characterizing the Nash Equilibrium}

In this section, we characterize structure of Nash equilibria, which we also show always exists.
We first explain this characterization before rigorously formalizing the result in \cref{thm:equilibrium}.

\paragraph{Thresholds.}
A key concept in our characterization are non-decreasing thresholds, 
$0 \leq \tau_1 \leq \tau_2 \leq \dots \leq \tau_N \leq 1$,
where applicants with score $s \in [\tau_i, \tau_{i+1})$ are interviewed by exactly $i$ firms.
We show that every Nash equilibrium is associated with a set of thresholds $(\tau_i)_{i=1}^N$, which must satisfy a specific set of conditions.
We differentiate between two classes of equilibria, dubbed ``equal-utility'' and ``variable-utility'', based on two different classes of conditions that the thresholds satisfy.

\paragraph{Equal-utility equilibria.}
These equilibria are those in which the thresholds satisfy $U_i(\tau_i) = U_j(\tau_j)$ for $\tau_i, \tau_j \in (0, 1)$.
That is, across the thresholds, the utility for a firm that is interviewing an applicant at the threshold is equal.
Graphically, an equal-utility equilibrium is characterized by a \textit{horizontal line} in the plot of the utility curves $U_n(s)$, where $\tau_i$ is the intersection of the horizontal line with the curve $U_i(s)$.
We explain the significance of the horizontal lines through an example.

\begin{figure}[ht]
    \centering

    \subfigure[Capacity of each firm is 0.2.]{
        \includegraphics[width=0.47\linewidth]{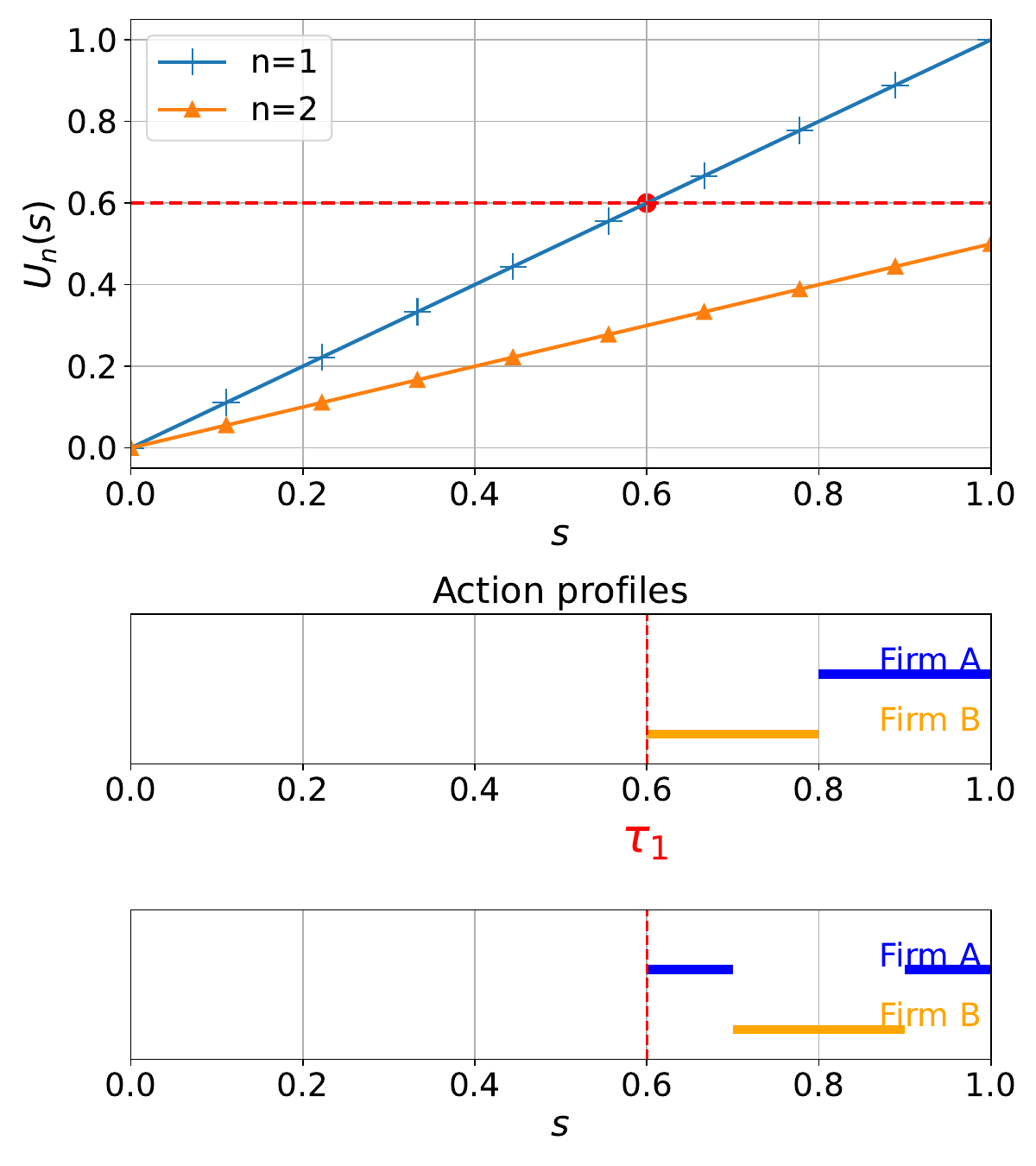}
        \label{fig:ex1}
    }
    \hfill
    \subfigure[Capacity of each firm is 0.35.]{
        \includegraphics[width=0.47\linewidth]{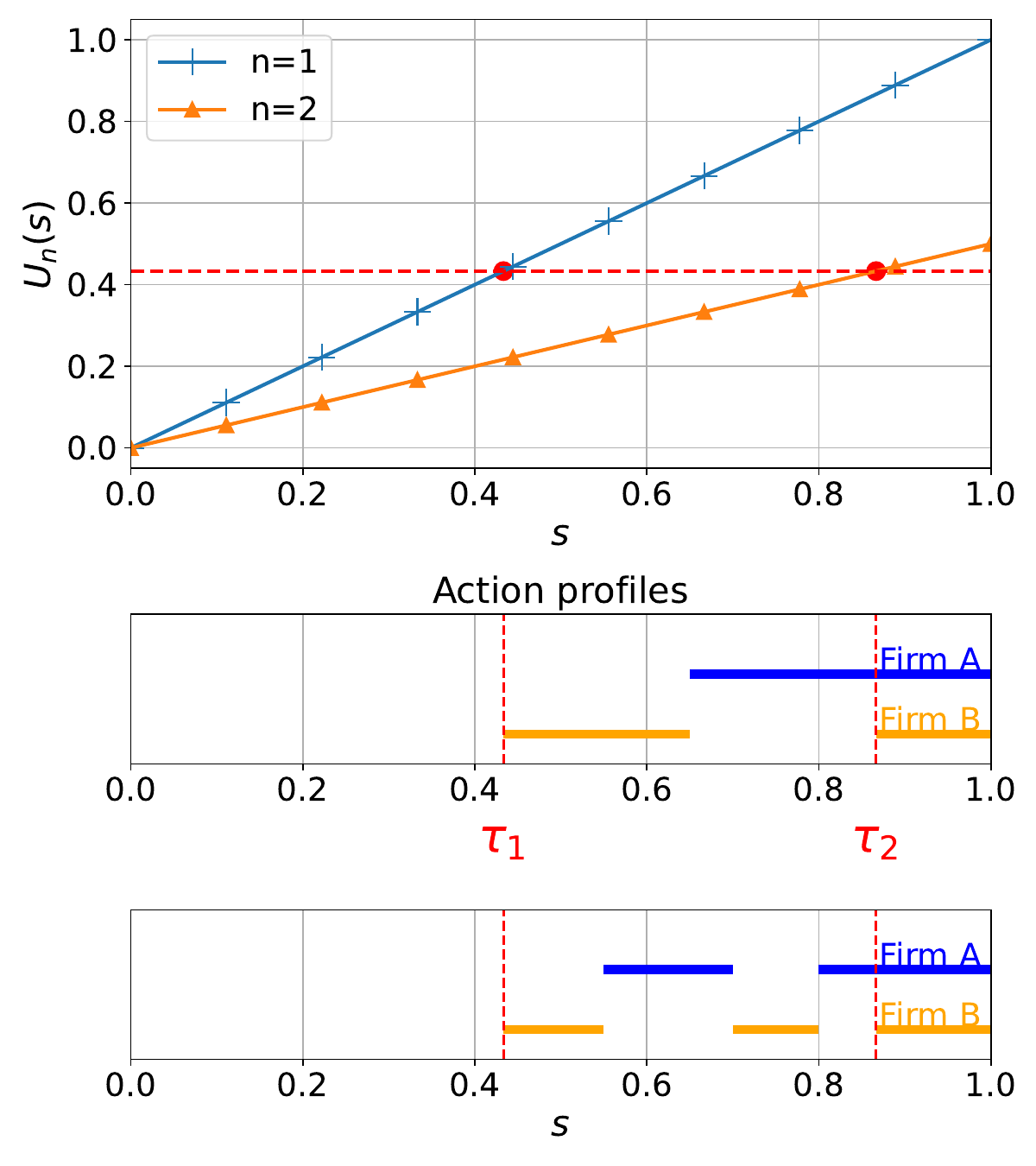}
        \label{fig:ex2}
    }

    \caption{Examples of equal-utility equilibria when $N = 2$ and $\theta = $\textsc{corr}. An equilibrium is characterized by the dashed red horizontal line in the upper plots. The lower plots depict two possible strategy profiles that form an NE with the corresponding thresholds. In Figure~\ref{fig:ex1}, $c=0.2$ and the equilibrium is characterized by the horizontal line $y=0.6$ with $\tau_1 = 0.6$. In Figure~\ref{fig:ex2}, the capacity is $c = 0.35$ and an equal-utility equilibrium is characterized by $y=13/30$ with $\tau_1 = 13/30$ and $\tau_2 = 13/15$.}
    \label{fig:side_by_side}
\end{figure}

Consider an instance with two firms A and B,
each with capacity $c = 0.2$, and suppose applicant scores are uniformly distributed from 0 to 1.
There are infinitely many equilibria for this instance, and we illustrate two of them in Figure \ref{fig:ex1}, under ``Action profiles''.
Specifically, any strategy profile in which all applicants with score $s \geq 0.6$ are interviewed by exactly one firm is an equilibrium. 
This is characterized by a horizontal line at $y = 0.6$, and the corresponding thresholds are $\tau_1 = 0.6$ and $\tau_2 = 1$.
Note that each firm derives a utility of at least 0.6 for each applicant interviewed.
There is no incentive for both firms to interview the same applicant (``double-interviewing''), as the highest utility that a firm can derive from double-interviewing an applicant is 0.5.  
Moreover, there is also no incentive for a firm to deviate to interviewing an applicant with score less than 0.6, since that also results in a lower utility.

Next, if the capacities of the firms are increased to $c = 0.35$, then there is a gain in double-interviewing the applicants whose score is close to 1, compared to single-interviewing an applicant whose score is less than 0.5. 
In this case, equilibria correspond to a horizontal line at $y = 13/30$, corresponding to thresholds $\tau_1 = 13/30$ and $\tau_2 = 13/15$ (see Figure \ref{fig:ex2}).

\begin{figure}[ht]
    \centering
        \includegraphics[width=0.47\linewidth]{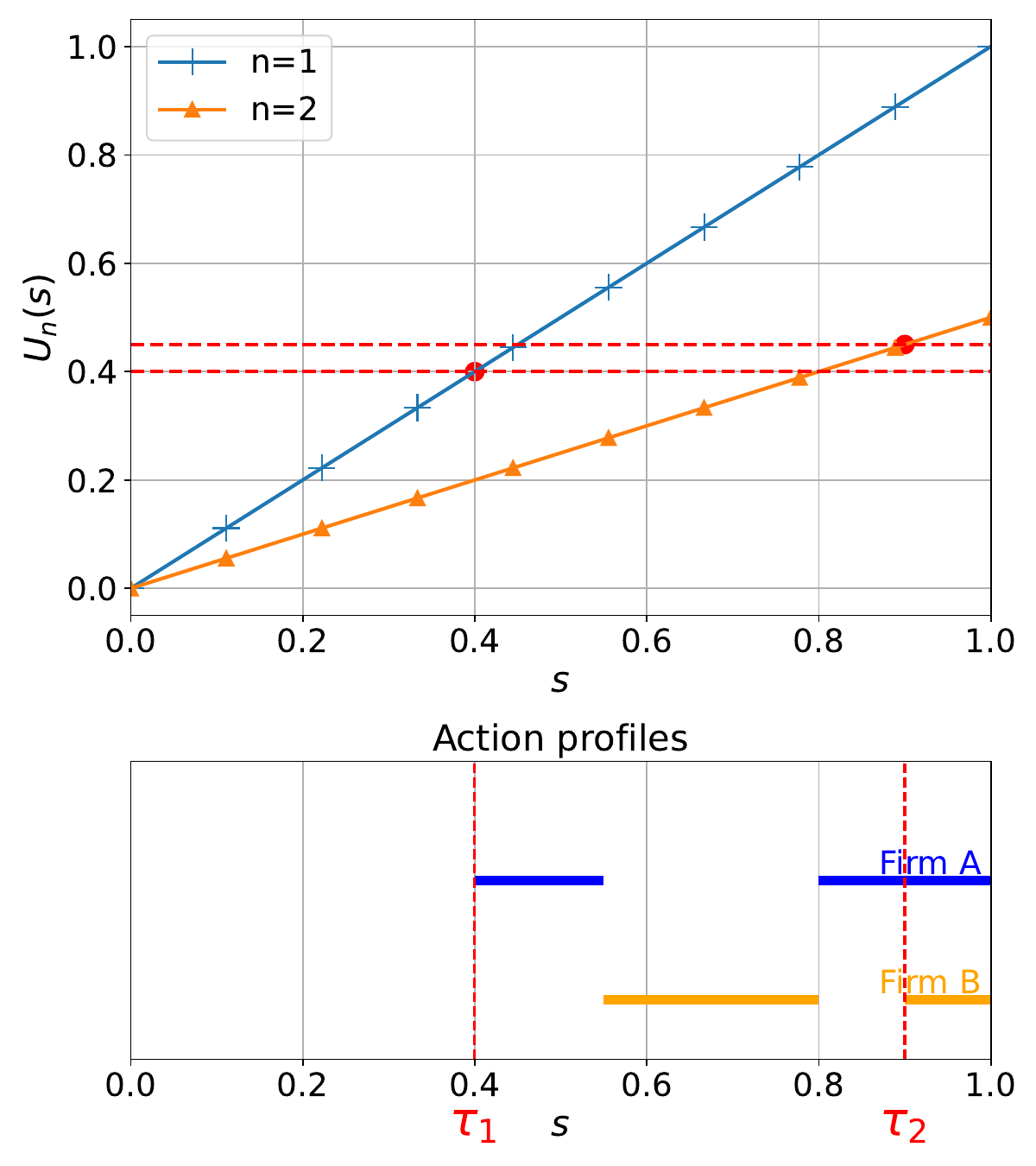}
    \caption{
    Example of a variable-utility equilibrium with $N = 2$, $\theta = \textsc{corr}$, and $c = 0.35$.
    $f_A(s) = 1$ if and only if $s\in[0.4, 0.55]\cup [0.8,1]$, and $f_B(s) = 1$ if and only if $s\in[0.55, 0.8]\cup [0.9,1]$. 
    The dashed horizontal lines highlight the utility at the two thresholds $\tau_1$ and $\tau_2$. Under this strategy profile, $\tau_1 = 0.4$ and $\tau_2 = 0.9$, and the utility at the thresholds are $U_1(\tau_1) = 0.4$ and $U_2(\tau_2) = 0.45$.}
    \label{fig:ex3}
\end{figure}

\paragraph{Variable-utility equilibria.}
Not all equilibria have thresholds described by a horizontal line. 
Using the same instance with the capacities $c = 0.35$ (same as Figure \ref{fig:ex2}), Figure \ref{fig:ex3} provides an example of strategy profiles that constitute an equilibrium, 
but the utility at the thresholds are not equal; $U_1(\tau_1) \neq U_2(\tau_2)$.

In this example, all applicants with score ranging from 0.4 to 0.9 are interviewed by \textit{one} firm. This means that double-interviewing an applicant of score 0.85 yields higher utility than single-interviewing an applicant of score 0.4. 
However, in this example, firm A is interviewing applicants from 0.4 to 0.45, but they are also already interviewing those from 0.8 to 0.9. Therefore, firm A cannot make any deviations to improve their utility.
Therefore, this is an equilibrium even though the utility at the thresholds differ. 

In this example, we have $U_1(\tau_1) < U_2(\tau_2)$, and firm A interviewed applicants right above $\tau_1$ as well as right below $\tau_2$. A variable-utility equilibrium has this property in general: when the utility at the thresholds differ, it must be that all firms who interview right above the lower threshold must also have interview right below the upper threshold.

\subsection{Formal Characterization}
We formalize the above ideas by characterizing the conditions of a Nash equilibrium in the following theorem.

\begin{theorem} \label{thm:equilibrium}
Under Assumption~\ref{assump:utility_functions},
the strategy profile $\bbf = (f_1, \dots, f_N)$ is a Nash equilibrium if and only if 
there exist thresholds $0 = \tau_0 \leq \tau_1 \leq \tau_2 \leq \dots, \leq \tau_N \leq \tau_{N+1}= 1$ that satisfy the following conditions:
\begin{enumerate}
    \item  $\Pr(f_i(S)=1) = c$ for all $i \in [N]$.
    \item  $M(s, \bbf) = m$ for all $s \in [\tau_m, \tau_{m+1})$ for all $m = 0, 1,  \dots, \Mmax(\bbf)$.
    \item  $U_n(\tau_n) \leq U_m(\tau_m)$ for all $n < m \leq \Mmax(\bbf)$.
    
    \item Consider any $n < m \leq \Mmax(\bbf) +1$ where 
    $U_n(\tau_n) < U_m(\tau_m)$, and consider any firm $i$ and score $s \in [\tau_n, \tau_{n+1})$ where $f_i(s) = 1$ and $U_n(s) < U_m(\tau_m)$. 
    For any $s' \in [\tau_{m-1}, \tau_m)$ where $U_m(s') > U_n(s)$, we have that $f_i(s') = 1$.
\end{enumerate}
\end{theorem}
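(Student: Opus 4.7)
The plan is to reduce each firm's optimization to a marginal ``swap'' analysis. Given the other firms' strategies $f_{-i}$, firm $i$'s utility can be rewritten as $u(f_i, f_{-i}) = \int_0^1 f_i(s)\, U_{M_{-i}(s)+1}(s)\, \varphi(s)\, ds$ with $M_{-i}(s) = M(s,\mathbf{f}) - f_i(s)$, since the integrand vanishes wherever $f_i(s)=0$. Thus firm $i$'s best response is a water-filling rule: select the $c$-mass of scores with largest marginal value $U_{M_{-i}(s)+1}(s)$. Consequently $\mathbf{f}$ is a Nash equilibrium if and only if, for every firm $i$ and every pair $(s, s')$ with $f_i(s)=1$ and $f_i(s')=0$, the \emph{no-swap inequality} $U_{M(s,\mathbf{f})}(s) \ge U_{M(s',\mathbf{f})+1}(s')$ holds (otherwise a positive-mass swap near $(s,s')$ strictly improves firm $i$'s utility).

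For the forward direction, Condition 1 holds because if $\Pr(f_i(S)=1) < c$, adding an $\epsilon$-mass interview at any $s>0$ with $f_i(s)=0$ strictly increases utility by $U_{M(s)+1}(s)\epsilon > 0$. For Condition 2, I show $M(\cdot,\mathbf{f})$ is non-decreasing: if $s_1 < s_2$ with $M(s_1) > M(s_2)$, pigeonhole gives a firm $i$ interviewing $s_1$ but not $s_2$, and $U_{M(s_2)+1}(s_2) \ge U_{M(s_1)}(s_2) > U_{M(s_1)}(s_1)$ by monotonicity of $U_n$ in both arguments, so the swap is strictly profitable. The finite-union-of-intervals assumption then makes $M(\cdot,\mathbf{f})$ a non-decreasing integer step function with thresholds $\tau_m := \inf\{s : M(s,\mathbf{f}) \ge m\}$. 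For Condition 3 with $n < m$, let $A$, $B$ be the firms interviewing $[\tau_m, \tau_{m+1})$ and $[\tau_{n-1}, \tau_n)$, so $|A|=m$, $|B|=n-1$, and $|A\cap B| \le n-1 < m$. Since the count $m$ must hold throughout $[\tau_m, \tau_{m+1})$, some firm in $A \setminus B$ must interview arbitrarily close to $\tau_m$ from the right (otherwise the initial sub-interval would be covered only by $A\cap B$, giving count $<m$). Applying the no-swap inequality to such a firm with $s \to \tau_m^+$ and $s' \to \tau_n^-$, continuity of $U_n$ yields $U_m(\tau_m) \ge U_n(\tau_n)$. Condition 4 is obtained by the same swap argument: if the inequality is strict and some firm $i$ interviews $s \in [\tau_n, \tau_{n+1})$ with $U_n(s) < U_m(\tau_m)$ but does not interview some $s' \in [\tau_{m-1}, \tau_m)$ with $U_m(s') > U_n(s)$, the swap from $s$ to $s'$ strictly improves utility, contradiction.

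For the reverse direction, fix $(\tau_m)$ and $\mathbf{f}$ satisfying Conditions 1--4. I verify the no-swap inequality $U_m(s) \ge U_{n+1}(s')$ for every firm $i$ and every pair $(s, s')$ with $f_i(s)=1, f_i(s')=0$, where $s \in [\tau_m, \tau_{m+1})$ and $s' \in [\tau_n, \tau_{n+1})$. If $m = n$, the inequality reduces via monotonicity of $U_m$ to a local water-filling check within one region and follows from the structure of $f_i$. If $m > n$, Condition 3 gives $U_m(\tau_m) \ge U_{n+1}(\tau_{n+1})$ (the subcase $n+1=m$ is handled by monotonicity of $U_m$ together with $s > s'$), which combined with $U_m(s) \ge U_m(\tau_m)$ and $U_{n+1}(s') < U_{n+1}(\tau_{n+1})$ yields the inequality. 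If $m < n$, Condition 4 applied with the roles of $(n,m)$ swapped either directly forces $f_i(s')=1$ (contradicting the hypothesis) or yields $U_m(\tau_m) \ge U_{n+1}(\tau_{n+1})$, closing the chain $U_m(s) \ge U_m(\tau_m) \ge U_{n+1}(\tau_{n+1}) > U_{n+1}(s')$. Combined with Condition 1 pinning down capacity utilization, this verifies NE.

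I expect the main obstacle to be the forward direction's argument for Condition 3 at the threshold itself: establishing, via the pigeonhole-plus-covering argument sketched above, that some firm in $A\setminus B$ interviews arbitrarily close to $\tau_m$, so that the limit of the no-swap inequality yields the sharp bound $U_m(\tau_m) \ge U_n(\tau_n)$ rather than some weaker $U_m(a) \ge U_n(\tau_n)$ with $a > \tau_m$. The reverse direction's case $m < n$ in variable-utility equilibria is also delicate, since Condition 3 and monotonicity alone are insufficient and Condition 4 is genuinely needed to exclude profitable swaps from lower-competition to higher-competition regions.
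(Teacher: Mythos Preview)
Your approach is essentially the paper's: reduce best-responding to a pointwise ``no-swap'' inequality $U_{M(s,\mathbf{f})}(s) \ge U_{M(s',\mathbf{f})+1}(s')$ whenever $f_i(s)=1$ and $f_i(s')=0$, and then derive each condition from an appropriate swap. The forward direction is sound, including the pigeonhole-plus-limit argument for Condition~3 (the finite-union-of-intervals hypothesis gives you a constant set of $m$ firms on a right-neighborhood of $\tau_m$ and of $n-1$ firms on a left-neighborhood of $\tau_n$, which is exactly what your argument needs).

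There is a genuine gap in your reverse direction: the case $m=n$. You write that it ``reduces via monotonicity of $U_m$ to a local water-filling check within one region and follows from the structure of $f_i$,'' but Conditions 1--3 impose no structure on $f_i$ within a single region $[\tau_m,\tau_{m+1})$. Concretely, take the paper's variable-utility example (Figure~\ref{fig:ex3}): firm~A has $f_A(0.4)=1$ and $f_A(0.7)=0$, both in $[\tau_1,\tau_2)=[0.4,0.9)$, and you must rule out the swap $U_1(0.4)\ge U_2(0.7)$. If instead firm~A did not cover $[0.8,0.9)$, the swap to $s'=0.85$ would be strictly profitable, so something must exclude that configuration, and it is precisely Condition~4 with $(n,m)=(m,m+1)$. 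Your ``local water-filling'' phrase does not invoke it.

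The fix is immediate and already contained in your own $m<n$ case: that argument applies Condition~4 with the theorem's indices $(m,\,n+1)$ and only requires $m<n+1$, i.e.\ $m\le n$, so it covers $m=n$ verbatim. The paper organizes this more cleanly by splitting on whether $U_m(\tau_m)\ge U_{n+1}(\tau_{n+1})$: if yes, chain $U_m(s)\ge U_m(\tau_m)\ge U_{n+1}(\tau_{n+1})>U_{n+1}(s')$; if no, then by Condition~3 necessarily $m<n+1$, and Condition~4 forces $f_i(s')=1$ whenever $U_{n+1}(s')>U_m(s)$. This two-case split handles all of $m<n$, $m=n$, and $m>n$ at once and avoids the separate subcase $n+1=m$ you single out.
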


The first condition states that every firm will use their entire capacity.
The second condition states that exactly $m$ firms will interview applicants between score $\tau_m$ and $\tau_{m+1}$.
The third condition states that the utilities at the thresholds are non-decreasing.
The last condition states additional conditions that need to be satisfied if the utilities at the thresholds are strictly decreasing. 
Specifically, if $U_n(\tau_n) < U_m(\tau_m)$, then it must be that any firm who interviews applicants slightly above $\tau_n$ must also interview those slightly below $\tau_m$. 
We now formally define and differentiate between equal-utility and variable-utility equilibria.

\begin{definition}
\label{def:types_of_NE}
A strategy profile $\bbf = (f_1, f_2, \cdots, f_N)$ is a \textit{equal-utility Nash equilibrium} if it is a Nash equilibrium where the thresholds defined in \cref{thm:equilibrium} satisfy $U_n(\tau_n) = U_m(\tau_m)$ for all $n,m \in[\Mmax(\bbf)]$.
Any other Nash equilibrium is a \textit{variable-utility Nash equilibrium}.
\end{definition}  

Lastly, we show that a equal-utility Nash equilibrium always exists.

\begin{proposition} \label{prop:ne_equal_existence}
For any instance $\cI = (N, c, \cD, \theta)$ and utility function $U_n(s)$ that satisfies Assumption \ref{assump:utility_functions}, there exists a strategy profile $\bbf = (f_1, f_2, \cdots, f_N)$ that is an equal-utility Nash equilibrium.
\end{proposition}

\subsection{Nash Equilibrium Properties}

We prove properties regarding the number of interviews received by each applicant in the NE solution, under both $\theta = \textsc{corr}$ and $\indep$.
These properties can be visualized in Figure~\ref{fig:utility_lines_large_n}.

\begin{figure}[ht]
    \centering
    \subfigure[$\theta = \textsc{corr}$]{
        \includegraphics[width=0.47\linewidth]{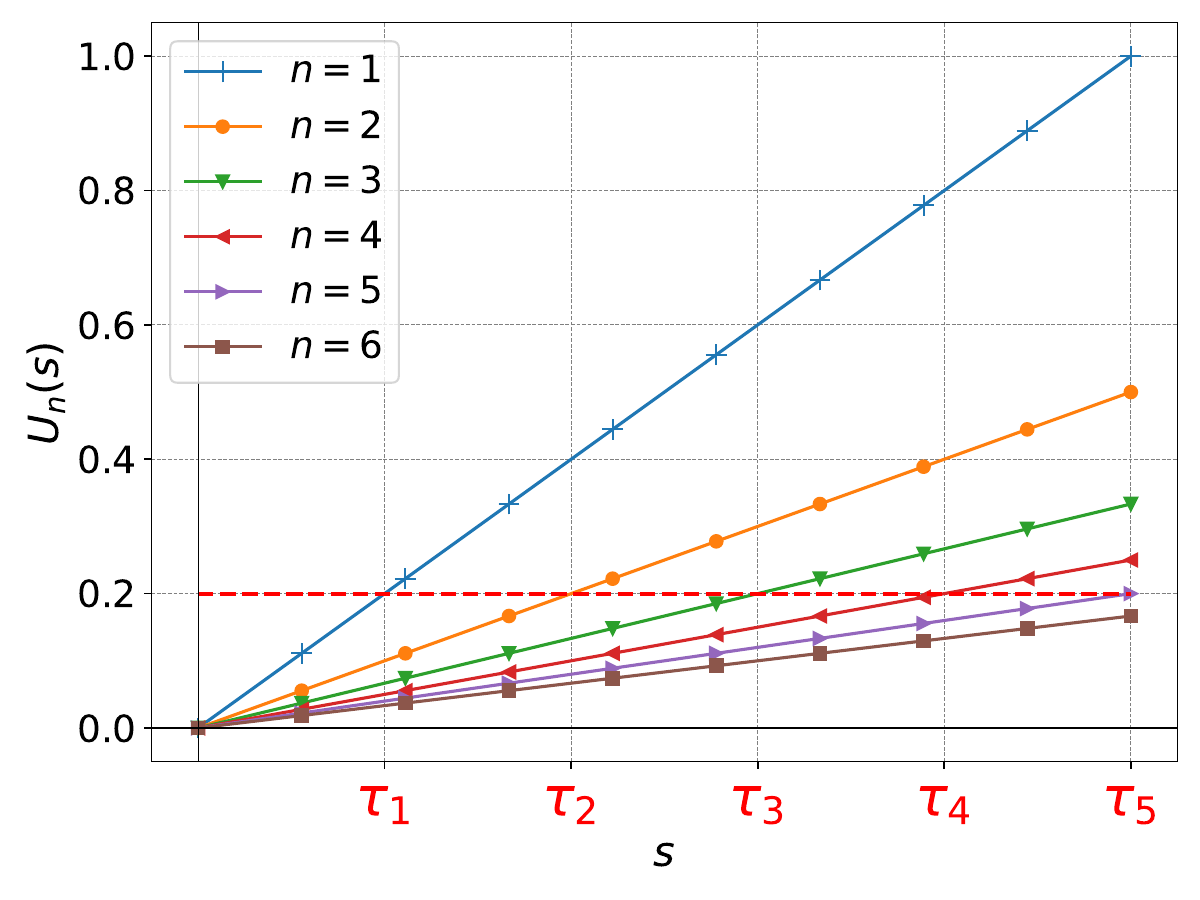}
        \label{fig:utility_correlated_limit}
    }
    \subfigure[$\theta = \indep$]{
        \includegraphics[width=0.47\linewidth]{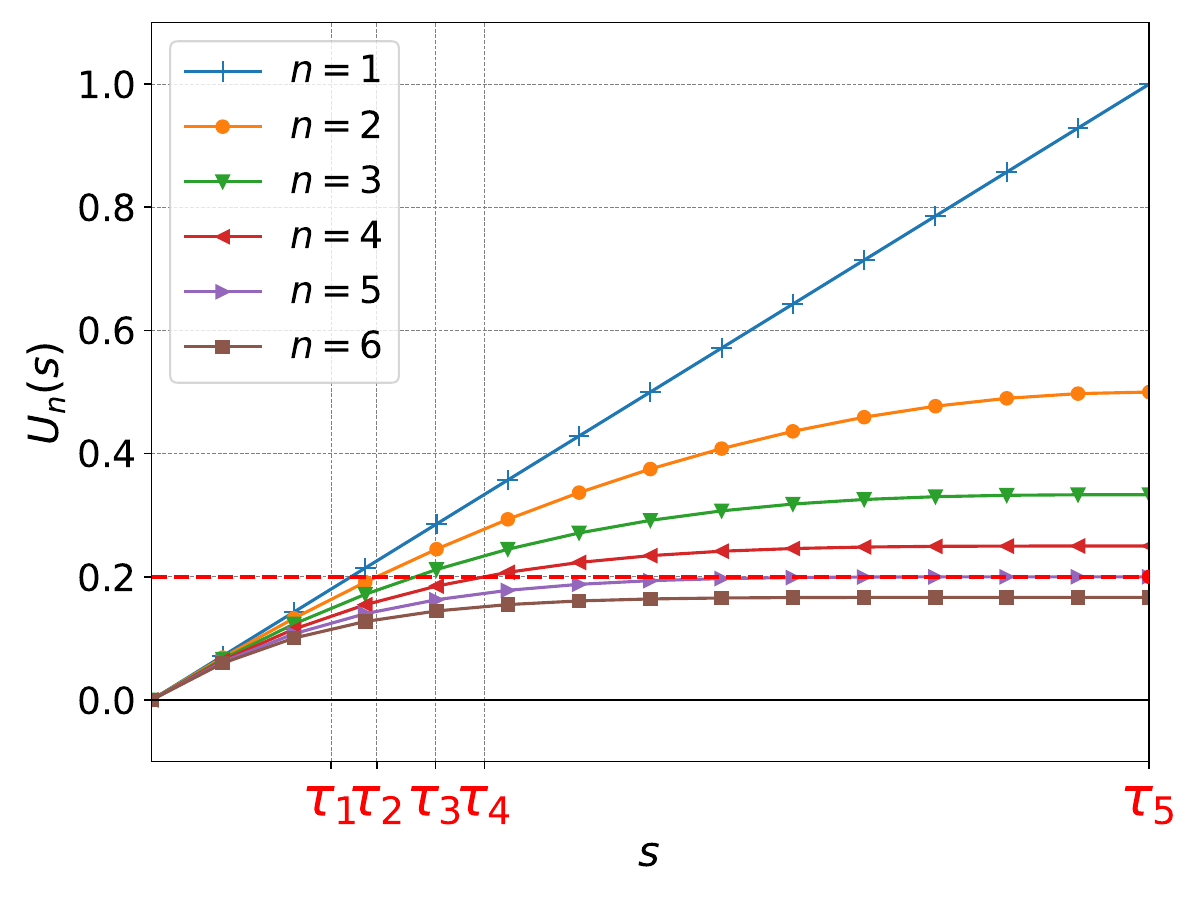}
        \label{fig:utility_indep_limit}
    }
    \caption{Comparison of utility curves  \( U_n(s) \) and the thresholds for \( n \in \{1, 2, \dots, 6\} \) under $\theta = \{\textsc{corr}, \indep\}$. The dashed horizontal line $y=0.2$ characterizes equal-utility equilibria. When $\theta = \textsc{corr}$, the thresholds are evenly spaced across the applicant score \( s \). In contrast, when $\theta = \indep$, most thresholds are concentrated near \( s = 0 \); hence a large portion of applicants who receive interviews will be interviewed by the same number of firms.}
    \label{fig:utility_lines_large_n}
\end{figure}

Notice that under the $\theta = \textsc{corr}$ scheme displayed in Figure \ref{fig:utility_correlated_limit}, the thresholds $\tau_1$ to $\tau_5$ are evenly spaced out. 
This means that the number of interviews received by each applicant increases with their score $s$ in regular intervals.
In the example in the figure, the set of scores from 0 to 1 is equally divided into five sets, where the first set represents scores that receive 0 interviews, the second set of scores receive 1 interview, and so on.

Contrastingly, under $\theta = \indep$ shown in Figure \ref{fig:utility_indep_limit}, the thresholds $\tau_1$ to $\tau_4$ are very close together, close to 0, while $\tau_5$ is at the maximum value. 
This implies that there is a large range of scores, $[0.331, 1]$, where every applicant in that range receive four interviews. 
It may seem unintuitive that an applicant with score $s = 0.5$ receives the same number of interviews as an applicant with score $s = 1$, but this is due to the  independent realizations of the offer decisions, 
Specifically, if four firms interview an applicant with score $s = 0.5$, on average only two of the firms will give them an offer, and therefore those two firms are only competing with each other to hire that applicant. 
However, if four firms interview an applicant with score $s = 1$, they will all give an offer, and therefore a firm competes with three other firms for that applicant.
In general, under the $\theta = \indep$ scheme, a large portion of applicants who receive interviews will be interviewed by the same number of firms.

We formalize these properties described in the following two propositions.
First, under $\theta = \textsc{corr}$, we show that under an equal-utility NE, every threshold $\tau_m$ will be an exact multiple of $\tau_1$. 

\begin{proposition}
    \label{prop:NE_shared}
    Fix a distribution $\cD$ and capacity $c \in (0, 1]$, and let $\cI(N) = (N, c, \cD, \textsc{corr})$ be the instance with $N$ firms and the correlated decision rule. Let $\mathbf{f_N}$ be a Nash equilibrium for instance $\cI(N)$. Then $\tau_m \geq m \tau_1$ for all $m\in [\Mmax(\mathbf{f_N})]$. If $\mathbf{f_N}$ is an equal-utility Nash equilibrium, then $\tau_m = m \tau_1$ for all $i\in [\Mmax(\mathbf{f_N})]$.
\end{proposition}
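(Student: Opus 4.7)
The proposition is essentially a direct consequence of the third condition of Theorem~\ref{thm:equilibrium} combined with the explicit form of the utility function under the correlated decision scheme. Recall that under $\theta = \corr$, we have $U_n(s) = s/n$. This is the observation that drives the entire argument: the inequalities on utilities at the thresholds translate directly into inequalities relating $\tau_m$ to $\tau_1$ after multiplying by the appropriate factor.

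First I would fix an equilibrium $\bbf_N$ and the associated thresholds $0 = \tau_0 \leq \tau_1 \leq \cdots \leq \tau_{\Mmax(\bbf_N)}$ guaranteed by Theorem~\ref{thm:equilibrium}. Applying the theorem's third condition with $n = 1$ and any $m \in [\Mmax(\bbf_N)]$ yields
\begin{equation*}
    U_1(\tau_1) \;\leq\; U_m(\tau_m).
\end{equation*}
Substituting $U_n(s) = s/n$ gives $\tau_1 \leq \tau_m/m$, i.e.\ $\tau_m \geq m\,\tau_1$. This establishes the first claim.

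For the equal-utility case, I would simply invoke Definition~\ref{def:types_of_NE}: by assumption $U_1(\tau_1) = U_m(\tau_m)$ for every $m \in [\Mmax(\bbf_N)]$, so the inequality above becomes an equality $\tau_1 = \tau_m/m$, which rearranges to $\tau_m = m\,\tau_1$.

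\textbf{Anticipated obstacles.} Essentially none of substance. The only subtlety worth flagging is the edge case $\tau_1 = 0$, in which both inequalities hold trivially; a brief sentence noting that in the equal-utility case this forces $\tau_m = 0$ for all $m \leq \Mmax(\bbf_N)$, which is consistent with condition~1 of Theorem~\ref{thm:equilibrium} only in degenerate parameter regimes, suffices. Beyond that, the argument is a two-line computation once Theorem~\ref{thm:equilibrium} is in hand, so the main work is just ensuring the range of $m$ in the quantification matches the range over which condition~3 applies (namely $m \leq \Mmax(\bbf_N)$), which is exactly what the proposition asserts.
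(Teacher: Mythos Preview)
Your proposal is correct and follows essentially the same approach as the paper: invoke condition~3 of Theorem~\ref{thm:equilibrium} with $n=1$, substitute the explicit form $U_n(s)=s/n$ under the correlated scheme, and conclude; then upgrade the inequality to equality via Definition~\ref{def:types_of_NE} in the equal-utility case. The only minor nit is that condition~3 is stated for $n<m$, so strictly speaking you should note separately that the case $m=1$ is trivial (the paper does this explicitly), but this is cosmetic.
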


Next, under $\theta = \indep$, we show that as $N \to \infty$, the number of firms that interview an applicant will differ by at most 1. That is, every applicant will be interviewed either $\Mmax(\mathbf{f_N})$ times or $\Mmax(\mathbf{f_N}) - 1$ times.

\begin{proposition}
\label{prop:NE_indep}
Fix a distribution $\cD$ and capacity $c \in (0, 1]$, and let $\cI(N) = (N, c, \cD, \indep)$ be the instance with $N$ firms and the independent decision rule.
Let $\mathbf{f_N}$ be a Nash equilibrium for instance $\cI(N)$.
Then, $\lim_{N \to \infty} \mathbb{P}_{S\sim D}(|M(S; \mathbf{f_N}) - \Mmax(\mathbf{f_N})|>1) = 0$.
\end{proposition}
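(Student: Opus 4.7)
The plan is to show that $K := \Mmax(\bbf_N)$ diverges with $N$ and that this forces the last-but-one threshold $\tau_{K-1}$ to $0$, so almost all applicants fall in $[\tau_{K-1},1]$ and thus receive either $K$ or $K-1$ interviews. The two key ingredients are the ceiling $U_n(s) \leq 1/n$ that is special to $\theta = \indep$ (since $U_n(s) = (1-(1-s)^n)/n$) and the monotonicity at thresholds $U_n(\tau_n) \leq U_m(\tau_m)$ for $n < m$ given by Condition 3 of \cref{thm:equilibrium}. Crucially, Condition 3 holds at \emph{every} NE, not only equal-utility ones, which matches the proposition's generality.

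First I would verify $\Mmax(\bbf_N) \to \infty$. Summing Condition 1 across firms gives a total interview mass of $\int_0^1 M(s;\bbf_N)\varphi(s)\,ds = Nc$, and this integral is at most $\Mmax(\bbf_N)$, so $\Mmax(\bbf_N) \geq Nc \to \infty$; in particular $K \geq 2$ for all large $N$, making $\tau_{K-1}$ well-defined. Next, I would apply Condition 3 to the pair $(K-1,K)$ to obtain $U_{K-1}(\tau_{K-1}) \leq U_K(\tau_K)$, and combine this with $U_K(\tau_K) \leq 1/K$ to get $U_{K-1}(\tau_{K-1}) \leq 1/K$. Inverting this inequality using the explicit form of $U_{K-1}$ yields a bound of the shape $\tau_{K-1} \leq 1 - K^{-1/(K-1)}$, whose right-hand side tends to $0$ because $K^{-1/(K-1)} = \exp(-\log K/(K-1)) \to 1$. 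Then $\tau_{K-1} \to 0$, and by continuity of $F$ with $F(0)=0$, $F(\tau_{K-1}) \to 0$.

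Finally, I would translate this into the target probability. Condition 2 gives $M(s;\bbf_N) = m$ on $[\tau_m,\tau_{m+1})$, and since $M(S) \leq \Mmax$ pointwise, the event $\{|M(S;\bbf_N) - \Mmax(\bbf_N)| > 1\}$ coincides (up to a null set) with $\{S < \tau_{K-1}\}$; hence its probability equals $F(\tau_{K-1})$, which tends to $0$. The argument is quite direct; the only place one has to be a little careful is that the bound $U_K(\tau_K)\leq 1/K$ truly requires the independent form (under $\theta=\corr$ one instead has $U_n(s) = s/n$ which can be close to $1/n$ only when $s$ is near $1$, and indeed \cref{prop:NE_shared} shows the opposite qualitative behavior there). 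So the main ``obstacle,'' such as it is, is simply identifying which equilibrium condition and which structural feature of $U_n$ to exploit, rather than any substantial technical difficulty.
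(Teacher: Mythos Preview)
Your proposal is correct and follows essentially the same route as the paper: establish $\Mmax(\bbf_N)\geq Nc\to\infty$, bound $U_{K-1}(\tau_{K-1})\leq 1/K$ and invert to get $\tau_{K-1}\leq 1-K^{-1/(K-1)}\to 0$, then identify the target event with $\{S<\tau_{K-1}\}$. The only cosmetic difference is that you obtain $U_{K-1}(\tau_{K-1})\leq 1/K$ directly from Condition~3 and the bound $U_K(\tau_K)\leq U_K(1)=1/K$, whereas the paper (somewhat less transparently) invokes Condition~4 for the same inequality; your route is slightly cleaner but the argument is otherwise identical.
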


\section{Social Welfare}
In this section, we evaluate how the social welfare compares under the three different solution concepts: Naive, Nash equilibrium, and Centralized. 
We show that the Price of Naive Selection can grow with $N$ when the capacity is small, while the Price of Anarchy goes to 1 in the same regime.

We first establish that $\SW_{\text{naive}} < \SW_\text{NE} \leq \SW_\text{max}$.
The second inequality holds by definition of the Centralized solution, and we show that the first inequality holds under general conditions.
\begin{theorem}\label{thm:SW1}
Under Assumption~\ref{assump:utility_functions}, $\SW_{\text{NE}} > \SW_\text{naive}$.
\end{theorem}

\subsection{Price of Naive Selection}

We compare the social welfare across the Naive and Nash equilibrium solutions using the Price of Naive Selection (PoNS).
We first analyze the PoNS when the capacity $c$ is at its extremes. 
Keeping $N$ fixed, we show that the PoNS goes to $N$ when $c  \to 0$, and the PoNS goes to 1 when $c \to 1$.

\begin{theorem} \label{thm:SW_PoNS2}
For any distribution $\cD$, number of firms $N$, and decision rule $\theta \in \{\indep, \textsc{corr}\}$,
if $\cI(c) = (N, c, \cD, \theta)$ is the instance parameterized by capacity $c \in (0, 1)$,
$\lim_{c \to 0^+} \PoNS(\cI(c)) = N$ and $\lim_{c \to 1^-} \PoNS(\cI(c)) = 1$.
\end{theorem}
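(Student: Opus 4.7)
The welfare has the form $\SW(\bbf) = \int_0^1 M(s;\bbf)\,U_{M(s;\bbf)}(s)\,\varphi(s)\,ds$, which depends on the strategy profile only through the interview-count profile $M(\cdot;\bbf)$. Under $\theta=\corr$, $mU_m(s) = s$ for $m\geq 1$, while under $\theta=\indep$, $mU_m(s) = 1-(1-s)^m$; both are bounded by $1$ and continuous in $s$ at $s=1$. The plan is to use the equilibrium characterization in \cref{thm:equilibrium} to pin down $M(\cdot;\bbf)$ in the two limiting regimes and then do a direct integral computation.

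For the first limit, $c \to 0^+$, the central claim is that for all sufficiently small $c$, every NE profile $\bbf$ has $\Mmax(\bbf)=1$ (no double-interviewing). Let $\bar\tau(c)$ be defined by $\int_{\bar\tau(c)}^1\varphi(s)\,ds = Nc$. By condition~1 of \cref{thm:equilibrium}, the total interview mass $\int_0^1 M(s;\bbf)\varphi(s)\,ds$ equals $Nc$, and since $M(\cdot;\bbf)\geq 1$ on $[\tau_1,1]$ the capacity identity forces $\tau_1\geq\bar\tau(c)$, hence $U_1(\tau_1)\geq U_1(\bar\tau(c))\to U_1(1)=1$ as $c\to 0^+$. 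On the other hand, $U_m(s)\leq U_m(1)=1/m\leq 1/2$ for every $m\geq 2$ under both decision schemes. Condition~3 demands $U_1(\tau_1)\leq U_m(\tau_m)$ for each $2\leq m\leq \Mmax(\bbf)$, which becomes impossible once $U_1(\bar\tau(c))>1/2$. Therefore $\Mmax(\bbf)=1$, and condition~2 then forces $M(s;\bbf)=\mathbbm{1}[s\geq \bar\tau(c)]$, yielding the closed form $\SW_\text{NE}(\cI(c))=\int_{\bar\tau(c)}^1 s\,\varphi(s)\,ds$ under either scheme. Meanwhile $\SW_\text{naive}(\cI(c))$ is $\int_{s_c}^1 s\,\varphi(s)\,ds$ under $\corr$ and $\int_{s_c}^1 (1-(1-s)^N)\varphi(s)\,ds$ under $\indep$. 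Writing each integral as its total mass ($Nc$ and $c$ respectively) times a weighted average of the integrand and using $\bar\tau(c),s_c\to 1$, continuity at $s=1$ makes both weighted averages tend to $1$, so $\PoNS(\cI(c))\to Nc/c=N$.

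For the second limit, $c\to 1^-$, let $\bbf_{c_n}$ be any sequence of NE at capacities $c_n\to 1^-$. The capacity identity gives $\int_0^1 (N-M(s;\bbf_{c_n}))\varphi(s)\,ds = N(1-c_n)\to 0$, and since $M$ is integer-valued with $M\leq N$, this yields $\Pr_{S\sim\cD}(M(S;\bbf_{c_n})\neq N)\leq N(1-c_n)\to 0$. Using the uniform bound $mU_m(s)\leq 1$, the welfare $\SW(\bbf_{c_n})$ differs from $\int_0^1 NU_N(s)\varphi(s)\,ds$ by at most $2\,\Pr_{S\sim\cD}(M\neq N)\to 0$; the naive welfare $\int_{s_{c_n}}^1 NU_N(s)\varphi(s)\,ds$ converges to the same target as $s_{c_n}\to 0$. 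Taking the infimum over NE preserves this joint limit, so $\PoNS(\cI(c_n))\to 1$. The main obstacle in the overall proof is the $c\to 0$ case, where ruling out double-interviewing requires carefully combining the capacity identity, condition~3 of \cref{thm:equilibrium}, and the universal bound $U_m(s)\leq 1/m$ to resolve the apparent circular dependence of $\tau_1$ on $\Mmax(\bbf)$; once this structural fact is in hand, both limits reduce to elementary integration.
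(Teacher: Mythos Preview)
Your proposal is correct and follows essentially the same approach as the paper: for $c\to 0^+$ you show every equilibrium has $\Mmax=1$ (the paper packages this as Lemma~4.2.1 under the equivalent condition $Nc<\int_{1/2}^1\varphi$, arguing informally that single-interviewing beats double-interviewing, whereas you invoke condition~3 of \cref{thm:equilibrium} together with the capacity bound $\tau_1\ge\bar\tau(c)$); for $c\to 1^-$ both arguments observe that almost all scores receive $N$ interviews. Your treatment is somewhat more careful than the paper's in two places: you make the $c\to 0^+$ limit precise via weighted averages (the paper writes the imprecise ``$\lim_{c\to 0^+} U_n(s)=1/n$''), and for $c\to 1^-$ you give the explicit bound $\Pr(M\neq N)\le N(1-c)$ rather than simply asserting $\tau_1=\cdots=\tau_N=0$ in the limit.
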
 

\cref{thm:SW_PoNS2} establishes that there is substantial inefficiency of firms using the naive strategies when each firm has a small interview capacity.
This is more pronounced with a large number of firms, as the PoNS grows with $N$ when $c \to 0$.
Conversely, as $c \to 1$, the PoNS converges to 1, indicating that strategic behavior offers little advantage when firms can interview almost every applicant. 

This implies that the importance of strategic selection is thus most pronounced when capacities are low and the number of firms is large.
In this regime, firms must allocate their limited capacity carefully.
Naive strategies, which prioritize top scores uniformly, result in excessive competition for a small subset of candidates, leaving many applicants underutilized. Strategic selection diversifies this allocation, significantly improving social welfare.

Next, we consider a regime where the capacity $c$ is fixed, and the number of firms $N$ goes to infinity.
In this case, we derive two results for each of the two hiring schemes.
First, under the independent decision rule, we show that the PoNS goes to $\frac{1}{c}$ when $N \to \infty$.
\begin{theorem}\label{thm:SW_PoNS4}
    For any distribution $\cD$, capacity $c$, if $\cI(N) = (N, c, \cD, \indep)$ is the instance with $N$ firms and the independent decision rule, 
    then $\lim_{N \to \infty} \PoNS(\cI(N)) = \frac{1}{c}$
\end{theorem}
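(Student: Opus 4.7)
The plan is to compute the large-$N$ limits of the numerator and denominator of $\PoNS(\cI(N)) = \SW_\text{NE}(\cI(N))/\SW_\text{naive}(\cI(N))$ separately and then take the ratio. Under $\theta = \indep$, the social welfare of any strategy profile $\bbf$ equals the mass of hired applicants,
\begin{align*}
    \SW(\bbf) = \int_0^1 M(s;\bbf)\, U_{M(s;\bbf)}(s)\, \varphi(s)\, ds = \int_0^1 \bigl(1 - (1-s)^{M(s;\bbf)}\bigr) \varphi(s)\, ds,
\end{align*}
since $m \cdot U_m(s) = 1 - (1-s)^m$ in the independent scheme. In particular $\SW \leq 1$ always. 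For the naive baseline, all $N$ firms interview the top-$c$ mass, so $M(s) = N$ on $[s_c, 1]$ with $\Pr(S \geq s_c) = c$, giving $\SW_\text{naive}(\cI(N)) = \int_{s_c}^1 (1-(1-s)^N) \varphi(s)\, ds$. Since $s_c > 0$ whenever $c < 1$, bounded convergence yields $\SW_\text{naive}(\cI(N)) \to \int_{s_c}^1 \varphi(s)\, ds = c$ (the case $c = 1$ is trivial).

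For the equilibrium side I would show $\SW_\text{NE}(\cI(N)) \to 1$ for any sequence of NE $\bbf_N$. Two ingredients suffice. First, condition 1 of \cref{thm:equilibrium} forces the total interview mass to equal $Nc$, so $\Mmax(\bbf_N) \geq Nc \to \infty$. Second, \cref{prop:NE_indep} says $\Pr_{S \sim \cD}(S \in A_N) \to 1$, where $A_N := \{s : M(s;\bbf_N) \geq \Mmax(\bbf_N)-1\}$. For any fixed $\epsilon > 0$, applicants in $A_N \cap [\epsilon, 1]$ receive at least $\Mmax(\bbf_N)-1$ interviews and hence are hired with probability at least $1-(1-\epsilon)^{\Mmax(\bbf_N)-1} \to 1$. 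Combining,
\begin{align*}
    \SW_\text{NE}(\cI(N)) \geq \bigl(1-(1-\epsilon)^{\Mmax(\bbf_N)-1}\bigr)\bigl(\Pr(S \geq \epsilon) - \Pr(S \notin A_N)\bigr) \longrightarrow 1 - F(\epsilon)
\end{align*}
as $N \to \infty$, where $F$ is the CDF of $\cD$. Letting $\epsilon \to 0^+$ and using continuity of $F$ at $0$ gives $\liminf_N \SW_\text{NE}(\cI(N)) \geq 1$; combined with $\SW_\text{NE} \leq 1$, this gives the limit $1$. Taking the ratio with the naive limit yields $\PoNS(\cI(N)) \to 1/c$.

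The main obstacle I expect is that \cref{prop:NE_indep} is purely qualitative: it provides no rate at which $\Mmax(\bbf_N)$ diverges or $\Pr(S \notin A_N)$ vanishes, and by itself gives no control over applicants whose score is close to $0$ (where the hire probability $1 - (1-s)^{M(s)}$ can stay small even when $M(s)$ is large). The $\epsilon$-cutoff above is designed to decouple these two issues by reducing to a uniform bound on $[\epsilon, 1]$ and then sending $\epsilon \to 0^+$. A subtlety worth flagging is that $\SW_\text{NE}$ is defined as an infimum over the NE set, so the argument must apply to \emph{every} sequence of Nash equilibria---which is precisely the form in which \cref{prop:NE_indep} is stated.
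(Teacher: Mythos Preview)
Your proposal is correct and follows essentially the same approach as the paper: both compute $\SW_{\text{naive}}\to c$ via dominated convergence and $\SW_{\text{NE}}\to 1$ by combining $\Mmax\to\infty$ with the fact (Lemma A.3/Proposition 3.2) that $\tau_{\Mmax-1}\to 0$, so that almost every applicant receives at least $\Mmax-1$ interviews. Your $\epsilon$-cutoff argument and the explicit attention to the infimum over NE are in fact more careful than the paper's rather informal passage ``$U_N(s)\to 1/N$'' in the limit computation, but the underlying mechanism is identical.
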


Next, under the correlated decision rule, the PoNS goes to $\frac{1}{c(2 - c)}$ when $\mathcal{D}$ is the uniform distribution. We generalize the result for any general distribution $\mathcal{D}$ in Lemma \ref{lemma:SW_PONS3}. %

\begin{theorem}\label{thm:SW_PoNS3}
    Let $\cD_0$ be the uniform distribution on $[0, 1]$.
    Let $\cI(N) = (N, c, \cD_0, \textsc{corr})$ be the instance with $N$ firms and the correlated decision rule. 
    Then, $\PoNS(\cI(N))$ increases with $N$, and $\lim_{N \to \infty} \PoNS(\cI(N)) =\frac{1}{c(2 - c)}$. In addition, if $Nc\leq 0.5$, $\PoNS(\cI(N))\geq 1.5$.
\end{theorem}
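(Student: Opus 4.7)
The plan is to exploit the clean form of the uniform--correlated setting. Under $\theta = \corr$ with $\varphi \equiv 1$, an applicant with score $s$ who is interviewed by at least one firm is hired with probability exactly $s$, so for any Nash equilibrium $\bbf$ with smallest positive threshold $\tau_1 = \tau_1(\bbf)$,
\begin{align*}
    \SW(\bbf) \;=\; \int_{\tau_1}^1 s\, ds \;=\; \frac{1 - \tau_1^2}{2}.
\end{align*}
Similarly $\SW_\text{naive} = \int_{1-c}^1 s\,ds = c(2-c)/2$. Hence
\begin{align*}
    \PoNS(\cI(N)) \;=\; \frac{1 - \tau_1^{*}(N)^2}{c(2-c)}, \qquad \tau_1^{*}(N) \;:=\; \sup_{\bbf \in \mathcal{F}_{\text{NE}}} \tau_1(\bbf),
\end{align*}
so computing the limit and monotonicity of $\PoNS$ reduces to analyzing $\tau_1^{*}(N)$.

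To pin down $\tau_1^{*}(N)$, I apply \cref{thm:equilibrium} and \cref{prop:NE_shared}. Summing condition 1 ($\Pr(f_i(S)=1)=c$) over the $N$ firms and using Abel summation on condition 2 gives $\sum_{m=1}^{M} \tau_m = M - Nc$, where $M = \Mmax(\bbf)$. Combined with $\tau_m \geq m\tau_1$ from \cref{prop:NE_shared},
\begin{align*}
    \tau_1 \;\leq\; g(M) \;:=\; \frac{2(M-Nc)}{M(M+1)},
\end{align*}
with equality exactly for equal-utility NE. For an equal-utility NE with this $M$ to be feasible I check two further conditions: $M\tau_1 \leq 1$ (forcing $M \leq 2Nc+1$) and no firm profitably deviates into $[\tau_M, 1]$, which, since every equilibrium interview yields utility $\tau_1$, requires $U_{M+1}(1) = 1/(M+1) \leq \tau_1$, i.e., $M \geq 2Nc$. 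Thus the admissible $M^{*} \in [2Nc, 2Nc+1] \cap \mathbb{Z}$ (when $2Nc \in \mathbb{Z}$, both candidate integers yield $\tau_1 = 1/(2Nc+1)$), and in all cases $\tau_1^{*}(N) = g(M^{*})$.

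For the asymptotic claim, $M^{*} \geq 2Nc$ gives $\tau_1^{*}(N) \leq (Nc+1)/[Nc(2Nc+1)] = O(1/N) \to 0$, so $\SW_\text{NE}(N) \to 1/2$ and $\PoNS(\cI(N)) \to 1/(c(2-c))$. For the monotonicity (equivalent to $\tau_1^{*}$ non-increasing in $N$), I do a case analysis on $\Delta := M^{*}(N+1) - M^{*}(N) \in \{0,1\}$ (the only possibilities when $c<1$, since the window has length $1$ and slides by $2c \leq 2$). The no-jump case is immediate: $\tau_1^{*}(N+1) = \tau_1^{*}(N) - 2c/[M^{*}(M^{*}+1)] < \tau_1^{*}(N)$. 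The jump case reduces, after expanding $g(M+1;N+1) \leq g(M;N)$, to the single inequality $M^{*}(1+c) \geq 2Nc$, which holds because $M^{*} \geq 2Nc$. The degenerate case $c=1$ has $\text{Naive} = \text{NE}$ and $\PoNS \equiv 1$. The main subtlety is ensuring the supremum is taken over \emph{all} NE, not only equal-utility ones; this is exactly what $\tau_m \geq m\tau_1$ from \cref{prop:NE_shared} handles, by yielding $\tau_1 \leq g(M)$ in any NE.

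Finally, when $Nc \leq 1/2$, the admissible window forces $M^{*} = 1$ (with $M^{*}=2$ also valid when $2Nc=1$, but giving the same $\tau_1$), so $\tau_1^{*}(N) = 1-Nc$ and $\SW_\text{NE}(N) = Nc(2-Nc)/2$. Therefore
\begin{align*}
    \PoNS(\cI(N)) \;=\; \frac{N(2-Nc)}{2-c} \;\geq\; \frac{N \cdot 3/2}{2} \;=\; \frac{3N}{4} \;\geq\; \frac{3}{2},
\end{align*}
using $Nc \leq 1/2$, $c \leq 1$, and $N \geq 2$.
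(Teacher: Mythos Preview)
Your approach is more constructive than the paper's. Both start from $\SW(\bbf) = \int_{\tau_1}^1 s\,ds = (1-\tau_1^2)/2$ and reduce $\PoNS$ to analyzing $\tau_1^{*}(N) := \sup_{\text{NE}} \tau_1$. The paper then merely bounds $\tau_1 < 1/\Mmax \to 0$ for the limit and, for monotonicity, asserts ``$\tau_1$ decreases as $N$ increases'' by citing that same lemma (which only proves the limit, not monotonicity). You instead derive the closed form $\tau_1^{*}(N) = g(M^{*}; N)$ from the capacity identity $\sum_{m=1}^{M} \tau_m = M - Nc$ together with $\tau_m \geq m\tau_1$ from \cref{prop:NE_shared}, and prove monotonicity by an honest case analysis on $M^{*}(N+1)-M^{*}(N)$. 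This is a genuine improvement: your route actually closes the monotonicity gap. The $Nc \leq 1/2$ argument is essentially identical in both proofs.

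There is, however, a real gap in your characterization of $M^{*}$ when $c > 1/2$. You claim the admissible $M^{*}$ lies in $[2Nc, 2Nc+1]$, but you have omitted the constraint $\Mmax \leq N$; for $c > 1/2$ one has $\lceil 2Nc \rceil > N$, so in fact $M^{*} = N$. (The equal-utility NE with $\Mmax = N$ still exists even though $\tau_1 < 1/(N+1)$: condition~4 of \cref{thm:equilibrium} with $m = N+1$ is automatically satisfied because all $N$ firms already interview $[\tau_N,1)$.) This breaks two steps of your monotonicity argument: with the uncapped formula $\Delta$ can equal $2$ (e.g.\ $N=2$, $c=0.9$ gives $\lceil 3.6 \rceil = 4$ and $\lceil 5.4 \rceil = 6$), and in the jump case your justification ``$M^{*} \geq 2Nc$'' fails since $M^{*}=N<2Nc$. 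The fix is immediate: for $c > 1/2$, $\tau_1^{*}(N) = g(N;N) = 2(1-c)/(N+1)$, which is manifestly decreasing in $N$ and tends to $0$. The limit claim and the $Nc \leq 1/2$ bound are unaffected by this issue.
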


The results of \cref{thm:SW_PoNS4} and \cref{thm:SW_PoNS3} establish how the PoNS changes for intermediate values of $c$.
Specifically, as $c$ decreases, the PoNS increases at a rate of $O(1/c)$.
Next, comparing the two decision rules, when $N \to \infty$, for any fixed capacity, the PoNS is higher under the independent decision rule compared to the correlated decision rule. 
Hence, independence in interview outcomes exacerbates the inefficiencies in naive strategies and makes strategic selection more important.

We numerically examine the PoNS for finite values of $c$ and  $N$ under a uniform score distribution.
Figure~\ref{fig:PoNS_limit} plots the PoNS as a function of the capacity $c \in (0, 1)$ and number of firms $N \in \{1, \dots, 100\}$, for both $\theta\in\{\mathrm{CORR},\mathrm{INDEP}\}$.
The results show clear patterns consistent with our theoretical limits: for fixed 
$N$, the PoNS increases sharply as capacity decreases, and for fixed $c$, the PoNS grows with the number of firms before leveling off near its asymptotic value.
Therefore, the PoNS remains significant for reasonable values of $N$ and $c$.

\subsection{Price of Anarchy}
Next, we examine the \textit{Price of Anarchy (PoA)}, which measures the inefficiency of strategic and de-centralized decisions, compared to the socially optimal outcome that can be achieved by a central decision-maker.

\begin{theorem}
\label{thm:PoA}
    For any distribution $\cD$ and decision rule $\theta \in \{\indep, \textsc{corr}\}$,
    let $\cI(c, N) = (N, c, \cD, \theta)$ be the instance parameterized by the capacity $c$ and number of firms $N$.
    For any $N$, we have $\lim_{c\rightarrow 0} \PoA(\cI(c, N)) =1$, and for any $c \in (0, 1)$, we have $\lim_{N\rightarrow \infty} \PoA(\cI(c, N)) =1$.
\end{theorem}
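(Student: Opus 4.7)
The plan is to bound $\PoA = \SW_\text{max}/\SW_\text{NE}$ by separately controlling numerator and denominator in each limit. Since $\PoA \geq 1$ always holds, it suffices to establish a matching upper bound of $1$. For the denominator, I would use the price-of-naive-selection theorems in three of the four sub-cases, and a direct structural argument via \cref{thm:equilibrium} and \cref{prop:NE_shared} for the remaining sub-case.

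For the limit $c \to 0$ with $N$ fixed, the upper bound $\SW_\text{max} \leq Nc$ is immediate: the total interview mass is $Nc$ and each interview yields at most one hire. For the lower bound, I would first verify that $\SW_\text{naive} \sim c$ for both decision rules as $c \to 0$: for $\theta = \corr$, $\SW_\text{naive}^\corr = \int_{s_c}^1 s\varphi(s)\,ds$ is sandwiched between $s_c \cdot c$ and $c$ with $s_c \to 1$; for $\theta = \indep$, $\SW_\text{naive}^\indep = \int_{s_c}^1(1-(1-s)^N)\varphi(s)\,ds$ is sandwiched between $c(1-(1-s_c)^N)$ and $c$ with $(1-s_c)^N \to 0$. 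Combining with $\PoNS \to N$ from \cref{thm:SW_PoNS2} gives $\SW_\text{NE} \sim Nc$, matching the upper bound on $\SW_\text{max}$, so $\PoA \to 1$.

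For the limit $N \to \infty$ with $c$ fixed, the two rules are handled separately. For $\theta = \indep$, $\SW_\text{max} \leq 1$ is trivial, and dominated convergence gives $\SW_\text{naive}^\indep \to \int_{s_c}^1 \varphi(s)\,ds = c$ since $(1-s)^N \to 0$ pointwise on $(s_c,1]$; combining with $\PoNS \to 1/c$ from \cref{thm:SW_PoNS4} yields $\SW_\text{NE} \to 1$ and hence $\PoA \to 1$. For $\theta = \corr$, $\SW_\text{max}^\corr \leq \int_0^1 s\varphi(s)\,ds$ since each applicant is hired with probability at most $s$, and this bound is attained for $Nc \geq 1$ by covering every applicant once. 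Since \cref{thm:SW_PoNS3} only covers the uniform distribution, the NE lower bound requires a direct argument: Condition~2 of \cref{thm:equilibrium} gives $\SW_\text{NE}^\corr = \int_{\tau_1}^1 s\varphi(s)\,ds$; Condition~1 together with the definition of $\Mmax$ gives the budget inequality $Nc = \int_0^1 M(s,\bbf)\varphi(s)\,ds \leq \Mmax(\bbf)$; and \cref{prop:NE_shared} gives $\tau_1 \leq \tau_{\Mmax(\bbf)}/\Mmax(\bbf) \leq 1/\Mmax(\bbf)$. Chaining these bounds yields $\tau_1 \leq 1/(Nc) \to 0$ uniformly across all Nash equilibria, so $\SW_\text{NE}^\corr \to \int_0^1 s\varphi(s)\,ds$ and $\PoA \to 1$.

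The main obstacle is precisely the correlated case as $N \to \infty$ for a general distribution, where \cref{thm:SW_PoNS3} cannot be invoked. The key insight is that the budget identity forces $\Mmax$ to grow at least linearly in $N$, and \cref{prop:NE_shared} then forces $\tau_1$ to shrink at rate $O(1/(Nc))$; once this is established, the conclusion reduces to continuity of $\int_{\tau_1}^1 s\varphi(s)\,ds$ in $\tau_1$. All other sub-cases reduce to combining the appropriate PoNS theorem with a direct limit of $\SW_\text{naive}$.
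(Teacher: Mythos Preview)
Your proposal is correct and handles all four sub-cases soundly. The route, however, differs from the paper's. In three of the four sub-cases ($c\to 0$ for both $\theta$, and $N\to\infty$ with $\theta=\indep$) you lower-bound $\SW_\text{NE}$ by invoking the PoNS theorems (\cref{thm:SW_PoNS2} and \cref{thm:SW_PoNS4}) as black boxes together with an elementary limit of $\SW_\text{naive}$, and you upper-bound $\SW_\text{max}$ by crude mass arguments ($\SW_\text{max}\le Nc$, respectively $\SW_\text{max}\le 1$). The paper instead computes $\SW_\text{NE}$ and $\SW_\text{max}$ directly in each case via dedicated structural lemmas: \cref{lemma:4.2.1} (showing $\Mmax=1$ when $Nc$ is small), \cref{lemma:3.2.2} (showing $\tau_{\Mmax-1}\to 0$ under $\indep$), and two lemmas characterizing $\SW_\text{max}$ for each decision rule. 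For the fourth sub-case ($N\to\infty$, $\theta=\corr$) the two arguments essentially coincide: your chain $Nc\le\Mmax$ (budget identity) together with $\tau_1\le 1/\Mmax$ (\cref{prop:NE_shared}) is exactly the content of the paper's \cref{lemma:4.4.1}, and your upper bound $\SW_\text{max}^\corr\le\E[S]$ is the paper's \cref{lemma:sw_max}. Your approach is more economical in that it reuses the PoNS machinery rather than re-deriving the NE structure; the paper's approach is more self-contained and yields sharper intermediate statements (exact expressions for $\SW_\text{max}$ rather than just upper bounds).
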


\cref{thm:PoA} shows that, in the regime where $N \to \infty$ or when $c \to 0$, the social welfare when firms are strategic converges to the highest possible social welfare in the system. 
In other words, there is little inefficiency stemming from individual firms acting in their own self-interest.

\begin{figure}[H]
    \centering
    \subfigure[PoNS vs Capacity for Different N]{
        \includegraphics[width=0.47\linewidth]{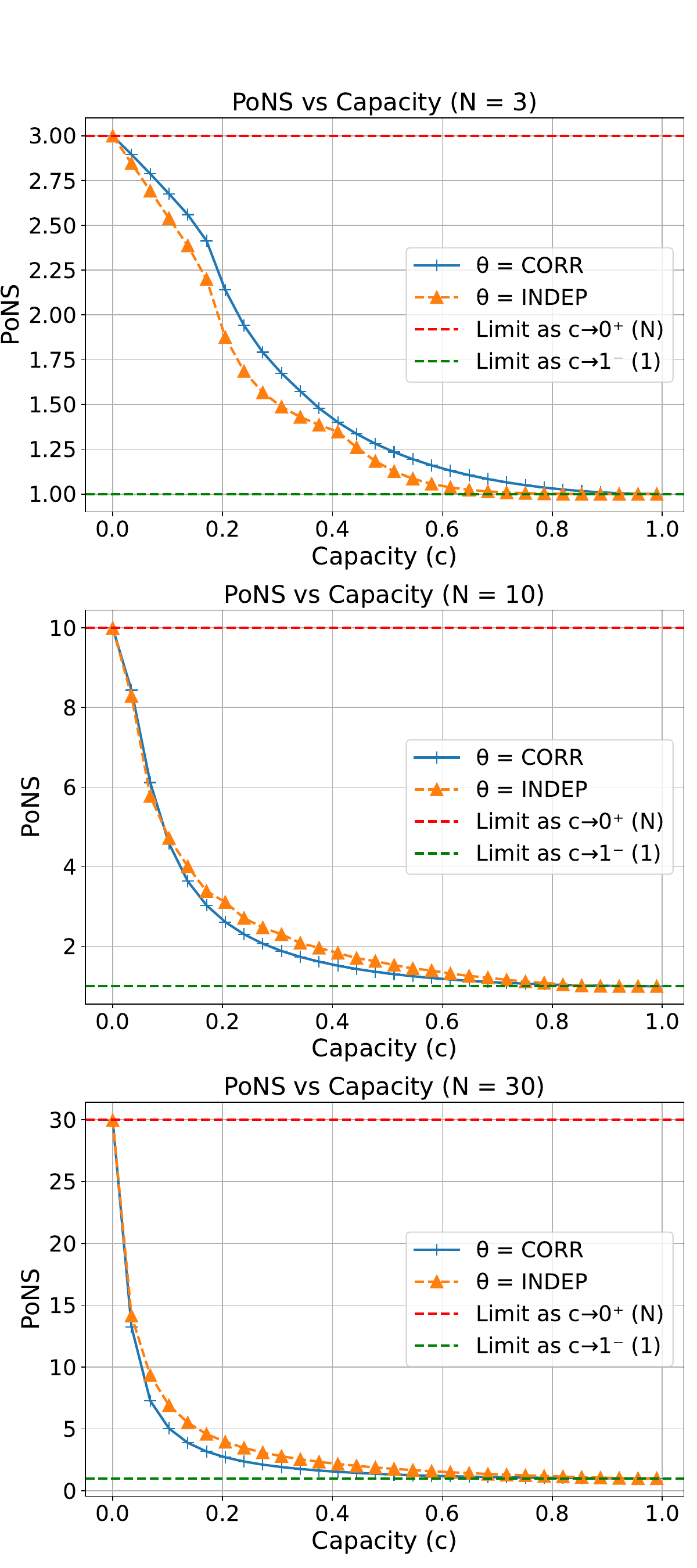}
        \label{fig:simu_pons_c}
    }
    \subfigure[PoNS vs N for Different Capacity]{
        \includegraphics[width=0.47\linewidth]{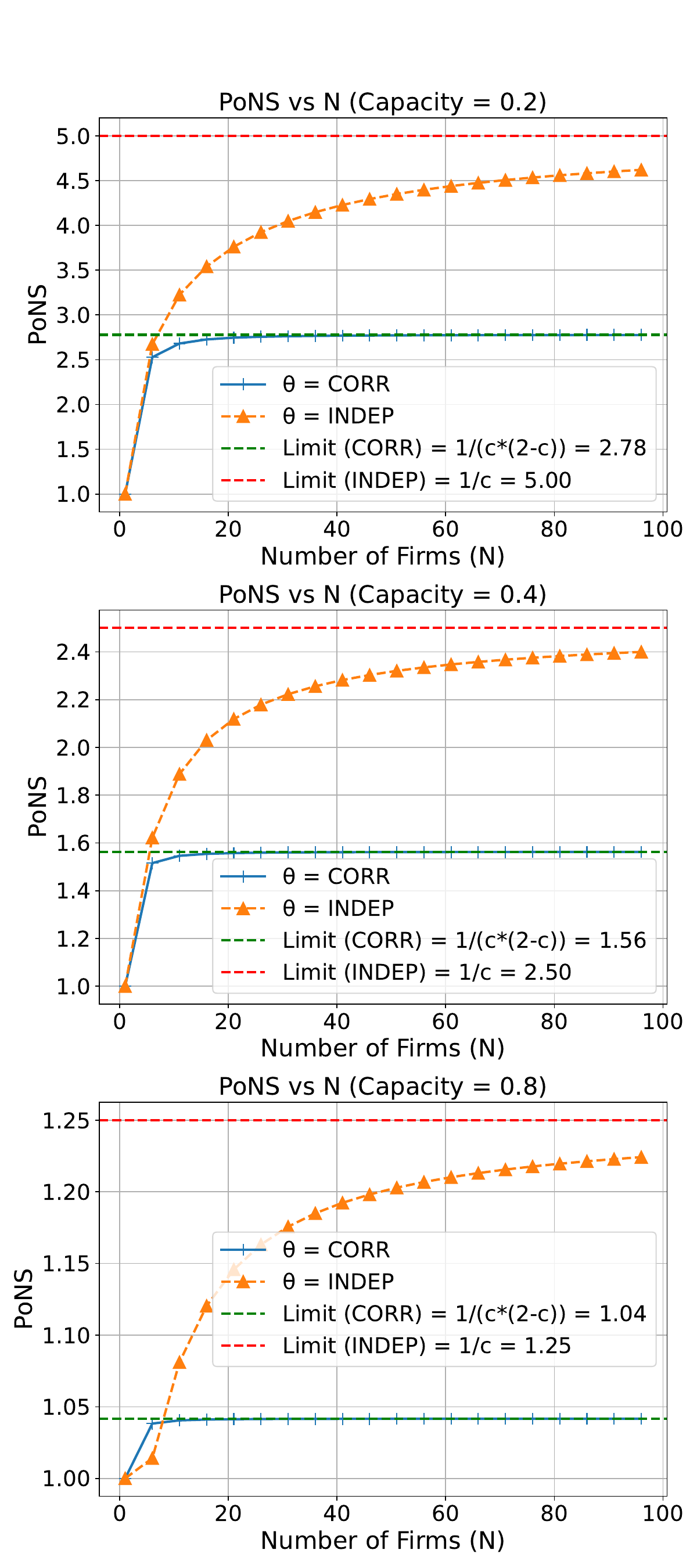}
        \label{fig:simu_pons_n}
    }
    \caption{Numerical evaluations of the PoNS when $\cD$ is uniform; the left plots fix $N$ and vary $c$, while the right plots fix $c$ and vary $N$. When $N$ is fixed, the PoNS goes to $N$ when $c\rightarrow0^+$, and goes to 0 when $c\rightarrow 1^-$ for $\theta \in \{\textsc{corr}, \indep\}$. 
    When $c$ is fixed $N\rightarrow \infty$, the PoNS converges to $\frac{1}{c}$ under $\theta = \textsc{indep}$ and $\frac{1}{c(2-c)}$ under $\theta = \corr$.
    }
    \label{fig:PoNS_limit} 
\end{figure}

\begin{figure}[H]
    \centering
    \subfigure[PoA vs Capacity for Different N]{
        \includegraphics[width=0.47\linewidth]{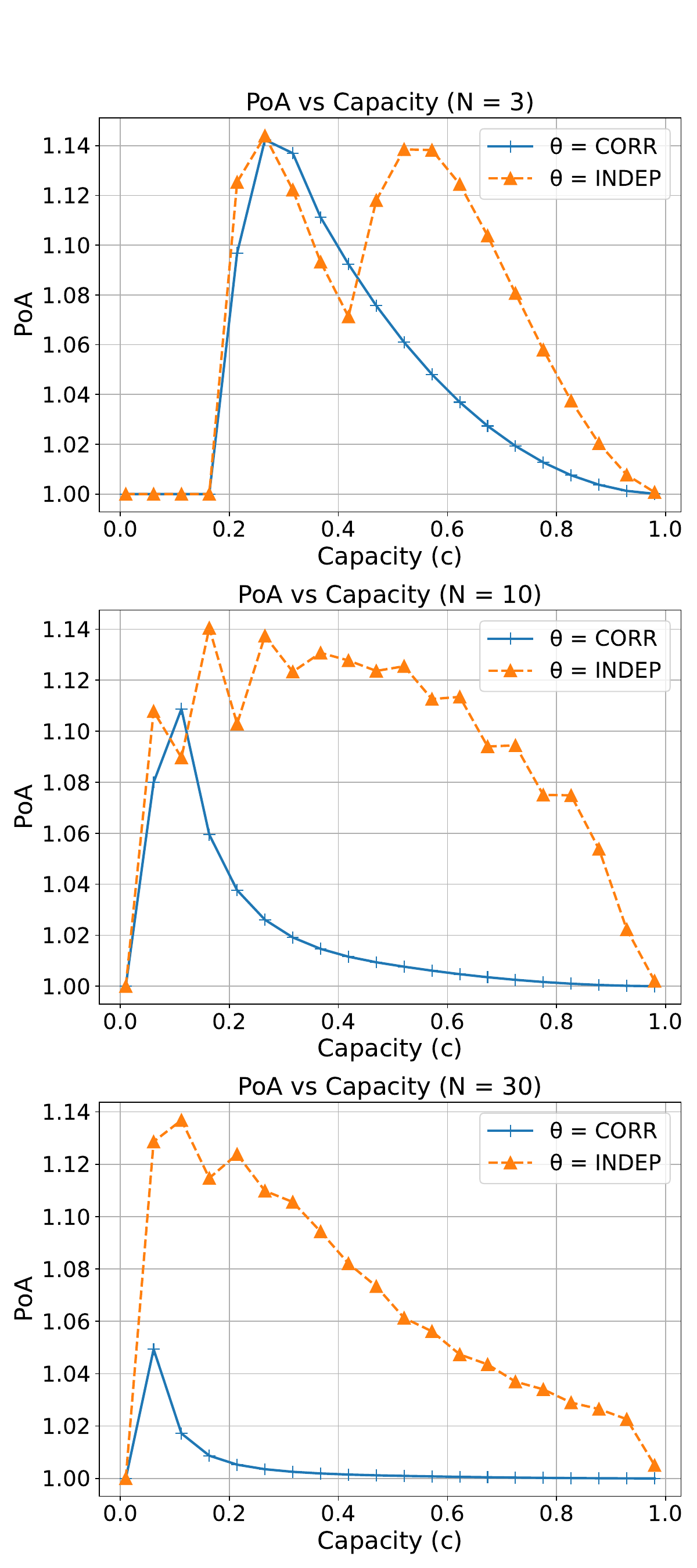}
        \label{fig:simu_poa_c}
    }
    \subfigure[PoA vs N for Different Capacities]{
        \includegraphics[width=0.47\linewidth]{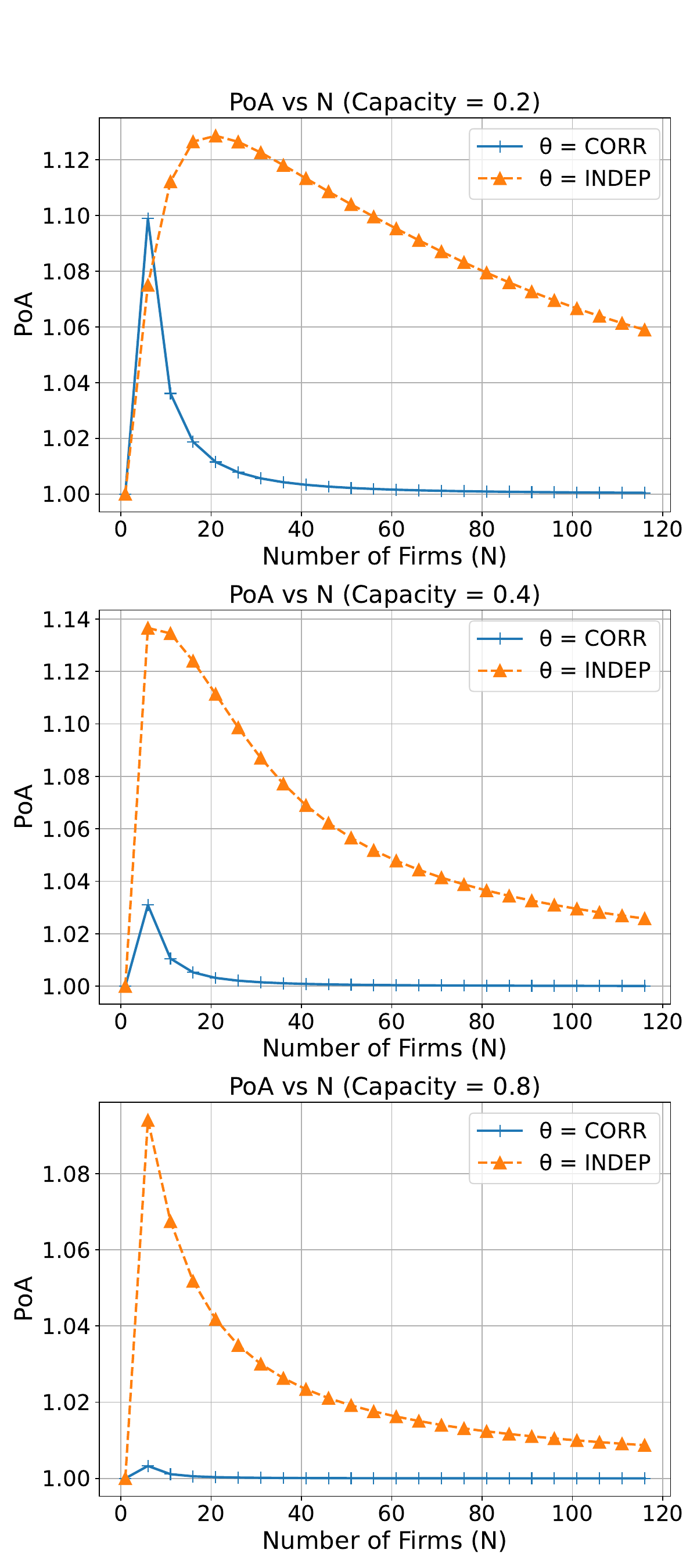}
        \label{fig:simu_poa_n}
    }
    \caption{Numerical evaluations of the PoA when $\cD$ is uniform: the left plots fix $N$ and vary $c$ while the right plots fix $c$ and vary $N$. When $N$ is fixed, the PoA goes to 1 when $c\rightarrow0^+$ for $\theta \in \{\textsc{corr}, \indep\}$. When $c$ is fixed, the PoA goes to $1$ when $N\rightarrow \infty$ for $\theta \in \{\textsc{corr}, \indep\}$.}
    \label{fig:PoA_limit} 
\end{figure}

As \cref{thm:PoA} only considers limiting values of the parameters, in Figure \ref{fig:PoA_limit}, we numerically evaluate the PoA for moderate values of $c$ and $N$ under a uniform score distribution. 
We see that the PoA is consistently very low; below 1.15 in all tested regimes. 
The non-smoothness of the curves are due to the fact that the social welfare of the Nash equilibrium solution increases continuously as the capacity increases but exhibits kinks at certain values of $c$. Overall, the numerical results demonstrate that the PoA remains low across finite parameter values.

\section{Equilibrium Convergence} \label{sec:eq_convergence}
The previous section establishes that the social welfare is substantially higher under the Nash equilibrium than under the Naive solution.
However, the relevance of this finding depends on whether firms can reasonably arrive at an NE.
To investigate this, we study whether best response dynamics converge to an NE.

Best response dynamics is a process in which firms iteratively choose their strategies to their current best responses, given the strategies of the other firms.
If this iterative process converges to a point where no firm can improve its payoff by a unilateral deviation, then this corresponds to an NE. However, best response dynamics, in general, is not guaranteed to converge.

We show two results regarding best response dynamics. First, we show that in a discretized version of the game, best response dynamics converge to a Nash equilibrium. Second, we show that when capacities are small, we show that ``one-turn'' best response dynamics converge to an NE in the original game, where each firm only makes one move.

First, we define a modified version of the original game where the strategy space is discretized.

\begin{definition}[Discretized Game]
\label{ass:quantization}
We define the discretized game as the same game as in the continuous setting except that the set of interview strategies is restricted to a discretized partition of the applicant space. 
Let $\{I_k\}_{k=1}^K$ be a partition of $[0, 1]$ into $k$ bins such that each $I_k$ corresponds to  a mass of $1/K$ with respect to the score distribution density function $\varphi$: 
\[
\int_{I_k} \varphi(s) \, ds = \frac{1}{K} \quad \text{for all } k = 1, \dots, K.
\]
Then, firm $i$ chooses a subset of bins $S_i \subseteq \{1,\dots,K\}$ such that $|S_i| \leq L := \lfloor cK \rfloor$. 
Then, $\bigcup_{k \in S_i} I_k$ corresponds to the set of applicants interviewed by firm $i$.
\end{definition}

We show that under the discretized game, best response dynamics converge to an NE in a finite number of steps, which is established by showing that this game is a potential game.
We then show that such equilibria approximate a Nash equilibrium of the original game.

\begin{theorem}\label{thm:congestion}
    In the discretized game (Definition \ref{ass:quantization}), best response dynamics terminates in finitely many steps at a Nash equilibrium. For any $\varepsilon > 0$, there exists $K_0 \in \mathbb{N}$ such that the corresponding continuous strategy profile $\bbf$ of the discretized game is an $\varepsilon$-Nash equilibrium in the original continuous game. That is, for all firm $i$ and $K \ge K_0$, $\bbf$ satisfies:
    \begin{align*}
        u_i(f_i, f_{-i}) \ge u_i(f_i', f_{-i}) - \varepsilon,
    \end{align*}
    for all deviations $f_i' \subseteq [0,1]$ with $\int_{f_i'} \varphi(s) ds \le c$.

\end{theorem}

These results show that discretizing the strategy space yields well-behaved best response dynamics: the process always converges, and the resulting equilibria approximate those of the continuous game arbitrarily well.
However, \cref{thm:congestion} does not specify how many iterations this process will take. 
We address this in the next result where we show that a simpler set of best response dynamics where each firm only takes one turn, will converge to an NE, under mild assumptions.

\paragraph{One-turn best response.}
We define the \textit{one-turn best response} dynamics to be the following. Sequentially, each firm takes one turn to decide their action, starting with firm 1 and ending with firm $n$.
On its turn, firm $i$ selects the action that maximizes its payoff, assuming the actions of firms 1 to $i-1$ are fixed, and treating itself as the last firm to act.  In other words, firm $i$ optimizes its action under the assumption that no subsequent firm will adjust its behavior in response to its choice. This process continues until all $n$ firms have taken their turn.

\begin{theorem}
\label{thm:convergence_corr}
Suppose there is a $\delta > 0$ such that $\varphi(s) \geq \delta \; \forall s \in [0,1]$ 
    and $c \leq 0.5 \delta$.
    If $\theta = \textsc{corr}$, one-turn best response dynamics converges to an equal-utility Nash equilibrium.
\end{theorem}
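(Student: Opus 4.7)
I would prove this by induction on $k \in \{1, \dots, N\}$, showing that after firm $k$ executes its one-turn best response, the resulting profile $\mathbf{f}^{(k)}$ is an equal-utility Nash equilibrium for the $k$-firm subinstance. Under $\theta = \corr$ and by \cref{prop:NE_shared}, this amounts to the explicit structure $\tau_m^{(k)} = m \, u_k$ with $M_k(s) = m$ on $[m u_k, (m+1) u_k)$ for $m \leq K_k - 1$ and $M_k(s) = K_k$ on $[K_k u_k, 1]$, where $K_k = \Mmax(\mathbf{f}^{(k)})$, together with the invariant $K_k u_k \leq 1$. Since this would yield, after firm $N$ acts, an equal-utility NE for the original instance, the theorem would follow. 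The base case $k = 1$ is immediate: firm $1$ faces no competition and picks the top-$c$ mass, giving $u_1 = \tau_1^{(1)}$; the assumptions $\varphi \geq \delta$ and $c \leq 0.5\delta$ yield $u_1 \geq 1 - c/\delta \geq 1/2$, so $1 \cdot u_1 \leq 1$.

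\textbf{Inductive step.} Given the $k$-firm equal-utility structure, firm $k+1$'s best response maximizes the linear objective $\int f(s) \, \tfrac{s}{M_k(s)+1}\,\varphi(s)\,ds$ over feasible $f$, so the optimum greedily picks applicants in decreasing order of marginal utility $s/(M_k(s)+1)$. Hence there exists some $u^* < u_k$ such that $f_{k+1}(s) = 1$ iff $s/(M_k(s)+1) \geq u^*$. On each level-$m$ interval $[m u_k, (m+1) u_k)$ this yields the top slice $[(m+1) u^*, (m+1) u_k)$, and on the top interval $[K_k u_k, 1]$ it yields $[(K_k+1) u^*, 1]$ whenever $(K_k+1) u^* < 1$.

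\textbf{Key inequality.} The crux is verifying $u^* \geq K_k u_k / (K_k+1)$, which ensures firm $k+1$ takes only a proper top slice of each level (rather than fully absorbing one) and that the new profile preserves the equal-utility form with $u_{k+1} := u^*$. Writing the capacity constraint and lower-bounding via $\varphi \geq \delta$ gives, in the case $(K_k+1) u^* \leq 1$,
\begin{align*}
c/\delta \;\geq\; \tfrac{K_k(K_k+1)}{2}(u_k - u^*) \,+\, 1 - (K_k+1) u^*.
\end{align*}
Rearranging, the desired bound $u^* \geq K_k u_k/(K_k+1)$ reduces exactly to $K_k u_k \leq 2(1 - c/\delta)$, which holds since $c \leq 0.5 \delta$ gives $2(1 - c/\delta) \geq 1 \geq K_k u_k$ by the inductive invariant. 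The boundary case $(K_k+1) u^* > 1$---firm $k+1$ declining to interview anyone in the top interval---is handled analogously and leads to $K_{k+1} = K_k$ instead of $K_k + 1$. With $u^* \geq K_k u_k/(K_k+1)$, the new profile has $M_{k+1}(s) = m$ on $[m u^*, (m+1) u^*)$, the equal-utility structure is preserved with $u_{k+1} = u^*$, and the invariant $K_{k+1} u_{k+1} \leq 1$ continues to hold.

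\textbf{Closing and main obstacle.} After firm $N$ acts, $\mathbf{f}^{(N)}$ satisfies all four conditions of \cref{thm:equilibrium}: Condition 1 by construction (each one-turn best response uses exactly capacity $c$); Conditions 2 and 3 by the explicit threshold structure; and Condition 4 vacuously since $U_m(\tau_m^{(N)}) = u_N$ for every $m \leq K_N$. Thus $\mathbf{f}^{(N)}$ is an equal-utility Nash equilibrium. The principal obstacle is the key inequality $u^* \geq K_k u_k/(K_k+1)$: this is the sole place where the joint hypothesis $c \leq 0.5\delta$ and $\varphi \geq \delta$ is used essentially. Without a small enough capacity, firm $k+1$'s best response could fully consume an intermediate level $m$, which would produce a jump in $M$ at a single threshold and destroy the arithmetic progression $\tau_m = m u_k$ that the entire induction relies on; a secondary subtlety is the case split on whether $(K_k+1) u^* \lessgtr 1$, but in both subcases the invariants close.
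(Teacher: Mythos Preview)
Your proposal is correct and follows essentially the same approach as the paper: induction on the number of firms, with the inductive step hinging on the inequality $u^* \geq K_k u_k/(K_k+1)$, established by evaluating the capacity lower bound at that boundary value and invoking $c \leq 0.5\delta$ together with $K_k u_k \leq 1$. The paper packages the same computation as an intermediate-value-theorem argument on the mass function $f(t)=\sum_m\int_{(m+1)t}^{\tau_{m+1}}\varphi(s)\,ds$, showing $f\bigl(K_k u_k/(K_k+1)\bigr)\geq \delta(1-K_k u_k/2)>0.5\delta\geq c$, which is exactly your displayed inequality specialized to the boundary.
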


\begin{theorem}\label{thm:convergence_indep}
Fix a distribution $\cD$, number of firms $N$, and $\theta = \indep$. Suppose there is a $\delta > 0$ such that $\varphi(s) \geq \delta \; \forall s \in [0,1]$. There exists a $c_0$ such that if $c \leq c_0$, then one-turn best response dynamics converges to an equal-utility Nash equilibrium. 
\end{theorem}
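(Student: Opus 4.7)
The plan is to mirror the proof of \cref{thm:convergence_corr} by exploiting an analogous level-set description of the one-turn best response in the independent case. The utilities $U_n(s) = (1 - (1-s)^n)/n$ are continuous, strictly increasing in $s$, strictly decreasing in $n$, and well-separated at $s = 1$ into the values $1, 1/2, \ldots, 1/N$. Given the actions of firms $1, \ldots, i-1$, let $m_i(s) \in \{0, 1, \ldots, i-1\}$ be the number of previous firms interviewing score $s$, so firm $i$'s marginal utility at $s$ is $g_i(s) := U_{m_i(s)+1}(s)$. Firm $i$'s best response is to select the top-$c$ mass by $g_i$, which admits a cutoff representation: firm $i$ interviews exactly $\{s : g_i(s) \geq v_i^*\}$ for a unique $v_i^* \in [0, 1]$.

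The core of the proof is an inductive invariant: after firm $i$'s action, there exist thresholds $0 \leq \tau_1^{(i)} \leq \cdots \leq \tau_{m_i}^{(i)} \leq 1$ such that applicants in $[\tau_k^{(i)}, \tau_{k+1}^{(i)})$ are interviewed by exactly $k$ firms, and the marginal utilities at the thresholds are all equal: $U_1(\tau_1^{(i)}) = \cdots = U_{m_i}(\tau_{m_i}^{(i)}) = v_i^*$. The base case $i = 1$ is immediate, with $\tau_1^{(1)} = s_c$ and $v_1^* = s_c$. For the inductive step, firm $i+1$'s action lowers the cutoff to some $v_{i+1}^* \leq v_i^*$, shifting each existing threshold $\tau_k^{(i)}$ down along the curve $U_k$ to $\tau_k^{(i+1)} = U_k^{-1}(v_{i+1}^*)$ and possibly introducing a new threshold $\tau_{m_i+1}^{(i+1)}$ above $\tau_{m_i}^{(i)}$. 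The value $v_{i+1}^*$ is pinned down by a mass-conservation equation, which has a unique solution by monotonicity in $v$ and the hypothesis $\varphi \geq \delta$.

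The constant $c_0$ is chosen so that this level-set structure remains well-posed throughout the $N$-step process. The critical concern is to prevent tier-merging: if $v_{i+1}^*$ drops below $\min_k U_{k+1}(\tau_k^{(i)})$, then the level set of $g_i$ engulfs an entire existing tier $[\tau_k^{(i)}, \tau_{k+1}^{(i)})$, causing two adjacent tiers to collapse into one and breaking the equal-utility invariant. Since the mass shift captured in any tier during one round is at most $c$, the cumulative downward shift of each threshold over the $N$ rounds is bounded by $O(Nc/\delta)$; choosing $c_0$ sufficiently small (depending on $N$, $\delta$, and the minimum separation between adjacent curves $U_n$ in the relevant compact region) rules out tier-merging and keeps all thresholds bounded away from $0$ and from each other. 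After firm $N$'s turn, the inductive invariant yields equal marginal utility across all thresholds, so by \cref{thm:equilibrium} the resulting profile is an equal-utility Nash equilibrium.

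The main obstacle is the nonlinearity of $U_n$ in the independent case. In the correlated setting, $U_n(s) = s/n$ gives the explicit proportional identity $\tau_m = (m/n)\tau_n$ at equal utility, so the level-set mass equation is linear and thresholds shift proportionally under each firm's best response. Here the corresponding identities $(1 - \tau_m)^m = 1 - m v$ are implicit, and the mass equation involves sums of $dU_n^{-1}/dv = 1/(1 - \tau_n)^{n-1}$ that vary across tiers. Tracking how each threshold shifts under firm $i+1$'s action therefore requires an implicit-function argument, with careful uniform control over these derivatives to verify continuity of $v_{i+1}^*$ in the preceding state and to rule out the tier-merging failure mode described above.
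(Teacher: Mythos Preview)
The proposal is correct and follows essentially the same inductive strategy as the paper: maintain the invariant that after each firm's turn the profile is an equal-utility equilibrium with thresholds lying on a common level curve of the $U_k$, show that the next firm's best response lowers the level and preserves the invariant provided $c$ is small enough to prevent tier-merging, and then invoke \cref{thm:equilibrium}. The only difference is bookkeeping: the paper pins down an explicit $c_0$ via a monotonicity lemma for the threshold gaps (bounding them below whenever $\tau_1 \geq 1/(\Mmax+1)$), whereas you propose a more abstract mass-shift and implicit-function argument to the same end.
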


The above results establish that when capacities are small, an NE can be reached even when each firm only takes one turn, and they take the best response assuming they are the last to act.
Hence, 
if each firm is able to compute their best response, it is not a strong assumption to assume that an NE would arise in practice.

\subsection{Computing the Best Response} \label{sec:compute_br}
We have established that a Nash equilibrium can be reached through a sequence of best response dynamics.
However, it is not clear how a firm would compute their best response, since a firm would not have access to the exact strategies of other firms in practice.

Specifically, to compute the utility $U_n(s)$ of interviewing an applicant, the firm needs both the score $s$, and the total number of firms who selected the applicant, $n$.
The firm has access to $s$ through the algorithm, but they do not know $n$ if they do not know the strategies of the other firms.
We consider a setting where a firm can aim to \textit{estimate} the utility $U_n(s)$ using historical data on acceptance decisions.
Concretely, we assume that there are pools of applicants where both $s$ and $n$ is the same for all applicants in the same pool.
Then, for example, if a firm has made job offers to applicants from this pool in the past that were always accepted, then the firm can estimate that $n$ is small.
We calculate how many samples a firm needs to be able to accurately determine which applicants yield higher utility.

\paragraph{Comparing two pools of applicants.}
We consider a simple setting where a firm needs to determine which pool of applicants, out of two, yield higher utility.
Suppose there are two pools of applicants, where every applicant in the same pool has the same score; $s_1$ and $s_2$ respectively.
Consider firm 1 deciding which pool to interview from, and assume that all other firms' strategies are fixed, but unknown to firm 1.
We assume that each firm has to either interview everyone from a pool or no one from a pool, and hence two applicants from the same pool are interviewed by the same set of firms.
Note that the expected utility of interviewing an applicant is equal to the probability that a firm successfully hires them, given that they get an interview.
Let $p_1, p_2 \in (0, 1)$ be the probabilities that firm 1 successfully hires if they interview applicants from pool 1 and 2 respectively.
Firm 1 would like to choose the pool with higher expected utility, but these values are unknown to the firm as they do not know how many other firms are interviewing.

Suppose firm 1 can collect data by interviewing $k$ applicants from each pool, and then selects the pool with a greater number of successful hires.
Then, how large does $k$ need to be for firm 1 to correctly determine which pool yields higher utility (with high probability)?
The answer depends on the exact values of $p_1$ and $p_2$, as well as the probability guarantee.
We provide the answer for a range of these values.
For a given $p_1 < p_2$ and probability $q \in [0, 1]$, we find the smallest number of samples $k$ needed that guarantees that if $X_1 \sim \text{Binom}(k, p_1)$ and $X_2 \sim \text{Binom}(k, p_2)$, we have $\Pr(X_1 < X_2) \geq q$.
We vary $p_1 \in \{0.1, 0.5\}$, $p_2 \in [p_1+0.05, p_1+0.3]$ and $q \in \{0.8, 0.9, 0.95\}$ and compute the corresponding $k$. The results are shown in Figure \ref{fig:computing_br}.

\begin{figure}[H]
    \centering
        \includegraphics[width=1\linewidth]{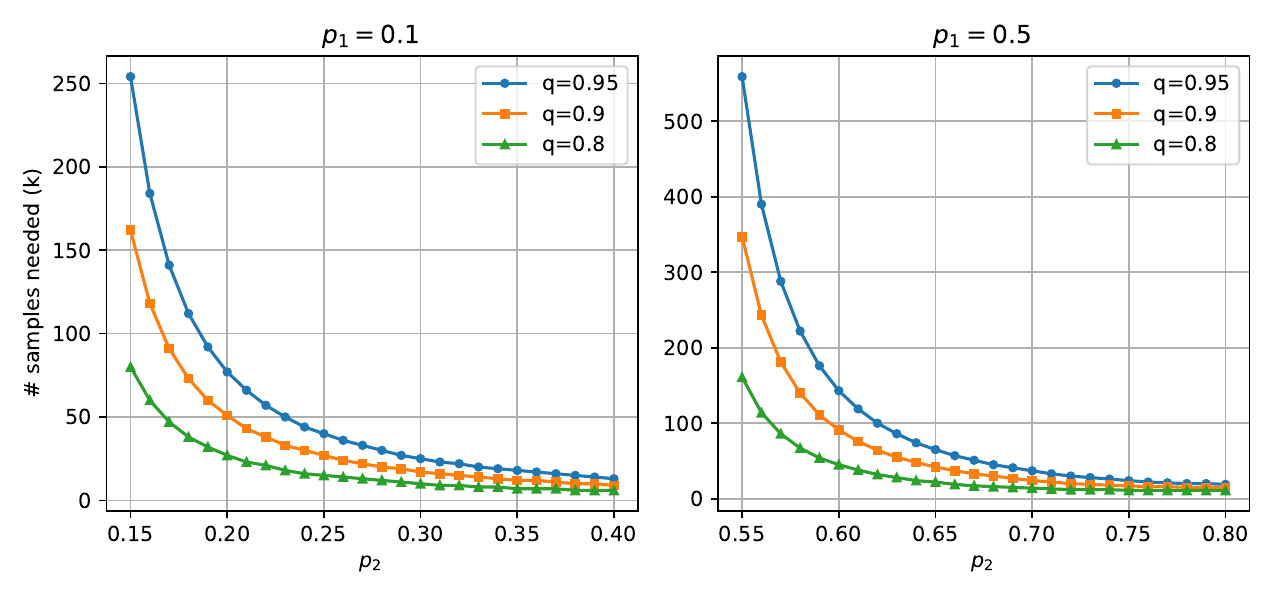}
    \caption{For $p_1 \in \{0.1, 0.5\}$, $p_2 > p_1$, and $q \in \{0.8, 0.9, 0.95\}$, we compute the smallest number of samples $k$ needed so that if $X_1 \sim \text{Binom}(k, p_1)$ and $X_2 \sim \text{Binom}(k, p_2)$, then $\Pr(X_1 < X_2) \geq q$. We vary $p_2$ from $p_1 + 0.05$ to $p_1+0.3$ in increments of 0.01.}
    \label{fig:computing_br}
\end{figure}

When $p_1 = 0.1$ and $p_2 = 0.15$, the number of samples needed to differentiate the two pools correctly with probability 80\%, 90\%, and 95\% are $k=80$, 162, and 254 respectively.
Note that $k$ is the number of samples needed from \textit{each} pool, hence $2k$ samples are needed in total.
This is a \textit{significant} number of samples that a firm needs to collect for the goal of distinguishing between two pools of applicants, which is practically infeasible for small to moderate sized firms.
The number of samples required decreases as the gap between $p_1$ and $p_2$ increases, however, it remains non-trivial. 
Even under a 0.2 gap between $p_1$ and $p_2$, to differentiate between $p_1 = 0.5$ and $p_2 = 0.7$ with 90\% probability, a firm needs 24 samples from each pool.

We note that the above calculations made several simplifications to the actual task of computing the best response.
First, the calculation was only to differentiate between \textit{two} pools of candidates.
If there are multiple pools, the firm will need to gather samples from all pools.
Second, we also assume that all other firms' strategies stay constant, while one firm is ``experimenting'' to collect data.
To converge an NE, all firms need to compute their best response and hence will need to experiment; if multiple firms experiment at the same time, this will interfere with each others' estimation.
Third, we also assumed that there are distinct ``pools'' of candidates with the same score, and there are many applicants in each pool.
In reality, there may not exist discrete ``pools'' of candidates, or there are not enough applicants from each pool needed to correctly estimate utilities.
Lastly, it may be unreasonable to assume that firms can ``experiment'' for the sake of collecting data for better estimation as interviewing is costly.

Given the large number of samples needed, as well as the reasons stated above, 
it seems unrealistic to hope that firms, using their own estimates of utility, can naturally arrive at an NE.
The bottleneck in computing the best response is in estimating the utility $U_n(s)$. 
Even though the algorithm provides a highly informative signal, $s$, about the applicants, the lack of knowledge on $n$ renders the information on $s$ almost useless.
If firms knew $n$, then they would be able to exactly compute the utility $U_n(s)$ and calculate their best response strategy.

\section{Extensions}

\subsection{Flexible Capacity with Fixed Welfare}

So far, we have assumed that all firms have a fixed capacity $c$, and we showed that there is a large gap in social welfare when comparing the naive solution to the Nash equilibrium.
However, in reality, firms may update their capacity to meet their hiring needs.
In this section, we fix the total social welfare, and then evaluate what the capacity needs to be under the different solution concepts to reach that level of social welfare.

We show that under both $\theta \in \{\textsc{corr}, \indep\}$,
when the social welfare is fixed, the total capacity needed to reach the same level of welfare \textit{scales with $N$} under the naive solution, while under the Nash equilibrium, the total capacity stays constant with $N$.

\begin{proposition} \label{prop:fixedW_corr}
Fix the social welfare $W \in (0, N \int_0^1 U_N(s)\varphi(s)ds)$ and decision scheme $\theta \in \{ \textsc{corr}, \indep\}$.
There exists a $c > 0$ such that:
\begin{itemize}
    \item With $N$ firms each with capacity $c$, the naive solution yields a social welfare of $W$.
    \item When $W \leq \int_{0.5}^1 s \varphi(s)ds$, with $N$ firms each with capacity $c/N$, there is a Nash equilibrium that yields a social welfare of $W$.
\end{itemize}
\end{proposition}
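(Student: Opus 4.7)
The plan is to handle the two bullets of the proposition separately: first use the intermediate value theorem to produce the capacity $c$ that makes the naive welfare equal $W$, then construct an explicit single-interview strategy profile at per-firm capacity $c/N$ and verify via \cref{thm:equilibrium} that it is a Nash equilibrium with the same welfare.

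\textbf{Existence of $c$ (first bullet).} I would define the score threshold $s_c \in [0,1]$ by $\int_{s_c}^1 \varphi(s)\,ds = c$; since $\varphi > 0$ is continuous, $s_c$ is continuous and strictly decreasing in $c$, with $s_0 = 1$ and $s_1 = 0$. The naive-welfare function $g(c) := N\int_{s_c}^1 U_N(s)\varphi(s)\,ds$ is then continuous and strictly increasing in $c$, with $g(0) = 0$ and $g(1) = N\int_0^1 U_N(s)\varphi(s)\,ds$. Since $W$ lies in the open interval $(0, g(1))$, the intermediate value theorem yields a unique $c^* \in (0,1)$ with $g(c^*) = W$.

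\textbf{Construction of the NE (second bullet).} Let $s^* := s_{c^*}$, partition $[s^*, 1]$ into $N$ measurable sets $A_1, \dots, A_N$ each of $\varphi$-mass exactly $c^*/N$ (possible because $\varphi > 0$ lets me equisplit by cumulative mass), and define $f_i(s) = \mathbf{1}[s \in A_i]$. Then each applicant with score $\geq s^*$ is interviewed by exactly one firm, so $M(s,\mathbf{f}) \in \{0,1\}$, and the social welfare equals $\int_{s^*}^1 U_1(s)\varphi(s)\,ds = \int_{s^*}^1 s\,\varphi(s)\,ds$. For $\theta = \corr$, $N U_N(s) = s = U_1(s)$, so $g(c^*) = \int_{s^*}^1 s\,\varphi(s)\,ds = W$ and the constructed profile attains welfare $W$ automatically. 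To verify it is a Nash equilibrium, I would apply \cref{thm:equilibrium} with thresholds $\tau_1 = s^*$ and $\tau_m = 1$ for $m \geq 2$, so $\Mmax(\mathbf{f}) = 1$. Conditions 1 and 2 hold by construction, Condition 3 is vacuous, and the only potentially non-vacuous case of Condition 4 is $n=1,\,m=2$: here $U_1(\tau_1) = s^*$ and $U_2(\tau_2) = U_2(1) = 1/2$ in either scheme. The hypothesis $W \leq \int_{0.5}^1 s\,\varphi(s)\,ds$ combined with $W = \int_{s^*}^1 s\,\varphi(s)\,ds$ forces $s^* \geq 1/2$ by monotonicity, so $U_1(\tau_1) \geq U_2(\tau_2)$ and Condition 4 is vacuous, confirming the NE property.

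\textbf{Main obstacle.} The hard part will be the $\theta = \indep$ case, where the naive and single-interview-NE welfares do not align automatically: the naive welfare at capacity $c^*$ is $\int_{s^*}^1 (1 - (1-s)^N)\varphi(s)\,ds$, which strictly exceeds $\int_{s^*}^1 s\,\varphi(s)\,ds$, the maximum welfare of any strategy profile with total interview mass $c^*$. A single shared $c$ thus cannot make both bullets hold with the same single-interview construction, so for the independent scheme the proof either has to use a different capacity in the NE bullet (which I would handle by a separate IVT argument on a different welfare function) or employ a qualitatively different, mixed-threshold NE structure whose welfare matches the naive one exactly; pinning down this construction is the delicate step.
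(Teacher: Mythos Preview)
Your approach coincides with the paper's: both use the intermediate value theorem on the naive-welfare function to produce $c$ for the first bullet, and both construct the second-bullet NE by partitioning $[s^*,1]$ into $N$ pieces of $\varphi$-mass $c/N$ so that every interviewed applicant has $M(s,\mathbf f)=1$. Your verification via \cref{thm:equilibrium}, in particular the explicit check that $W\le\int_{0.5}^1 s\,\varphi(s)\,ds$ forces $s^*\ge 1/2$ and hence makes Condition~4 vacuous, is slightly more careful than the paper, which just asserts the threshold conditions hold.

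The obstacle you flag for $\theta=\indep$ is genuine, and the paper does not resolve it. The paper's proof writes ``$\int_{x^*}^1 s\,\varphi(s)\,ds = W$'' and then computes the NE welfare as $W$, but this identity uses $N\,U_N(s)=s$, which holds only for $\theta=\corr$. Under $\theta=\indep$ the same single-interview profile at per-firm capacity $c/N$ has welfare $\int_{x^*}^1 s\,\varphi(s)\,ds$, strictly below the target $W=\int_{x^*}^1\bigl(1-(1-s)^N\bigr)\varphi(s)\,ds$, so the second bullet is not established with the \emph{same} $c$. The paper provides no separate construction or adjusted IVT argument for the independent scheme. So your ``main obstacle'' is not a deficiency of your plan relative to the paper; it is a gap that the paper's own proof leaves open.
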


This result state that the each firm needs capacity $c$ under the Naive solution, while each firm only needs capacity $c/N$ under the Nash equilibrium to yield the same total welfare $W$.
This corroborates the claim that the Nash equilibrium is a much more desirable solution than the Naive solution.
Previously, we have established that when firm capacities are fixed, then the welfare is substantially higher until the NE compared to Naive.
In this section, we show that the same welfare can be achieved under the NE solution with substantially less capacity compared to the Naive solution.
Hence, even if the same number of applicants are eventually hired, the NE solution is more efficient in that firms can substantially reduce their interview capacity.

\subsection{Applicant Preferences over Tiered Firms}

We relax the baseline model by introducing two tiers of firms, A and B, to capture applicant preferences. Applicants always prefer an offer from any tier A firm to any tier B firm. If an applicant receives offers from both tiers, they reject all tier B offers and accept one tier A offer uniformly at random; if they only receive offers from one tier, they accept one of them uniformly at random. 

Our main result is that under this tiered model, the social welfare at the Nash equilibrium is higher than at the Nash equilibrium under the non-tiered model, as introducing tiers further reduces congestion for the top applicants. 
Therefore, it is even more important in the tiered setting for firms to act strategically and converge to an NE.

\paragraph{Utility.} 
We first derive the utility functions when when firms are differentiated by tier. 
Because tier A firms only compete among themselves, their utility functions remain identical to those in the non-tiered case. 
In contrast, the utility of tier B firms depends on the strategy profiles of the tier A firms.
Let $U_{n_B}(s, \mathbf{f_A})$ be the utility of tier B firms, \textit{given} the strategy profile $\mathbf{f_A}$ of tier A firms.
Letting $M(s, \mathbf{f_A})$ be the number of tier A firms that interview an applicant of score $s$ under the tier A strategy profile $\mathbf{f_A}$, when $\theta = \textsc{corr}$, the expected utility derived by one of the $n_B$ tier B firms from interviewing an applicant with score $s$ is 
\begin{align*}
    U_{n_B}(s, \mathbf{f_A}) = \begin{cases}
            s/n_B,&\text{if } M(s, \mathbf{f_A}) = 0\\
            0, & \text{otherwise}.
        \end{cases}
\end{align*}
When $\theta = \indep$, the utility becomes $U_{n_B}(s, \mathbf{f_A}) = (1-(1-s)^{n_B})(1-s)^{M(s, \mathbf{f_A})}/n_B$
(see Appendix \ref{appendix:deri_indep_tiered} for the derivation).  Figure \ref{fig:utility_tiers} displays an example of these utility functions for $n_b = 1, \cdots, 6$ under both $\theta\in \{\textsc{corr}, \indep\}$, when there are 7 tier A firms at Nash equilibrium.

\begin{figure}[H]
    \centering
    \subfigure[$\theta = \textsc{corr}$]{
        \includegraphics[width=0.47\linewidth]{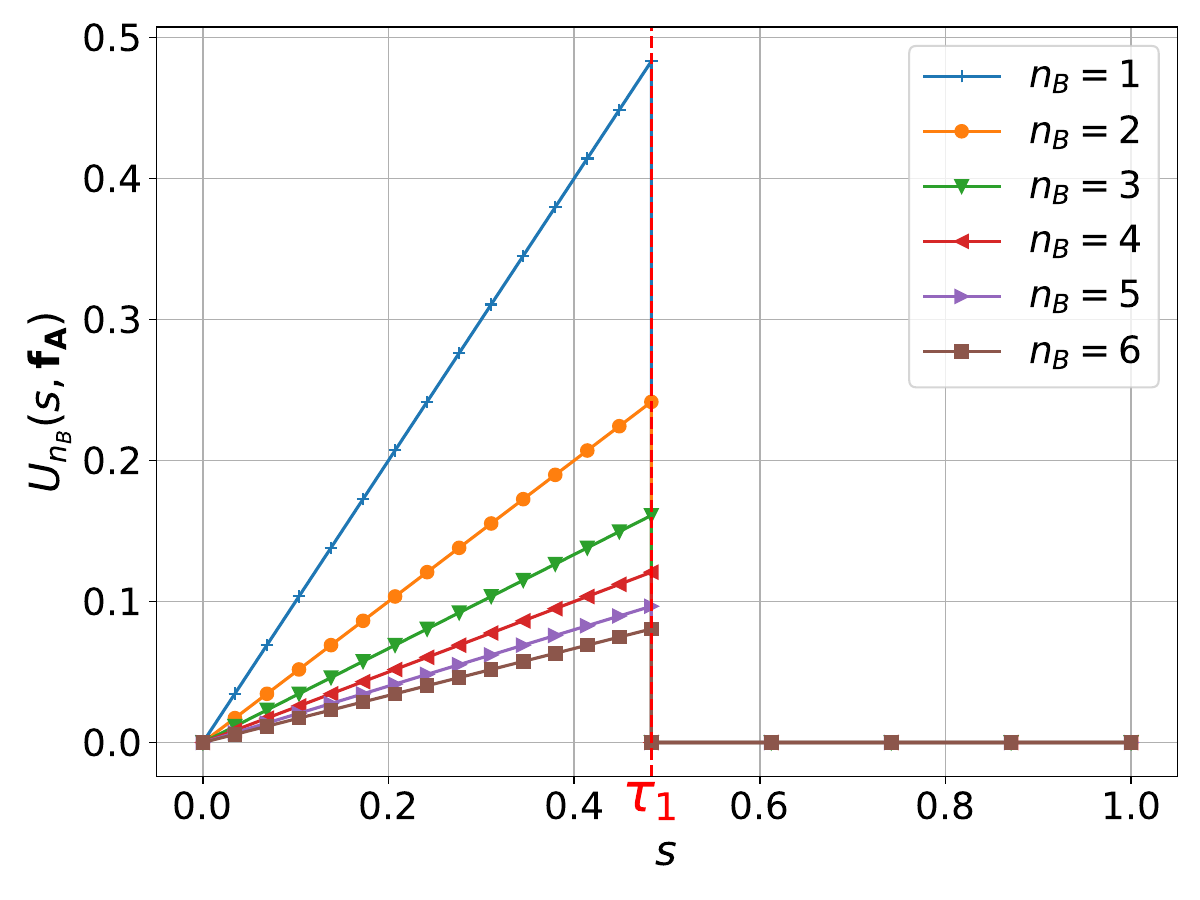}
        \label{fig:utility_tier_corr}
    }
    \subfigure[$\theta = \indep$]{
        \includegraphics[width=0.47\linewidth]{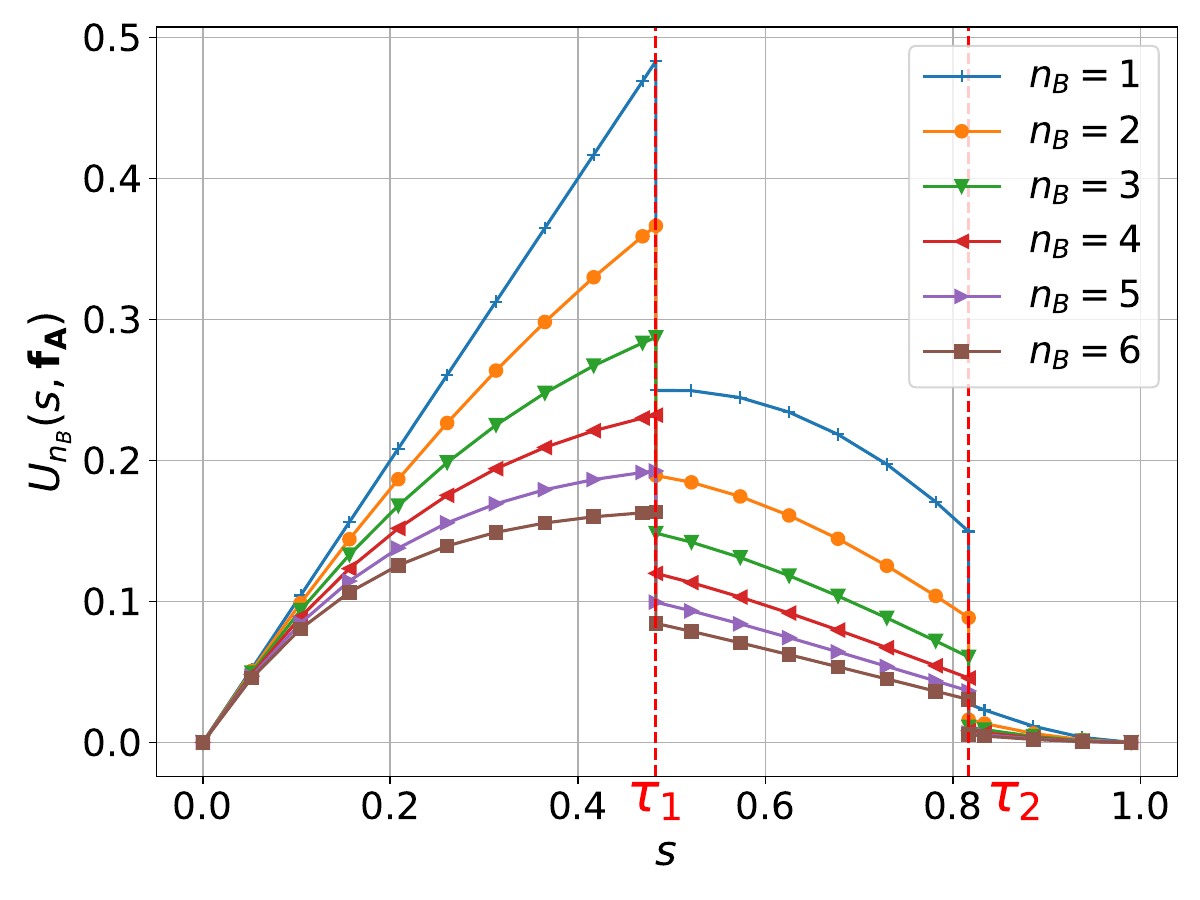}
        \label{fig:utility_tier_indep}
    }

    \caption{Utility curves $U_{n_B}(s, \mathbf{f_A})$ for tier B firms when tier A firms are at a Nash equilibrium, for both $\theta\in\{\textsc{corr}, \indep \}$. We have $c=0.1$, number of tier B firms $n_B = 1,\cdots, 6$ and number of tier A firms $n_A = 7$. 
    }
    \label{fig:utility_tiers}
\end{figure}

\paragraph{Nash equilibria.} Given that the utility of tier A firms is independent of the actions of tier B firms, we show that under a Nash equilibrium, there are thresholds $(\tau_i)_{i=1}^{N_A}$ for tier A firms that satisfy the conditions in Theorem \ref{thm:equilibrium}. When $\theta = \textsc{corr}$, tier B firms interview only applicants with scores below $\tau_1$. In this case, the equilibrium strategies of tier B firms can be characterized by another set of thresholds $(\tau'_j)_{j=1}^{N_B}$ with $0\leq \tau'_1\leq \tau'_2\cdots\leq \tau'_{N_B}\leq \tau_1$, which also satisfy the conditions in Theorem \ref{thm:equilibrium}. We formalize this in Proposition \ref{prop:tiered_eq_corr}, which establishes that when $\theta = \textsc{corr}$, the two tiers naturally separate at equilibrium: tier A firms compete for the top applicants, while tier B firms concentrate on those with lower scores.

\begin{proposition} \label{prop:tiered_eq_corr}
    Suppose there are $N_A$ tier A firms and $N_B$ tier B firms, and $\mathbf{f}_A = (f_1^A, \cdots, f_{N_A}^A)$ and $\mathbf{f}_B = (f_1^B, \cdots, f_{N_B}^B)$ are the corresponding strategy profiles. Under $\theta = \textsc{corr}$, the strategy profile $\mathbf{f}_A\cup \mathbf{f}_B$ is a Nash equilibrium if and only if the following conditions hold:

    \begin{enumerate}
     \item There exist thresholds $0=\tau_0\leq \tau_1 \leq \tau_2 \leq \cdots ,\leq \tau_{N_A}\leq  \tau_{N_A+1}=1$ such that $\mathbf{f}_A$ satisfies the same conditions as Theorem \ref{thm:equilibrium}.

    \item There exist thresholds $0=\tau_0\leq \tau'_1 \leq \tau'_2 \leq \cdots ,\leq \tau'_{N_B}\leq  \tau'_{N_B+1}=\tau_1$ such that $\mathbf{f}_B$ satisfies the same conditions as Theorem \ref{thm:equilibrium} except that the utility function $U_n(s)$ is replaced with $U_{n_b}(s, \mathbf{f_A})$.

    \end{enumerate}
    
\end{proposition}

When $\theta = \indep$, since interview outcomes are independent, applicants may accept offers from tier B firms even when they also receive interviews from tier A firms. 
As a result, tier B firms may interview applicants who are also interviewed by tier A; this makes the equilibrium characterization slightly more complex.
The next result describes a set of sufficient conditions for a strategy profile to be an NE.

\begin{proposition}\label{prop:tiered_eq_indep}
    Suppose there are $N_A$ tier A firms and $N_B$ tier B firms, and $\mathbf{f}_A = (f_1^A, \cdots, f_{N_A}^A)$ and $\mathbf{f}_B = (f_1^B, \cdots, f_{N_B}^B)$ are the corresponding strategy profiles. Under $\theta = \indep$, the strategy profile $\mathbf{f}_A\cup \mathbf{f}_B$ is Nash equilibrium if the following conditions are satisfied:
    \begin{itemize}
        \item There exist thresholds $0=\tau_0\leq \tau_1 \leq \tau_2 \leq \cdots ,\leq \tau_{N_A}\leq  \tau_{N_A+1}=1$ such that $\mathbf{f}_A$ satisfies the conditions in Theorem \ref{thm:equilibrium}.
        \item $\mathbb{P}(f_i^B(S)=1)=c$ for all $i\in [N_B]$.
        \item There exists a constant $H$ such that, for each $n\in [N_B]$,  $M(s, \mathbf{f_B}) = n$ on any subset $S_n \subseteq [0,1]$ with $U_{n}(s, \mathbf{f_A})\geq H$ and $U_{n+1}(s, \mathbf{f_A})< H$ for any $s\in S_n$. Moreover, $M(s, \mathbf{f_B}) = 0$ whenever $U_{1}(s, \mathbf{f_A})< H$.
    \end{itemize}
\end{proposition}

\paragraph{Social welfare.} Now we fix the total number of firms, and we compare the social welfare of the tiered and non-tiered settings. Proposition \ref{prop:extension_sw} establishes that, when $\theta=\textsc{corr}$, introducing tiers strictly increases social welfare under the NE.

\begin{proposition}\label{prop:extension_sw}
    Fix the total number of firms $N$. Let $SW_\text{tiered}$ be the social welfare when the firms have two tiers and are at Nash equilibrium. Similarly, let $SW_\text{non-tiered}$ be the social welfare under Nash equilibrium when all firms are equivalent. Then, under $\theta=\textsc{corr}$ and $c<1$, $SW_\text{tiered}>SW_\text{non-tiered}$. 
\end{proposition}

Next, for $\theta = \indep$, we provide numerical results for different values of $n_A$, $n_B$, and $c$, and plot the welfare gap ($SW_\text{tiered} - SW_\text{non-tiered}$) between the tiered and non-tiered equilibria in Figure \ref{fig:tiers}. We see that the social welfare is always higher under the tiered setting, where the welfare gap generally increases as capacity increases. When total capacity is sufficiently small ($Nc<0.5$), no applicant receives multiple interviews, so the welfare gap remains close to zero.

To provide more intuition for the $\theta = \indep$ regime, Figure \ref{fig:extension_NEs} plots the number of interviews received by applicants in the non-tiered and tiered regimes.
Relative to the non-tiered setting, the tiered equilibrium allocates more interviews to applicants with scores in the range $s\in[0.24, 0.42]$, and fewer interviews to applicants with scores above $0.76$. This reallocation improves welfare: in the high-score region, additional interviews contribute little because firms already face intense competition and applicants are unlikely to accept more offers. By contrast, in the lower-score region, an extra interview substantially increases the chance that an applicant finds a match, generating a larger marginal contribution to welfare. As a result, total social welfare is higher in the tiered setting (0.61) than in the non-tiered setting (0.55).

\begin{figure}[H]
    \centering
    \subfigure[$n_A = 5$]{
        \includegraphics[width=0.47\linewidth]{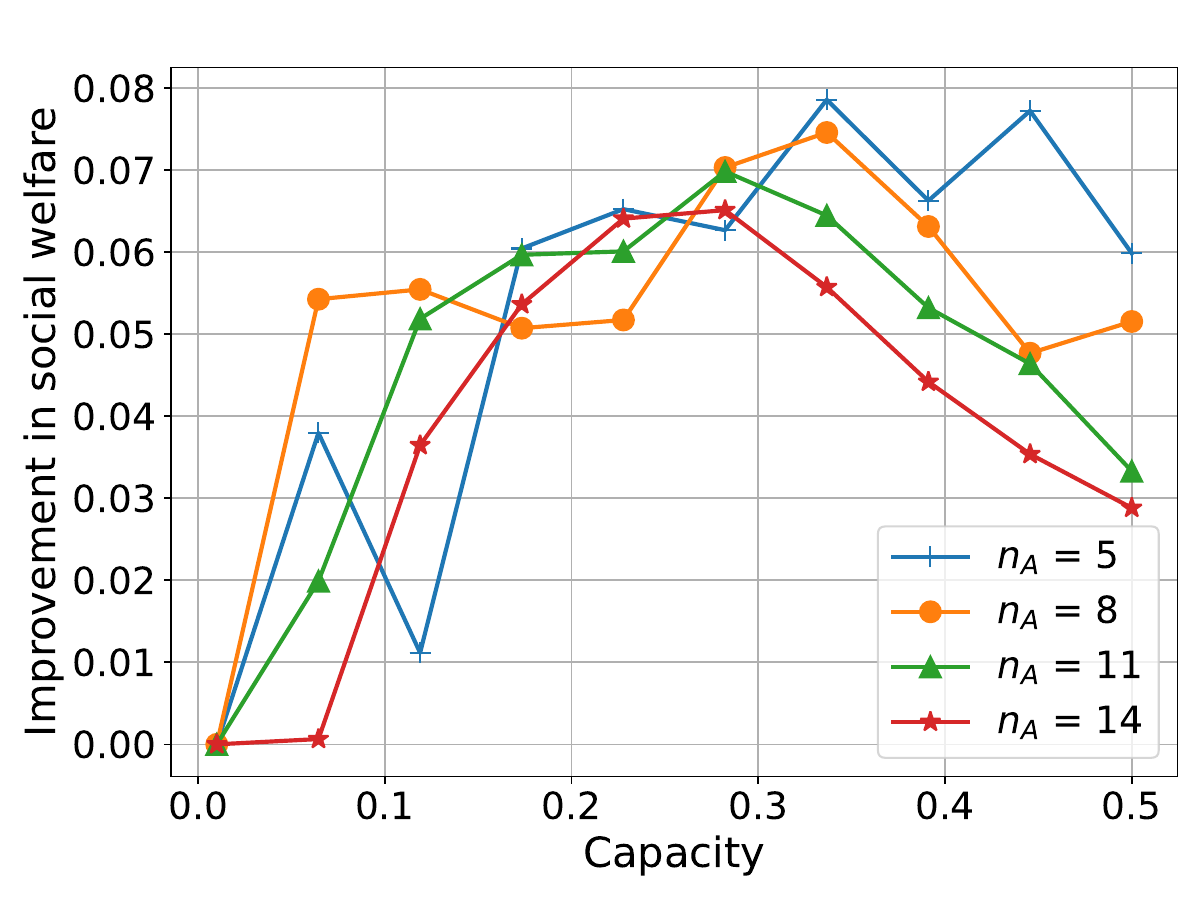}
        \label{fig:tier1}
    }
    \subfigure[$n_B = 5$]{
        \includegraphics[width=0.47\linewidth]{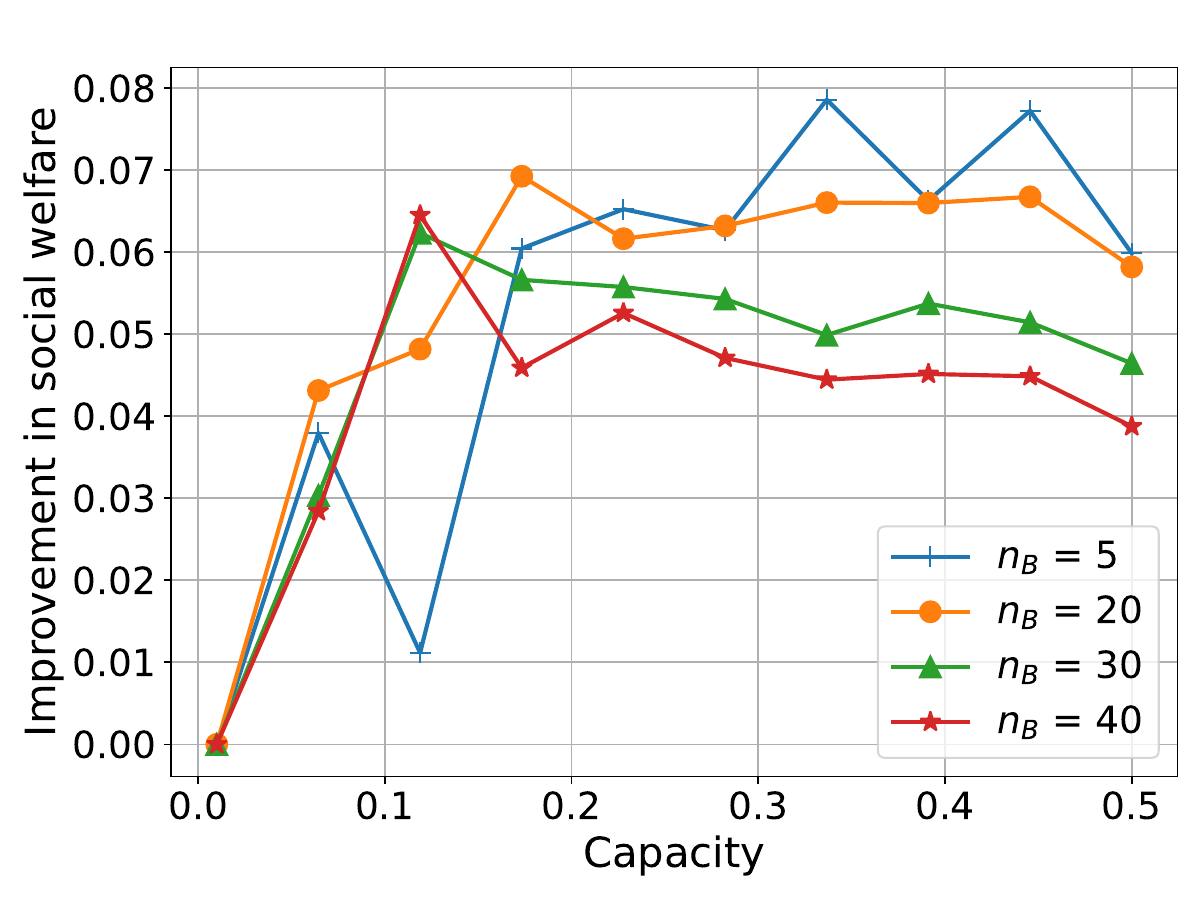}
        \label{fig:tier2}
    }

    \caption{Social welfare gap between the tiered and non-tiered nash equilibria, $SW_\text{tiered} - SW_\text{non-tiered}$, for varying capacity $c\in (0,0.5]$, $n_A \in \{5, 8, 11, 14\}$, and $n_B\in \{5, 8, 11, 14\}$. }
    \label{fig:tiers}
\end{figure}

\begin{figure}[H]
    \centering
    \subfigure[Non-tiered]{
        \includegraphics[width=0.47\linewidth]{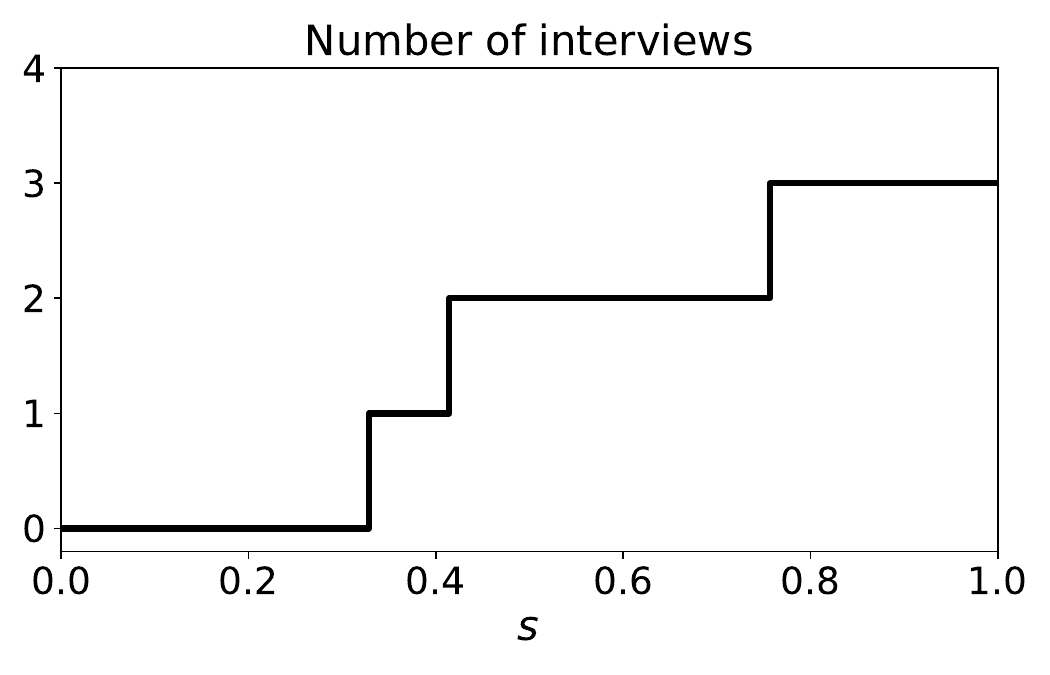}
        \label{fig:nontiered_NE}
    }
    \subfigure[Tiered]{
        \includegraphics[width=0.47\linewidth]{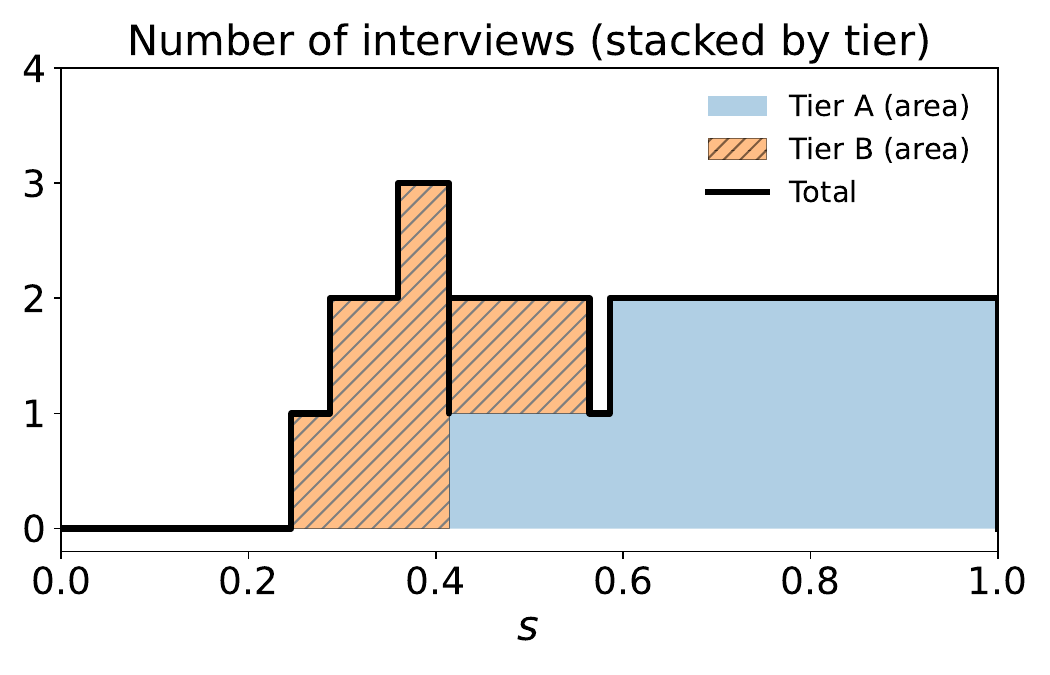}
        \label{fig:tiered_NE}
    }

    \caption{Number of interviews received by applicants across different score levels in tiered and non-tiered equilibrium with $N = 15$, $n_A = 10$, $n_B = 5$, and $c=0.1$. In Figure \ref{fig:tiered_NE}, tier A firms form a equilibria with $\tau_1 = 0.42, \tau_2 = 0.76$. The left shaded band (orange) and the right shaded band (blue) represent the number of interviews received by tier B and tier A firms, respectively. While tier B firms interview applicants with middle scores, tier A firms interview mostly applicants with high scores.} 
    \label{fig:extension_NEs}
\end{figure}

Overall, introducing tiers alters equilibrium interview patterns in ways that reduce congestion and raise welfare. When firms differ in desirability, interview congestion naturally separates across tiers, reducing competition among high-ranked firms and redirecting capacity toward lower-scored applicants.
Therefore, under tiered firms, it is even more important that firms act strategically and converge to an NE.

\section{Conclusion and Discussion} \label{sec:discussion_and_conclusion}

This paper studies the role of strategic behavior in algorithmic hiring environments under congestion and algorithmic monoculture. 
By analyzing Nash equilibrium strategies and their impact on social welfare, we demonstrate that strategic selection can mitigate the inefficiencies of congestion, which benefits both firms and applicants. 
Our findings highlight the value of providing firms with information about applicant congestion, as it facilitates convergence to a Nash equilibrium.
This result implies that the effectiveness of algorithmic hiring platforms is contingent on their ability to enhance coordination among firms, via congestion information or personalizing algorithmic scores.
Without mechanisms to address competition for top-scoring candidates, platforms risk exacerbating inefficiencies, limiting their value.

Lastly, we discuss several assumptions of our model, and corresponding limitations and future directions.

\paragraph{One-shot process.}
One of the main assumption of our model is that hiring is a one-shot process where all firms make one set of interview decisions, which then lead to hiring decisions. 
In reality, the hiring process may take multiple stages, where firms that fail to hire in the initial round may revisit the pool of candidates to interview additional applicants.
At the opposite extreme, hiring outcomes could result from a stable matching, achieved through numerous iterations (e.g., via the deferred acceptance algorithm). 
Our work focuses on the ``one-shot'' nature of the hiring process and the congestion it induces.
Real-world hiring likely falls between these two extremes, with limited rounds of iteration that neither resolve all congestion issues nor achieve full stability. 
We leave as an interesting direction to analyze a model that interpolates between these two regimes.

\paragraph{Homogeneous firm-side utility.}
We also assume that the firm receives equal utility from hiring any applicant, as long as they pass the interview.
In reality, an applicant's profile is multi-dimensional, and a firm may incur different utilities from different profiles. 
We believe that assuming equal utility from all hires is a good approximation of hiring practices for roles with many positions (e.g., entry-level software engineering), whereas this may not be a good assumption for a specialized position (e.g., a C-suite role).

\paragraph{Estimating utilities.}
Our analysis highlights the importance of providing firms with information about applicant congestion levels to improve decision-making.
In our model, since we assume that the utility for interviewing an applicant is $U_n(s)$, a firm can exactly compute this utility if they have access to both $n$ and $s$.
However, in reality, there are many factors that prohibit this exact calculation.
First, the exact form of $U_n(s)$ depends on the decision scheme (e.g., $\theta \in \{\textsc{corr}, \indep\}$).
Firms may know exactly which regime they are in, and the regime may lie in between the two we consider.
Even so, we believe our results are still significant in that the utilities heavily depend on \textit{both} $s$ and $n$, and therefore having access to $n$ can greatly help the firm make decisions.
For example, if two applicants have the same value of $n$ but the first applicant has a higher score, then clearly the first applicant dominates the second, even if the exact form of $U_n(s)$ is unknown.

\paragraph{Interview decision based on the score.}
In our model, we assume that firms make interview decisions solely based on the algorithmic score $s$.
In reality, firms may have access to additional features about the applicant which can be used in their interview decisions. For example, firm A might prioritize applicants with score $s = 0.6$ who graduated from college X, while firm B focuses on applicants with $s = 0.6$ from college Y. In such cases, there is no overlap between the two firms’ interview pools, despite their reliance on the same score.
Our model effectively accommodates this scenario by assuming that scores are continuous, allowing firms to select subsets of $[0, 1]$ with arbitrary precision. This can be interpreted as encoding additional applicant attributes in the decimal representation of $s$.
Mathematically, this equivalence ensures that the model captures the effects of feature-based selection without loss of generality.

\paragraph{Interpretation of the algorithmic score.}
We also interpret an applicant's score $s$ as the probability that the applicant will pass the interview.
In reality, the algorithm might simply give a ranking over the candidates, or a score that does not exactly represent the probability of passing an interview.
In this case, the firm may need time and experience to understand how to interpret the output of the algorithm.

\bibliographystyle{abbrvnat}
\bibliography{references} 

\begin{thebibliography}{50}
\providecommand{\natexlab}[1]{#1}
\providecommand{\url}[1]{\texttt{#1}}
\expandafter\ifx\csname urlstyle\endcsname\relax
  \providecommand{\doi}[1]{doi: #1}\else
  \providecommand{\doi}{doi: \begingroup \urlstyle{rm}\Url}\fi

\bibitem[Ajunwa(2019)]{ajunwa2019paradox}
I.~Ajunwa.
\newblock The paradox of automation as anti-bias intervention.
\newblock \emph{Cardozo L. Rev.}, 41:\penalty0 1671, 2019.

\bibitem[Allman et~al.(2025)Allman, Ashlagi, Saberi, and
  Yu]{allman2025signaling}
M.~Allman, I.~Ashlagi, A.~Saberi, and S.~H. Yu.
\newblock From signaling to interviews in random matching markets.
\newblock \emph{arXiv preprint arXiv:2501.14159}, 2025.

\bibitem[Aminian et~al.(2023)Aminian, Manshadi, and Niazadeh]{aminian2023fair}
M.~R. Aminian, V.~Manshadi, and R.~Niazadeh.
\newblock Fair markovian search.
\newblock \emph{Available at SSRN 4347447}, 2023.

\bibitem[Anderson et~al.(2024)Anderson, Shah, and
  Kreminski]{anderson2024homogenization}
B.~R. Anderson, J.~H. Shah, and M.~Kreminski.
\newblock Homogenization effects of large language models on human creative
  ideation.
\newblock In \emph{Proceedings of the 16th Conference on Creativity \&
  Cognition}, pages 413--425, 2024.

\bibitem[Arnosti et~al.(2021)Arnosti, Johari, and Kanoria]{arnosti2021managing}
N.~Arnosti, R.~Johari, and Y.~Kanoria.
\newblock Managing congestion in matching markets.
\newblock \emph{Manufacturing \& Service Operations Management}, 23\penalty0
  (3):\penalty0 620--636, 2021.

\bibitem[Ashlagi et~al.(2025)Ashlagi, Chen, Roghani, and
  Saberi]{ashlagi2025stable}
I.~Ashlagi, J.~Chen, M.~Roghani, and A.~Saberi.
\newblock Stable matching with interviews.
\newblock \emph{arXiv preprint arXiv:2501.12503}, 2025.

\bibitem[Baek and Makhdoumi(2023)]{baek2023feedback}
J.~Baek and A.~Makhdoumi.
\newblock The feedback loop of statistical discrimination.
\newblock \emph{Available at SSRN 4658797}, 2023.

\bibitem[Besbes et~al.(2023)Besbes, Fonseca, Lobel, and
  Zheng]{besbes2023signaling}
O.~Besbes, Y.~Fonseca, I.~Lobel, and F.~Zheng.
\newblock Signaling competition in two-sided markets.
\newblock \emph{Available at SSRN 4451693}, 2023.

\bibitem[Besbes et~al.(2025)Besbes, Kanoria, and Kumar]{besbes2025impact}
O.~Besbes, Y.~Kanoria, and A.~Kumar.
\newblock Impact of rankings and personalized recommendations in marketplaces.
\newblock \emph{arXiv preprint arXiv:2506.03369}, 2025.

\bibitem[Beyhaghi and Tardos(2021)]{beyhaghi2021randomness}
H.~Beyhaghi and {\'E}.~Tardos.
\newblock Randomness and fairness in two-sided matching with limited
  interviews.
\newblock In \emph{12th Innovations in Theoretical Computer Science Conference
  (ITCS 2021)}. Schloss-Dagstuhl-Leibniz Zentrum f{\"u}r Informatik, 2021.

\bibitem[Bommasani et~al.(2022)Bommasani, Creel, Kumar, Jurafsky, and
  Liang]{bommasani2022picking}
R.~Bommasani, K.~A. Creel, A.~Kumar, D.~Jurafsky, and P.~S. Liang.
\newblock Picking on the same person: Does algorithmic monoculture lead to
  outcome homogenization?
\newblock \emph{Advances in Neural Information Processing Systems},
  35:\penalty0 3663--3678, 2022.

\bibitem[Castera et~al.(2022)Castera, Loiseau, and
  Pradelski]{castera2022statistical}
R.~Castera, P.~Loiseau, and B.~S. Pradelski.
\newblock Statistical discrimination in stable matchings.
\newblock In \emph{Proceedings of the 23rd ACM Conference on Economics and
  Computation}, pages 373--374, 2022.

\bibitem[Coles et~al.(2013)Coles, Kushnir, and Niederle]{coles2013preference}
P.~Coles, A.~Kushnir, and M.~Niederle.
\newblock Preference signaling in matching markets.
\newblock \emph{American Economic Journal: Microeconomics}, 5\penalty0
  (2):\penalty0 99--134, 2013.

\bibitem[Cowgill(2019)]{cowgill2019bias}
B.~Cowgill.
\newblock Bias and productivity in humans and machines.
\newblock \emph{Available at SSRN 3584916}, 2019.

\bibitem[Donahue et~al.(2025)Donahue, Immorlica, and
  Lucier]{donahue2025optimal}
K.~Donahue, N.~Immorlica, and B.~Lucier.
\newblock Optimal selection using algorithmic rankings with side information.
\newblock \emph{arXiv preprint arXiv:2511.04867}, 2025.

\bibitem[Doshi and Hauser(2024)]{doshi2024generative}
A.~R. Doshi and O.~P. Hauser.
\newblock Generative ai enhances individual creativity but reduces the
  collective diversity of novel content.
\newblock \emph{Science Advances}, 10\penalty0 (28):\penalty0 eadn5290, 2024.

\bibitem[Dwork et~al.(2024)Dwork, Hays, Kleinberg, and
  Raghavan]{dwork2024equilibria}
C.~Dwork, C.~Hays, J.~Kleinberg, and M.~Raghavan.
\newblock Equilibria, efficiency, and inequality in network formation for
  hiring and opportunity.
\newblock \emph{arXiv preprint arXiv:2402.13841}, 2024.

\bibitem[Epstein and Ma(2022)]{epstein2022order}
B.~Epstein and W.~Ma.
\newblock Order selection problems in hiring pipelines.
\newblock \emph{arXiv preprint arXiv:2210.04059}, 2022.

\bibitem[Farajollahzadeh et~al.(2025)Farajollahzadeh, Lee, Manshadi, and
  Monachou]{farajollahzadeh2025rooney}
S.~Farajollahzadeh, S.~Lee, V.~Manshadi, and F.~Monachou.
\newblock Why the rooney rule fumbles: Limitations of interview-stage diversity
  interventions in labor markets.
\newblock \emph{Available at SSRN}, 2025.

\bibitem[Filippas et~al.(2024)Filippas, Horton, Parasurama, and
  Urraca]{filippas2024costly}
A.~Filippas, J.~J. Horton, P.~Parasurama, and D.~Urraca.
\newblock Costly capacity signaling increases matching efficiency: Evidence
  from a field experiment.
\newblock 2024.

\bibitem[Fradkin et~al.(2023)Fradkin, Bhole, and
  Horton]{fradkin2023competition}
A.~Fradkin, M.~Bhole, and J.~Horton.
\newblock Competition avoidance vs herding in job search: evidence from
  large-scale field experiments on an online job board.
\newblock 2023.

\bibitem[Fuller et~al.(2021)Fuller, Raman, Sage-Gavin, Hines,
  et~al.]{fuller2021hidden}
J.~B. Fuller, M.~Raman, E.~Sage-Gavin, K.~Hines, et~al.
\newblock Hidden workers: Untapped talent.
\newblock \emph{Harvard Business School Project on Managing the Future of Work
  and Accenture}, 2021.

\bibitem[Fumagalli et~al.(2022)Fumagalli, Rezaei, and
  Salomons]{fumagalli2022ok}
E.~Fumagalli, S.~Rezaei, and A.~Salomons.
\newblock Ok computer: Worker perceptions of algorithmic recruitment.
\newblock \emph{Research Policy}, 51\penalty0 (2):\penalty0 104420, 2022.

\bibitem[Gaebler et~al.(2024)Gaebler, Goel, Huq, and
  Tambe]{gaebler2024auditing}
J.~D. Gaebler, S.~Goel, A.~Huq, and P.~Tambe.
\newblock Auditing the use of language models to guide hiring decisions.
\newblock \emph{arXiv preprint arXiv:2404.03086}, 2024.

\bibitem[Gale and Shapley(1962)]{gale1962college}
D.~Gale and L.~S. Shapley.
\newblock College admissions and the stability of marriage.
\newblock \emph{The American Mathematical Monthly}, 69\penalty0 (1):\penalty0
  9--15, 1962.

\bibitem[Gee(2019)]{gee2019more}
L.~K. Gee.
\newblock The more you know: Information effects on job application rates in a
  large field experiment.
\newblock \emph{Management Science}, 65\penalty0 (5):\penalty0 2077--2094,
  2019.

\bibitem[Halaburda et~al.(2018)Halaburda, Jan~Piskorski, and
  Y{\i}ld{\i}r{\i}m]{halaburda2018competing}
H.~Halaburda, M.~Jan~Piskorski, and P.~Y{\i}ld{\i}r{\i}m.
\newblock Competing by restricting choice: The case of matching platforms.
\newblock \emph{Management Science}, 64\penalty0 (8):\penalty0 3574--3594,
  2018.

\bibitem[Horton(2017)]{horton2017effects}
J.~J. Horton.
\newblock The effects of algorithmic labor market recommendations: Evidence
  from a field experiment.
\newblock \emph{Journal of Labor Economics}, 35\penalty0 (2):\penalty0
  345--385, 2017.

\bibitem[Horton et~al.(2024)Horton, Vasserman, and Watt]{horton2024reducing}
J.~J. Horton, S.~Vasserman, and M.~Watt.
\newblock Reducing congestion in labor markets: A case study in simple market
  design.
\newblock 2024.

\bibitem[Jain et~al.(2024)Jain, Suriyakumar, Creel, and
  Wilson]{jain2024algorithmic}
S.~Jain, V.~Suriyakumar, K.~Creel, and A.~Wilson.
\newblock Algorithmic pluralism: A structural approach to equal opportunity.
\newblock In \emph{The 2024 ACM Conference on Fairness, Accountability, and
  Transparency}, pages 197--206, 2024.

\bibitem[Kadam(2015)]{kadam2015interviewing}
S.~V. Kadam.
\newblock Interviewing in matching markets.
\newblock 2015.

\bibitem[Kanoria and Saban(2021)]{kanoria2021facilitating}
Y.~Kanoria and D.~Saban.
\newblock Facilitating the search for partners on matching platforms.
\newblock \emph{Management Science}, 67\penalty0 (10):\penalty0 5990--6029,
  2021.

\bibitem[Kim et~al.(2025)Kim, Long, and Tong]{kim2025fair}
E.~Kim, X.~Long, and J.~Tong.
\newblock Fair funnels: Bias, performance, and interventions in multistage
  hiring processes.
\newblock \emph{Performance, and Interventions in Multistage Hiring Processes
  (January 15, 2025)}, 2025.

\bibitem[Kleinberg and Raghavan(2021)]{kleinberg2021algorithmic}
J.~Kleinberg and M.~Raghavan.
\newblock Algorithmic monoculture and social welfare.
\newblock \emph{Proceedings of the National Academy of Sciences}, 118\penalty0
  (22):\penalty0 e2018340118, 2021.

\bibitem[Kleinberg et~al.(2025)Kleinberg, Sinanaj, and Tardos]{kleinbergprice}
R.~Kleinberg, E.~Sinanaj, and {\'E}.~Tardos.
\newblock Price of anarchy of algorithmic monoculture.
\newblock \emph{WINE}, 2025.

\bibitem[Komiyama and Noda(2024)]{komiyama2024statistical}
J.~Komiyama and S.~Noda.
\newblock On statistical discrimination as a failure of social learning: A
  multiarmed bandit approach.
\newblock \emph{Management Science}, 2024.

\bibitem[Lee and Schwarz(2017)]{lee2017interviewing}
R.~S. Lee and M.~Schwarz.
\newblock Interviewing in two-sided matching markets.
\newblock \emph{The RAND Journal of Economics}, 48\penalty0 (3):\penalty0
  835--855, 2017.

\bibitem[Li et~al.(2020)Li, Raymond, and Bergman]{li2020hiring}
D.~Li, L.~R. Raymond, and P.~Bergman.
\newblock Hiring as exploration.
\newblock Technical report, National Bureau of Economic Research, 2020.

\bibitem[Manjunath and Morrill(2023)]{manjunath2023interview}
V.~Manjunath and T.~Morrill.
\newblock Interview hoarding.
\newblock \emph{Theoretical Economics}, 18:\penalty0 503--527, 2023.
\newblock ISSN 1555-7561.

\bibitem[Manshadi et~al.(2023)Manshadi, Rodilitz, Saban, and
  Suresh]{manshadi2023redesigning}
V.~Manshadi, S.~Rodilitz, D.~Saban, and A.~Suresh.
\newblock Redesigning volunteermatch's ranking algorithm: Toward more equitable
  access to volunteers.
\newblock \emph{Available at SSRN 4497747}, 2023.

\bibitem[Padmakumar and He(2023)]{padmakumar2023does}
V.~Padmakumar and H.~He.
\newblock Does writing with language models reduce content diversity?
\newblock \emph{arXiv preprint arXiv:2309.05196}, 2023.

\bibitem[Parasurama and Ipeirotis(2025)]{parasurama2025algorithmic}
P.~Parasurama and P.~Ipeirotis.
\newblock Algorithmic hiring and diversity: Reducing human-algorithm similarity
  for better outcomes.
\newblock \emph{arXiv preprint arXiv:2505.14388}, 2025.

\bibitem[Peng and Garg(2023)]{peng2023monoculture}
K.~Peng and N.~Garg.
\newblock Monoculture in matching markets.
\newblock \emph{arXiv preprint arXiv:2312.09841}, 2023.

\bibitem[Purohit et~al.(2019)Purohit, Gollapudi, and
  Raghavan]{purohit2019hiring}
M.~Purohit, S.~Gollapudi, and M.~Raghavan.
\newblock Hiring under uncertainty.
\newblock In \emph{International Conference on Machine Learning}, pages
  5181--5189. PMLR, 2019.

\bibitem[Raghavan(2024)]{raghavan2024competition}
M.~Raghavan.
\newblock Competition and diversity in generative ai.
\newblock \emph{arXiv preprint arXiv:2412.08610}, 2024.

\bibitem[Raghavan et~al.(2020)Raghavan, Barocas, Kleinberg, and
  Levy]{raghavan2020mitigating}
M.~Raghavan, S.~Barocas, J.~Kleinberg, and K.~Levy.
\newblock Mitigating bias in algorithmic hiring: Evaluating claims and
  practices.
\newblock In \emph{Proceedings of the 2020 conference on fairness,
  accountability, and transparency}, pages 469--481, 2020.

\bibitem[Toups et~al.(2024)Toups, Bommasani, Creel, Bana, Jurafsky, and
  Liang]{toups2024ecosystem}
C.~Toups, R.~Bommasani, K.~Creel, S.~Bana, D.~Jurafsky, and P.~S. Liang.
\newblock Ecosystem-level analysis of deployed machine learning reveals
  homogeneous outcomes.
\newblock \emph{Advances in Neural Information Processing Systems}, 36, 2024.

\bibitem[Vohra et~al.(2024)Vohra, Yoder, Akbarpour, Dworczak, Jagadeesan, Li,
  and Muir]{vohra2024matching}
A.~Vohra, N.~Yoder, M.~Akbarpour, P.~Dworczak, R.~Jagadeesan, S.~Li, and
  E.~Muir.
\newblock Matching with costly interviews: The benefits of asynchronous offers.
\newblock 2024.

\bibitem[Zhang and Yencha(2022)]{zhang2022examining}
L.~Zhang and C.~Yencha.
\newblock Examining perceptions towards hiring algorithms.
\newblock \emph{Technology in Society}, 68:\penalty0 101848, 2022.

\bibitem[Zhou and Lee(2024)]{zhou2024generative}
E.~Zhou and D.~Lee.
\newblock Generative artificial intelligence, human creativity, and art.
\newblock \emph{PNAS nexus}, 3\penalty0 (3):\penalty0 pgae052, 2024.

\end{thebibliography}

\newpage
\appendix

\section{Proof of Main Results}\label{appendix:proofs}
\subsection{Proof of Theorem \ref{thm:equilibrium}}
We organize the proof as follows. We first present some intermediate results before proving Theorem \ref{thm:equilibrium}. Next, we show that if the strategy profile $\bbf$ is a Nash equilibrium, it must satisfy each of the four conditions in Theorem \ref{thm:equilibrium}. Finally, we prove that any strategy profile that satisfies the four conditions is a Nash equilibrium. 

\begin{lemma}\label{lemma:A1}
    Let X be a measurable set on [0,1] with $\Pr(S\in X)=x$ where $x\in (0,1]$. Then for any $0<x'<x$, we can always find a subset of X such that $\Pr(S\in X')=x'$.
\end{lemma}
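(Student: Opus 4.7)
The plan is to reduce the claim to the intermediate value theorem applied to a suitably chosen continuous function of a single real variable. Since $\mathcal{D}$ has a density $\varphi$ on $[0,1]$, the induced probability measure is non-atomic and absolutely continuous with respect to Lebesgue measure, which should make continuity arguments straightforward.

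First, I define the function $g : [0,1] \to [0,x]$ by
\[
g(t) \;=\; \Pr\bigl(S \in X \cap [0,t]\bigr) \;=\; \int_0^t \mathbbm{1}\{s \in X\}\,\varphi(s)\,ds.
\]
Then $g(0) = 0$ and $g(1) = \Pr(S \in X) = x$. Next, I show that $g$ is continuous on $[0,1]$. For $0 \le t_1 \le t_2 \le 1$, we have
\[
\bigl|g(t_2) - g(t_1)\bigr| \;=\; \int_{t_1}^{t_2} \mathbbm{1}\{s \in X\}\,\varphi(s)\,ds \;\le\; \int_{t_1}^{t_2} \varphi(s)\,ds,
\]
and the right-hand side tends to $0$ as $t_2 - t_1 \to 0$ by absolute continuity of the Lebesgue integral of the integrable function $\varphi$. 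So $g$ is (uniformly) continuous.

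Now, given $0 < x' < x$, the intermediate value theorem applied to the continuous function $g$ on $[0,1]$, which takes the values $0$ and $x$ at the endpoints, guarantees some $t^* \in (0,1)$ with $g(t^*) = x'$. Setting $X' = X \cap [0, t^*]$ yields a measurable subset of $X$ with $\Pr(S \in X') = g(t^*) = x'$, as required.

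I do not anticipate a significant obstacle here. The only subtlety is ensuring that $g$ is continuous without additional regularity assumptions on $\varphi$ beyond integrability, but this follows from the standard absolute continuity property of the Lebesgue integral and does not require $\varphi$ to be continuous or bounded. The assumption $\varphi(s) > 0$ on $[0,1]$ stated in the model is not even needed for this lemma; only that $\varphi$ is a density (hence integrable).
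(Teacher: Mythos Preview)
Your proof is correct and follows essentially the same approach as the paper: both define the function $t \mapsto \Pr(S \in X \cap [0,t])$, establish its continuity, and apply the intermediate value theorem to produce $X' = X \cap [0,t^*]$. Your continuity argument via the integral representation and absolute continuity is slightly cleaner than the paper's direct $\varepsilon$-style argument, but the idea is identical.
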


\begin{myproof}
    Consider the function $f(t) = Pr(S\in (X\cap [0,t]))$. We will first show that $f(t)$ is continuous. 
    First, suppose $t' < t$.
    Then, we can write
    \begin{align*}
    f(t) 
    &= \Pr(S\in (X\cap [0,t]))\\
    &= \Pr(S\in (X\cap [0,t'])) +  \Pr (S \in (X\cap [t',t]) )\\
    &= f(t') +  \Pr (S \in (X\cap [t',t]) ).
    \end{align*}
    Clearly, as $t' \to t$, the term $\Pr (S \in (X\cap [t',t]) )$ goes to 0.
    Therefore, $\lim_{t' \to t^-} f(t') = f(t)$.
    Similarly, when $t' > t$, we can write $f(t') = f(t) + \Pr (S \in (X\cap [t,t']) )$, which implies that $\lim_{t' \to t^+} f(t') = f(t)$.
    Therefore, $\lim_{t' \to t} f(t') = f(t)$, and hence $f(t)$ is continuous.
    
    Furthermore, $f(0) = 0$ and $ f(1) = x$. Hence, by the intermediate value theorem, for any $x'\in [0,x]$ there is some $t^*\in [0,1]$ such that $f(t^*)=x'$. Let $S' = X\cap [0,t^*]$, we have $\Pr(S\in X') = x'$ as required. 
\end{myproof}

\begin{myproof}[\text{Proof of Theorem \ref{thm:equilibrium}}]
Suppose the strategy profile $\bbf$ is a Nash equilibrium. We will show that there exist thresholds $0\leq \tau_1 \leq \tau_2\cdots, \tau_N\leq 1$ that satisfy the four conditions listed in the theorem.

\begin{enumerate}
    \item  $\Pr(f_i(S)=1) = c$ for all $i \in [N]$.
    \item  $M(S) = m;s \in [\tau_m, \tau_{m+1})$ for all $m = 0, 1,  \dots, N$.

    \item  $U_n(\tau_n) \leq U_m(\tau_m)$ for all $n < m \leq \Mmax(\bbf)$.
    
    \item Consider any $n < m \leq \Mmax(\bbf)+1$ where 
    $U_n(\tau_n) < U_m(\tau_m)$, and consider any firm $i$ and score $s \in [\tau_n, \tau_{n+1})$ where $f_i(s) = 1$ and $U_n(s) < U_m(\tau_m)$. 
    For any $s' \in [\tau_{m-1}, \tau_m)$ where $U_m(s') > U_n(s)$, we have that $f_i(s') = 1$.

\end{enumerate}

Let's start by proving the \textbf{first condition}. Given capacity c, we want to show that $\Pr(f_i(S) = 1) = c$ for all $i\in[N]$. Suppose, by contradiction, that $\Pr(f_i(S)=1)<c$ for some $i\in[N]$. We will show that firm $i$ can deviate from $f_i$ to strictly improve its strategy. Let $K = \{s\in [0,1]: f_i(s)=1\}$ be the applicants interviewed by firm $i$ under $\bbf$. Under the strategy profile $\bbf$, by equation (1), the utility derived by firm $i$ is 
\begin{align*}
    u(f_i, f_{-i}) &= \int_{0}^1 f_i(s) U_{M(s, \mathbf{f})}(s) \varphi(s)ds\\
    & = \int_K U_{M(s, \mathbf{f})}(s) \varphi(s)ds.
\end{align*}
Since $\Pr(f_i(S) = 1)<c$, by Lemma \ref{lemma:A1}, there is a subset $E\subseteq [0,1]\setminus K$ such that $Pr(S\in E)=c-Pr(f_i(S)=1)$.

Consider an alternative strategy $f'_i$ where $f'_i(s) = 1$ if and only if $s \in K \cup E$.
We have $\Pr(f_i'(S) = 1) = \Pr(S \in K) + \Pr(S \in E) = c$, hence $f_i'$ satisfies the capacity requirement.
We will now show $u(f'_i, f_{-i})>u(f_i, f_{-i})$.
Let $\bbf' = (f_1, \cdots, f'_i, \cdots, f_N)$ be the strategy profile after firm $i$ deviates from $f_i$.
By construction, $M(s, \mathbf{f'}) = M(s, \mathbf{f})$ on $K$. As a result, the utility derived by firm $i$ under $\bbf'$ is 

    \begin{align*}
        u(f'_i, f_{-i})
        & = \int_{0}^1 f'_i(s) U_{M(s, \mathbf{f'})}(s) \varphi(s)ds\\
        & = \int_{K} U_{M(s, \mathbf{f'})}(s) \varphi(s)ds + \int_{E} U_{M(s, \mathbf{f'})}(s) \varphi(s)ds\\
        & = \int_{K} U_{M(s, \mathbf{f})}(s) \varphi(s)ds + \int_{E} U_{M(s, \mathbf{f'})}(s) \varphi(s)ds\\
        & = u(f_i, f_{-i}) + \int_{E} U_{M(s, \mathbf{f'})}(s) \varphi(s)ds.
    \end{align*}

By assumption, $U_{M(s, \mathbf{f'})}(s)>0$ and $\varphi(s)>0$ whenever $s>0$. We must have $\int_{E} U_{M(s, \mathbf{f'})}(s) \varphi(s)ds>0$ since $\Pr(S\in E)>0$. Therefore, 
$u(f'_i, f_{-i}) > u(f_i, f_{-i})$.
This is a contradiction to $\bbf$ being an NE, and hence it must be that  $\Pr(f_i(S) = 1) = c$ for all $i\in [N]$.

We will then prove the \textbf{second condition}. Our goal is to show that there exist thresholds $\tau_1\leq \cdots \tau_N$ such that $M(s;\bbf) = m;s \in [\tau_m, \tau_{m+1})$ for all $m \in [\Mmax(\bbf)]$. We will prove the statement by constructing a sequence of thresholds $\tau_1,\cdots \tau_N$ that satisfy the requirements. Define the thresholds as follows: 
\begin{align*}
    \tau_m = 
\begin{cases} 
\min \{s\in [0,1]; M(s;\bbf)\geq m\} & \text{if } m\leq \Mmax(\bbf), \\
1 & \text{if } \Mmax(\bbf)<m\leq N.
\end{cases}
\end{align*}
We will show $M(s;\bbf) = m;s \in [\tau_m, \tau_{m+1})$ for all $m\in [\Mmax(\bbf)]$.

We will first prove the statement for $m=0$. By assumption, $\tau_1 = \min \{s\in [0,1]; M(s;\bbf)\geq 1\}$. Therefore, for any $s\in [0,\tau_1)$, we must have $M(s;\bbf)=0$. Hence, $M(s;\bbf) = 0$ for all $s \in [0, \tau_{1})$. 

Next, we will show that for any $0<m\leq \Mmax(\bbf)$, $M(s;\bbf) = m$ for any $s \in [\tau_m, \tau_{m+1})$. We will first show that $M(s;\bbf) \leq m$ for all $s \in [\tau_m, \tau_{m+1})$. Consider the following two cases: $m<\Mmax(\bbf)$ and $m=\Mmax(\bbf)$. If $m<\Mmax(\bbf)$, by assumption, $\tau_{m+1} = \min \{s\in [0,1]; M(s;\bbf)\geq m+1\}$, and, as a result, we must have $M(s;\bbf) \leq m$ for all $s<\tau_{m+1}$. If $m= \Mmax(\bbf)$, then clearly $M(s;\bbf) \leq m$ for all $s<\tau_{m+1}$. Hence, $M(s;\bbf) \leq m$ for any $s \in [\tau_m, \tau_{m+1})$. Next, we will show $M(s,\bbf)\geq m$ for $s\in [\tau_m, \tau_{m+1})$. Suppose, by contradiction, $M(s,\bbf)< m$ for some $s_0\in [\tau_m, \tau_{m+1})$. Since for each strategy the applicants interviewed can be written as a union of a finite number of intervals, there exists an interval $s_0\in [a_1,b_1]\subset [\tau_m, \tau_{m+1}]$ with $a_1<b_1$ such that $M(s;\bbf) < m$ on this interval. Similarly, since $\tau_{m} = \min \{s\in [0,1]; M(s;\bbf)\geq m\}$, there exists an interval $[\tau_m, b_2]$ with $b_2>\tau_m$ such that $M(s;\bbf) = m$ on this interval. We notice that $b_2<a_1$ by construction and $[\tau_m, b_2]$ and $[a_1, b_1]$ are disjoint intervals. There are strictly fewer firms competing for applicants with score $s\in [a_1, b_1]$ then applicants with score $s\in[\tau_m, b_2]$. Therefore, there must exist some firm $i$ with $K_i = \{s\in [0,1]; f_i(s)=1\}$ such that firm $i$ interviews some applicants in $[\tau_m, b_2]$ but not all applicants in $[a_1, b_1]$. Mathematically, $K_i\cap  [\tau_m,b_2)\neq \emptyset$ and $K_i^c\cap [a_1,b_1]\neq \emptyset$. We also notice that  
\begin{align*}
    U_{M(s;\bbf)+1}(s)\geq U_{m}(s)>U_m(a_1)>U_m(b_2)
\end{align*}
for $s\in(a_1,b_1]$ since $U_n$ is strictly increasing in $s$ and decreasing in $n$. This implies that if firm $i$ interviews an applicant with score $s\in[a_1,b_1]$, the resulting utility is strictly higher than the utility firm $i$ can derive by interviewing an applicant with score $s\in [\tau_m, b_2]$. Therefore, firm $i$ can deviate from $f_i$ by moving support from $K_i\cap  [\tau_m,b_2)$ to $K_i^c\cap [a_1,b_1]$, where firm $i$ would earn strictly more utility. This is a contradiction to $\bbf$ being a Nash equilibrium, and hence it must be that $M(s,\bbf)\geq m$ for all $s\in[\tau_m, \tau_{m+1}]$. Since $M(s,\bbf)\leq m$ and $M(s,\bbf)\geq m$ for all $s\in[\tau_m, \tau_{m+1})$, we must have $M(s,\bbf)= m$ for all $s\in[\tau_m, \tau_{m+1})$.

We will then prove the \textbf{third condition}. Assume $\bbf$ is a Nash equilibrium and the thresholds are defined as in condition 2. We will show that $U_m(\tau_m) \leq U_{m+1}(\tau_{m+1})$ for all $m \leq \Mmax(\bbf)-1$. Consider two cases: $c=1$ and $c<1$. If $c=1$, then each firm will interview all applicants, and we have $\tau_1=\cdots = \tau_N = 0$. Accordingly, $U_1(\tau_1)=\cdots = U_N(\tau_N) = 0$. We get the desired result.

If $c<1$, we will first show that $\tau_1>0$. Suppose, by contradiction, $\tau_1=0$. Then there exists a firm $i$ and $\delta>0$ such that firm $i$ interviews all applicants with score in $[0, \delta]$. Since $c<1$, there exists an interval $[a,b]\subset [\delta, 1]$ with $a<b$ such that $f_i(s)=0$ for $s\in[a,b]$. Since $U_n(s)$ is continuous and monotone, $\min_{s\in[a,b]}U_{M(s;\bbf)+1}(s)>0$ and $U_n(0) = 0$, there exists $0<\epsilon\leq \delta$ such that $U_{M(s;\bbf)}(s) < \min_{s\in[a,b]}U_{M(s;\bbf)+1}(s)$ for $s\in [0,\epsilon]$. By construction, if firm $i$ interviews an applicant with score $s\in [a,b]$, the utility is strictly higher than interviewing any applicant with score $s\in[0, \epsilon]$. Therefore, firm $i$ can deviate from $f_i$ by moving support from $[0,\epsilon]$ to within $[a,b]$, where firm $i$ would earn strictly more utility. This is a contradiction to $\bbf$ being a Nash equilibrium. Hence, we must have $\tau_1>0$.

Next, we will show that $\tau_m \neq \tau_{m+1}$ for all $m=1, \cdots, \Mmax(\bbf)$. Suppose, by contradiction, $\tau_m = \tau_{m+1}$ for some $m$. Since $M(s,\bbf)<m$ for $s<\tau_m$ and $M(s,\bbf)\geq m+1$ for $s\geq\tau_{m+1}$, there exists firm $i$, $\delta_1>0$, and $\delta_2>0$ such that $f_i(s)=0$ if $s\in[\tau_m-\delta_1, \tau_m]$ and $f_i(s)=1$ if $s\in[\tau_{m+1}, \tau_{m+1}+\delta_2]$. We notice that $U_{m}(\tau_m)>U_{m+1}(\tau_{m}) = U_{m+1}(\tau_{m+1})$ since $U_n(s)$ is strictly decreasing in n. Since $U_n(s)$ is continuous, there exists $0<\epsilon_1<\delta_1$ and $0<\epsilon_2<\delta_2$ such that 
\begin{align*}
    U_{m}(\tau_m)>U_{m}(\tau_m-\epsilon_1)>U_{m+1}(\tau_{m+1}+\epsilon_2)>U_{m+1}(\tau_{m+1}).
\end{align*}
Therefore, the utility firm $i$ can derive by interviewing an applicant with score $s\in[\tau_m-\epsilon_1, \tau_m]$ is strictly higher than the utility it can derive by interviewing an applicant with score $s\in[\tau_{m+1}, \tau_{m+1}+\epsilon_2]$. As a result, firm $i$ can deviate from $f_i$ by moving support from $[\tau_{m+1}, \tau_{m+1}+\epsilon_2]$ to $[\tau_m-\epsilon_1, \tau_m]$. This is a contradiction to $f$ being a Nash equilibrium. Hence, we must have $\tau_m \neq \tau_{m+1}$ for all $m=1, \cdots, \Mmax(\bbf)$.

We will then show that $U_m(\tau_m) \leq U_{m+1}(\tau_{m+1})$ for all $m< \Mmax(\bbf)$. Suppose, by contradiction, $U_m(\tau_m) > U_{m+1}(\tau_{m+1})$ for some $m=1, \cdots, \Mmax(\bbf)-1$. Since $\tau_1>0$ and $U_n(s)$ is strictly increasing in $s$, there exists $s_1<\tau_m$ and $s_2\in [\tau_{m+1}, \tau_{m+2}]$ such that $U_m(\tau_m)>U_m(s_1)>U_{m+1}(s_2)>U_{m+1}(\tau_{m+1})$. We also notice that there exists at least one firm $i$ such that, under $\bbf$, firm $i$ interviews some applicants with score in $[\tau_{m+1}, s_2]$ but not all applicants with score in $[s_1, \tau_m]$. If firm $i$ interviews an applicant with score $s\in[s_1, \tau_m]$, the utility firm $i$ can derive is at least $U_{m}(s_1)$, which is strictly higher than the utility derived by interviewing an applicant with score $s\in [\tau_{m+1}, s_2]$. Therefore, firm $i$ can deviate from $f_i$ by moving support from $[\tau_{m+1}, s_2]$ to within $[s_1, \tau_m]$ to earn strictly more utility. This contradicts to the assumption that $f$ is a Nash equilibrium. Therefore, we must have $U_m(\tau_m) \leq U_{m+1}(\tau_{m+1})$ for all $m < \Mmax(\bbf)$.

Finally, we will prove the \textbf{fourth condition}. Suppose $U_n(\tau_n) < U_m(\tau_m)$ for some $n<m\leq \Mmax(\bbf)+1$. Let firm $i$ be a firm such that $f_i(s_1)=1$ and $U_n(s_1)<U_m(\tau_m)$ for some score $s_1\in[\tau_n, \tau_{n+1}]$. We will show that $f_i(s_2) = 1$ for any $s_2\in[\tau_{m-1}, \tau_m]$ with $U_m(s_2)>U_n(s_1)$. Suppose, in contradiction, $f_i(s_2) = 0$ for some $s_2\in[\tau_{m-1}, \tau_m]$ with $U_m(s_2)>U_n(s_1)$. Since the support of firm $i$ can be written as a union of intervals and $U_n(s)$ is continuous, there exists an interval such that $s_1\in [a_1, b_1]$ with $a_1<b_1$ and $f_i(s)=1$ for $s\in[a_1, b_1]$. Similarly, there exists an interval such that $s_2\in[a_2,b_2]$ with $a_2<b_2$, $f_i(s)=0$ for $s\in[a_2, b_2]$, and $U_m(s)>U_n(b_1)$ for $s\in[a_2,b_2]$. We notice that the utility firm $i$ can derive by interviewing an applicant with score $s\in [a_1, b_1]$ is strictly less than the utility firm $i$ can derive by interviewing an applicant with score $s\in [a_2, b_2]$. Therefore, firm $i$ can increase its utility by moving some support from $[a_1, b_1]$ to $[a_2, b_2]$. This is a contradiction to $\bbf$ being a Nash equilibrium. Hence, we must have $f_i(s_2) = 1$ for any $s_2\in[\tau_{m-1}, \tau_m]$ with $U_m(s_2)>U_n(s_1)$.

We will prove the \textbf{other direction} of the statement which states that any strategy profile $\bbf$ that satisfies the four conditions listed is a Nash equilibrium. We want to show that any firm $i$ cannot earn a higher utility by deviating from its current strategy profile. Let $f_i$ be the current strategy profile of firm $i$ and $f'_i$ be an arbitrary alternative strategy profile. We will show that $u(f_i, f_{-i})\geq u(f'_i, f_{-i})$ if $\bbf $ satisfies the four conditions.

Denote $K_i = \{s\in[0,1], f_i(s)=1\}$. Under $\bbf$, we can express the utility of firm $i$ as 
\begin{align}
    u(f_i, f_{-i}) &= \int_{K\cap K'} U_{M(s, \mathbf{f})}(s) \varphi(s)ds + \int_{K\setminus K'}U_{M(s, \mathbf{f})}(s)\varphi(s)ds.  \label{equ:18}
\end{align}
Let $\bbf'$ denote the strategy profile after firm $i$ deviates from $f_i$, and denote $K'_i = \{s\in[0,1], f'_i(s)=1\}$. We notice that $M(s, \mathbf{f'})=M(s, \mathbf{f})+1$ if $s\in K'\setminus K$, and $M(s, \mathbf{f'})=M(s, \mathbf{f})$ if $s\in K\cap K'$. Therefore, the utility under the alternative strategy is 
\begin{align}
    u(f'_i, f_{-i}) & = \int_{K\cap K'}U_{M(s, \mathbf{f'})}(s) \varphi(s)ds + \int_{K'\setminus K}U_{M(s, \mathbf{f'})}(s) \varphi(s)ds \nonumber \\
    & = \int_{K\cap K'}U_{M(s, \mathbf{f})}(s) \varphi(s)ds + \int_{K'\setminus K}U_{M(s, \mathbf{f})+1}(s) \varphi(s)ds.  \label{equ:20}
\end{align}

To show $u(f_i, f_{-i})\geq u(f'_i, f_{-i})$, we only need to show the second part of equation (\ref{equ:20}) is less or equal to the second part of equation (\ref{equ:18}). To show this, we will prove that for any $s\in K_i\setminus K'_i$, $U_{M(s, \mathbf{f})}(s)\geq U_{M(s', \mathbf{f})+1}(s')$ for all $s\in K_i'\setminus K_i$. Let $s\in K_i\setminus K'_i$, $s'\in K_i'\setminus K_i$ be arbitrary. Suppose $s\in [\tau_m, \tau_{m+1}]$ and $s'\in [\tau_n, \tau_{n+1}]$ for some $m,n \in [\Mmax(\bbf)]$. Consider the following two cases: $U_{m}(\tau_m) \geq U_{n+1}(\tau_{n+1})$ or $U_{m}(\tau_m) < U_{n+1}(\tau_{n+1})$.

If $U_{\tau_m}(\tau_m) \geq U_{\tau_{n+1}}(\tau_{n+1})$, by monotonicity of $U_n(s)$, we have 
\begin{align*}
    U_{m}(s)\geq U_{m}(\tau_m)\geq U_{n+1}(\tau_{n+1})\geq U_{n+1}(s').
\end{align*}
Hence $U_{m}(s) \geq U_{n+1}(s')$. 

If $U_{m}(\tau_m) < U_{n+1}(\tau_{n+1})$, by condition 3, we must have $m
<n+1$. Suppose, by contradiction, $U_{m}(s) < U_{n+1}(s')$. By condition 4, given that $s\in K_i\setminus K'_i$,  we must have $s'\in K_i$ as well. This is a contradiction to $s'\in K_i'\setminus K_i$. Hence we must have $U_{m}(s) \geq U_{n+1}(s')$.

Therefore, we conclude that for any $s\in K_i\setminus K'_i$, $s'\in K_i'\setminus K_i$ , we must have $U_{M(s, \mathbf{f})}(s)\geq U_{M(s', \mathbf{f})+1}(s')$. As a result, $\min_{s\in K_i\setminus K'_i}U_{M(s, \mathbf{f})}(s)\geq \max_{s'\in K_i'\setminus K_i}U_{M(s', \mathbf{f})+1}(s')$. Furthermore, by condition 1, $\Pr(S\in K_i) = \Pr(S\in K'_i) = c$. Hence, 
\begin{align*}
    \Pr(S\in K_i\setminus K'_i) &=\Pr(S\in K_i) - \Pr(S\in K_i\cap K'_i)\\
    &= Pr(S\in K'_i)- \Pr(S\in K_i\cap K'_i)\\
    & = \Pr(S\in K'_i\setminus K_i)
\end{align*}
We can rewrite equation (\ref{equ:20}) as follows:
\begin{align*}
    u(f'_i, f_{-i}) & \leq \int_{K_i\cap K'_i}U_{M(s, \mathbf{f})}(s) \varphi(s)ds + \Pr(S\in K'_i\setminus K_i)(\max_{s'\in K_i'\setminus K_i}U_{M(s', \mathbf{f})+1}(s') )\\
    & \leq \int_{K_i\cap K'_i}U_{M(s, \mathbf{f})}(s) \varphi(s)ds + \Pr(S\in K_i\setminus K'_i)(\min_{s\in K_i\setminus K'_i}U_{M(s, \mathbf{f})}(s))\\
    & \leq \int_{K\cap K'} U_{M(s, \mathbf{f})}(s) \varphi(s)ds + \int_{K\setminus K'}U_{M(s, \mathbf{f})}(s)\varphi(s)ds\\
    & = u(f_i, f_{-i}).
\end{align*}

Therefore, $u(f'_i, f_{-i}) \leq u(f_i, f_{-i})$ for any arbitrary firm $i$ and alternative strategy $f'_i$. We conclude that none of the firms is able to increase their utility by deviating from the current strategy profile, and $\bbf$ is at Nash equilibrium.

\end{myproof}

\subsection{Proof of Proposition 3.1}
\begin{myproof}

Fix an instance $I=(N,c,D,\theta)$ and suppose that the utility functions
$\{U_m\}_{m=1}^N$ satisfy Assumption~2.1. We will show that we can always find an equal-utility Nash equilibrium strategy profile.

Fix a scalar $y \in [0,U_1(1)]$. For each $m\in\{1,\dots,N\}$ define
\[
\tau_m(y)
\;\equiv\;
\inf\{s\in[0,1]: U_m(s)\ge y\}.
\]
By Assumption~2.1, each $U_m$ is continuous and strictly increasing with
$U_m(0)=0$, so $\tau_m(y)$ is well-defined and satisfies
$U_m(\tau_m(y))=y$ whenever $\tau_m(y)\in(0,1)$. Moreover, since utilities are
weakly decreasing in $m$, we have
\[
\tau_1(y)\le \tau_2(y)\le \cdots \le \tau_N(y).
\]
Let $\tau_0(y)=0$, $\tau_{N+1}(y)=1$ and  $\Phi(s)=\int_0^s \varphi(t)\,dt$ be the score CDF. If applicants with
scores in $[\tau_m(y),\tau_{m+1}(y))$ are interviewed by exactly $m$ firms, the
total interview mass is
\[
T(y)
\;\equiv\;
\sum_{m=1}^N
m\bigl(\Phi(\tau_{m+1}(y))-\Phi(\tau_m(y))\bigr).
\]
Equivalently,
\[
T(y)=\sum_{k=1}^N \Pr\bigl(S\ge \tau_k(y)\bigr).
\]

Each mapping $y\mapsto \tau_m(y)$ is continuous, and hence $T(y)$ is continuous.
Moreover, $T(y)$ is weakly decreasing in $y$. At the endpoints,
\[
T(0)=N,
\qquad
T(U_1(1))=0.
\]
Since $Nc\in(0,N]$, by the Intermediate Value Theorem there exists
$y^\star\in[0,U_1(1)]$ such that
\[
T(y^\star)=Nc.
\]
Define $\tau_m^\star\equiv\tau_m(y^\star)$.

Next, we will construct a measurable strategy profile
$\bbf=(f_1,\dots,f_N)$ such that:
\begin{enumerate}
\item For each $m$, applicants with scores
$s\in[\tau_m^\star,\tau_{m+1}^\star)$ are interviewed by exactly $m$ firms.
\item Each firm interviews total mass exactly $c$.
\end{enumerate}
For each interval $[\tau_m^\star,\tau_{m+1}^\star)$, partition it into $N$
measurable subintervals of equal $\varphi$-mass, and on subinterval $j$ assign
the interviewing set $\{j,j+1,\dots,j+m-1\}$ modulo $N$. Then each applicant in
that interval is interviewed by exactly $m$ firms, and symmetry implies that
each firm receives a fraction $m/N$ of the mass of that interval. Therefore,
each firm $i$ satisfies
\[
\Pr(f_i(S)=1)
=
\frac{1}{N}\sum_{m=1}^N
m\bigl(\Phi(\tau_{m+1}^\star)-\Phi(\tau_m^\star)\bigr)
=
\frac{T(y^\star)}{N}
=
c.
\]

Since the conditions in Theorem \ref{thm:equilibrium} are all satisfied, $\bbf$ is an equal-utility Nash equilibrium.

\end{myproof}

\subsection{Proof of Proposition 3.2}
\begin{myproof}
    Let $\cD$ be an arbitrary distribution and fix the capacity $c\in (0,1]$. Suppose $\cI(N) = (N, c, \cD, \corr)$ is an instance with $N$ firms and $\mathbf{f_N}$ is a Nash equilibrium for the instance. 
    
    We will first show that $\tau_m \geq m \tau_1$ for all $m\in [\Mmax(\mathbf{f_N})]$. For $m =1$, we have $\tau_m = \tau_1$ and the inequality holds for $m=1$. For $1<m\leq \Mmax(\mathbf{f_N})$, the third condition in Theorem \ref{thm:equilibrium} gives $U_1(\tau_1)\leq U_m(\tau_m)$.  Under the correlated decision rule, the utility function can be expressed as  $U_n(s) = s/n$. Therefore, the above condition can be written as $\tau_1\leq \frac{\tau_m}{m}$. Hence, $m\tau_1\leq \tau_m$ for all $1<m\leq \Mmax(\bbf)$. Therefore, we can conclude that $\tau_m \geq m \tau_1$ for all $m\in [\Mmax(\mathbf{f_N})]$. 

    Next, we want to show that if $\mathbf{f_N}$ is an equal-utility Nash equilibrium, then $\tau_m = m \tau_1$ for all $m\in [\Mmax(\mathbf{f_N})]$. For $m =1$, we have $\tau_m = \tau_1$ and the equality holds. For $1<m\leq \Mmax(\mathbf{f_N})$, by Definition \ref{def:types_of_NE}, $U_1(\tau_1)= U_m(\tau_m)$. Since the instance is under the correlated decision rule, we have $m\tau_1= \tau_m$ for all $1<m\leq \Mmax(\mathbf{f_N})$. Hence, we've shown that $\tau_m = m \tau_1$ for all $m\in [\Mmax(\mathbf{f_N})]$.
\end{myproof}

\subsection{Derivation of $U_n(s)$ when $\theta = \indep$} \label{deri:U_n_indep}
When hiring decisions are independent, the probability that a firm successfully hires an applicant of score $s$ if they interview them is

\begin{align}
    U_n(s) &= s(\sum_{i=0}^{n-1} \binom{n-1}{i} s^i (1-s)^{n-1-i} \frac{1}{i+1}) \nonumber\\
    & = \frac{s}{ns}(\sum_{i=0}^{n-1} \binom{n}{i+1} s^{i+1} (1-s)^{n-1-i})\nonumber\\
    & = \frac{1}{n}(1-(1-s)^n) \label{equ:Un(s)_indep}.
\end{align}

\subsection{Proof of Proposition \ref{prop:NE_indep}}
Before proving Proposition \ref{prop:NE_indep}, let's first prove the following lemmas:

\begin{lemma} \label{lemma:3.2.1}
     For any distribution $\cD$, capacity $c \in (0, 1)$ and decision rule $\theta \in \{\indep, \corr\}$,
    let $\cI(N) = (N, c, \cD, \theta)$ is the instance parameterized by  number of firms $N$. Let $\mathbf{f_N}$ be a Nash equilibrium for instance $\cI(N)$. Then, $lim_{N\rightarrow \infty} \Mmax(\mathbf{f_N}) = \infty$.
\end{lemma}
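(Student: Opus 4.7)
The plan is to show the stronger quantitative bound $\Mmax(\mathbf{f_N}) \geq Nc$, from which the limit statement follows immediately since $c > 0$ is fixed and $N \to \infty$. The key idea is a simple averaging argument: the total interview mass in the market is $Nc$, while the mass of applicants is $1$, so some applicant must be interviewed by at least $Nc$ firms.

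First, I would invoke Condition~1 of Theorem~\ref{thm:equilibrium}, which has already been established: at any Nash equilibrium $\mathbf{f_N}$, every firm uses its full capacity, i.e.\ $\Pr(f_i(S) = 1) = c$ for all $i \in [N]$. Equivalently, $\int_0^1 f_i(s)\varphi(s)\,ds = c$ for every $i$.

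Next, I would sum over firms and swap the sum and the integral. Using $M(s; \mathbf{f_N}) = \sum_{i=1}^N f_i(s)$, this yields
\begin{align*}
\int_0^1 M(s; \mathbf{f_N})\,\varphi(s)\,ds \;=\; \sum_{i=1}^N \int_0^1 f_i(s)\,\varphi(s)\,ds \;=\; Nc.
\end{align*}
On the other hand, by the definition of $\Mmax(\mathbf{f_N}) = \max_{s \in [0,1]} M(s; \mathbf{f_N})$, the pointwise bound $M(s; \mathbf{f_N}) \leq \Mmax(\mathbf{f_N})$ gives
\begin{align*}
\int_0^1 M(s; \mathbf{f_N})\,\varphi(s)\,ds \;\leq\; \Mmax(\mathbf{f_N}) \int_0^1 \varphi(s)\,ds \;=\; \Mmax(\mathbf{f_N}),
\end{align*}
since $\varphi$ is a density. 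Combining the two displays gives $\Mmax(\mathbf{f_N}) \geq Nc$, and taking $N \to \infty$ proves $\lim_{N\to\infty}\Mmax(\mathbf{f_N}) = \infty$.

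There is no real obstacle in this argument; the only subtlety is that one must have already used Condition~1 of Theorem~\ref{thm:equilibrium} (full-capacity utilization at equilibrium) before this lemma is invoked, which is indeed the case. The argument works uniformly for both $\theta = \corr$ and $\theta = \indep$ since it depends only on feasibility and the equilibrium-level capacity identity, not on the specific form of $U_n(s)$.
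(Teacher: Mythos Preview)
Your proof is correct and follows essentially the same averaging argument as the paper: both establish the quantitative bound $\Mmax(\mathbf{f_N}) \geq Nc$ by computing $\int_0^1 M(s;\mathbf{f_N})\varphi(s)\,ds = Nc$ and bounding it above by $\Mmax(\mathbf{f_N})$. The only cosmetic difference is that the paper invokes the threshold structure (Condition~2 of Theorem~\ref{thm:equilibrium}) to write the integral as $\sum_m m\int_{\tau_m}^{\tau_{m+1}}\varphi(s)\,ds$ before bounding, whereas you bound $M(s)\leq \Mmax$ directly and skip Condition~2 entirely; your version is slightly cleaner for that reason.
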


\begin{myproof}
    We will first show that $Nc\leq \Mmax(\mathbf{f_N})$. By condition 1 of Theorem \ref{thm:equilibrium}, $\Pr(f_i(S)=1)=c$ for each firm $i$. As a result, the total capacity of $N$ firms is $\sum_{i=1}^N \Pr(f_i(S)=1) = Nc$. By condition 2 of Theorem \ref{thm:equilibrium}, there exists thresholds $\tau_1, \tau_2, \cdots, \tau_{\Mmax(\mathbf{f_N})}$ such that $M(s; \mathbf{f_N}) = m$ for $s\in [\tau_m, \tau_{m+1})$ for $m = [\Mmax(\mathbf{f_N})]$. Therefore, the total capacity of $N$ firms can also be written as follows:

    \begin{align*}
        Nc & = \Pr(S\in [\tau_1, \tau_2]) + 2 \Pr(S\in [\tau_2, \tau_3])\cdots + \Mmax(\mathbf{f_N}) \Pr(S\in [\tau_{\Mmax(\bbf)}, 1])\\
        &= \int_{\tau_1}^{\tau_2} \varphi(s)ds + 2\int_{\tau_2}^{\tau_3} \varphi(s)ds  \cdots +\Mmax(\mathbf{f_N})\int_{\tau_{\Mmax(\mathbf{f_N})}}^{1} \varphi(s)ds\\
        & \leq \Mmax(\mathbf{f_N})\int_{\tau_1}^{\tau_2} \varphi(s)ds + \Mmax(\mathbf{f_N})\int_{\tau_2}^{\tau_3} \varphi(s)ds  \cdots +\Mmax(\mathbf{f_N})\int_{\tau_{\Mmax(\bbf)}}^{1} \varphi(s)ds\\
        & = \Mmax(\mathbf{f_N})\int_{\tau_1}^{1} \varphi(s)ds \\
        & \leq \Mmax(\mathbf{f_N})\int_{0}^{1} \varphi(s)ds\\
        & = \Mmax(\mathbf{f_N}).
    \end{align*}

Hence, $Nc\leq \Mmax(\mathbf{f_N})$. As $N$ goes to infinity, for the inequality to hold, we must have $\Mmax(\mathbf{f_N})$ go to infinity as well. Therefore, $\lim_{N\rightarrow \infty} \Mmax(\mathbf{f_N}) = \infty$.     
\end{myproof}

\begin{lemma} \label{lemma:3.2.2}
    For any distribution $\cD$ and capacity $c \in (0, 1)$,
    let $\cI(N) = (N, c, \cD, \indep)$ is the instance parameterized by  number of firms $N$. Let $\mathbf{f_N}$ be a Nash equilibrium for instance $\cI(N)$. Then $\lim_{N\rightarrow \infty} \tau_{\Mmax(\bbf)-1}= 0$.
\end{lemma}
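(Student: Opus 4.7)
My plan is to combine Lemma 3.2.1, which ensures that $M := \Mmax(\mathbf{f_N}) \to \infty$ as $N \to \infty$, with a direct bound on $\tau_{M-1}$ obtained from Condition 3 of \cref{thm:equilibrium} and the explicit form of $U_n(s)$ under the independent decision rule. Since $M \to \infty$, for all sufficiently large $N$ we have $M \geq 2$, so $\tau_{M-1}$ is well defined (and for $M = 1$ the statement is trivial since $\tau_0 = 0$). It therefore suffices to show that $\tau_{M-1}$ tends to $0$ as $M \to \infty$.

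The key inequality comes from Condition 3 of \cref{thm:equilibrium} applied to the consecutive indices $n = M-1$ and $m = M$ (which satisfy $n < m \leq \Mmax(\mathbf{f_N})$, as required):
\begin{equation*}
U_{M-1}(\tau_{M-1}) \;\leq\; U_M(\tau_M) \;\leq\; U_M(1),
\end{equation*}
where the second inequality uses monotonicity of $U_M(s)$ in $s$. Under $\theta = \indep$ we have $U_M(1) = (1 - 0^M)/M = 1/M$, so $U_{M-1}(\tau_{M-1}) \leq 1/M$.

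Next, I will substitute the explicit formula $U_{M-1}(s) = (1 - (1-s)^{M-1})/(M-1)$ derived in \cref{deri:U_n_indep}. The bound becomes
\begin{equation*}
\frac{1 - (1-\tau_{M-1})^{M-1}}{M-1} \;\leq\; \frac{1}{M},
\end{equation*}
which rearranges to $(1-\tau_{M-1})^{M-1} \geq 1/M$, i.e.\ $\tau_{M-1} \leq 1 - M^{-1/(M-1)}$. Writing $M^{-1/(M-1)} = \exp(-\log M/(M-1))$ and noting $\log M / (M-1) \to 0$, the upper bound tends to $0$ as $M \to \infty$. Since $\tau_{M-1} \geq 0$, we conclude $\lim_{N \to \infty} \tau_{M-1} = 0$.

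There is no substantive obstacle in this approach; the argument is essentially a one-line consequence of the equilibrium conditions once one squeezes $U_{M-1}(\tau_{M-1})$ between two quantities that vanish in $M$. The main step to be careful about is that the bound $U_M(\tau_M) \leq 1/M$ genuinely uses the independent-scheme identity $U_n(1) = 1/n$, which tends to $0$; an analogous argument would fail under $\theta = \corr$ since there $U_n(1) = 1/n$ as well (still vanishing) but the monotone structure of the thresholds differs, as reflected in the contrast with \cref{prop:NE_shared}. Tracking the dependence of the thresholds on $N$ through the subscript $M = M(N)$ is the only notational care required.
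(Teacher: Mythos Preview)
Your proposal is correct and follows essentially the same route as the paper: both arguments bound $U_{M-1}(\tau_{M-1}) \le U_M(1) = 1/M$ using the equilibrium threshold condition, invert the explicit formula $U_{M-1}(s) = (1-(1-s)^{M-1})/(M-1)$ to obtain $\tau_{M-1} \le 1 - M^{-1/(M-1)}$, and observe this upper bound tends to $0$ via Lemma~\ref{lemma:3.2.1}. Your derivation is in fact slightly more streamlined, since you apply Condition~3 of Theorem~\ref{thm:equilibrium} directly rather than phrasing the bound $t \le 1/M$ as a contradiction and then separately checking monotonicity of the inverse map $t \mapsto 1-(1-(M-1)t)^{1/(M-1)}$.
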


\begin{myproof}
    Suppose $\bbf$ is a Nash equilibrium. $U(\tau_{\Mmax(\mathbf{f_N})-1}, \Mmax(\bbf)-1) =t$. When the hiring decisions are independent, by equation \ref{equ:Un(s)_indep} we have 
    \begin{align*}
        U(\tau_{\Mmax(\mathbf{f_N})-1}, \Mmax(\mathbf{f_N})-1) = \frac{1}{\Mmax(\mathbf{f_N})-1}(1-(1-\tau_{\Mmax(\mathbf{f_N})-1})^{\Mmax(\mathbf{f_N})-1}) = t.
    \end{align*}
    As a result, we can express $\tau_{\Mmax-1}$ as
    \begin{align*}
        \tau_{\Mmax(\mathbf{f_N})-1} = 1-(1-(\Mmax(\mathbf{f_N})-1)t)^{\frac{1}{\Mmax(\mathbf{f_N})-1}}.
    \end{align*}
    Define $f(t) = 1-(1-(\Mmax(\mathbf{f_N})-1)t)^{\frac{1}{\Mmax(\mathbf{f_N})-1}}$, we will show $f(t)$ is strictly decreasing in $t$. 
    
    We will first show that $t\leq\frac{1}{\Mmax(\mathbf{f_N})}$. Suppose, by contradiction, $t>\frac{1}{\Mmax(\mathbf{f_N})}$. Since $\frac{1}{\Mmax(\mathbf{f_N})}=U(1,\Mmax(\mathbf{f_N}))$, we must have $U(\tau_{\Mmax(\mathbf{f_N})-1}, \Mmax(\mathbf{f_N})-1)>U(1,\Mmax(\mathbf{f_N}))$. By condition 4 of Theorem \ref{thm:equilibrium}, $\mathbf{f_N}$ is not a Nash equilibrium, which leads to a contradiction. Therefore, we must have $t\leq\frac{1}{\Mmax(\mathbf{f_N})}$.

    Next, we will show that $f(t)$ is strictly increasing in t. The derivative of $f(t)$ is

    \begin{align*}
        f'(t) &= (\Mmax(\mathbf{f_N})-1)\frac{1}{\Mmax(\mathbf{f_N})-1}(1-(\Mmax(\mathbf{f_N})-1)t)^{\frac{2-\Mmax(\mathbf{f_N})}{\Mmax(\mathbf{f_N})-1}}\\
        & =(1-(\Mmax(\mathbf{f_N})-1)t)^{\frac{2-\Mmax(\mathbf{f_N})}{\Mmax(\mathbf{f_N})-1}}.
    \end{align*}

    Given that $t\leq\frac{1}{\Mmax(\bbf)}$, we have 
    \begin{align*}
        1-(\Mmax(\mathbf{f_N})-1)t\geq 1-(\Mmax(\mathbf{f_N})-1)\frac{1}{\Mmax(\mathbf{f_N})} = \frac{1}{\Mmax(\mathbf{f_N})} > 0.
    \end{align*}

    Hence, $f'(t)>0$ and $f(t)$ increases with $t$. Given that $t\leq\frac{1}{\Mmax(\mathbf{f_N})}$, we get

    \begin{align*}
        \tau_{\Mmax(\mathbf{f_N})-1} \leq 1-(1-\frac{\Mmax(\mathbf{f_N})-1}{\Mmax(\mathbf{f_N})})^{\frac{1}{\Mmax(\mathbf{f_N})-1}} = 1-\frac{1}{\Mmax(\mathbf{f_N})^{\frac{1}{\Mmax(\mathbf{f_N})-1}}}.
    \end{align*}

    Taking the logarithm of $\Mmax(\mathbf{f_N})^{\frac{1}{\Mmax(\mathbf{f_N})-1}}$ we can show that $\lim_{\Mmax(\mathbf{f_N})\rightarrow \infty} \Mmax(\mathbf{f_N})^{\frac{1}{\Mmax(\mathbf{f_N})-1}} = 1$. Hence, $\lim_{\Mmax(\mathbf{f_N})\rightarrow \infty} \tau_{\Mmax(\mathbf{f_N})-1}= 0$. By Lemma \ref{lemma:3.2.1}, $\Mmax(\mathbf{f_N})\rightarrow \infty$ as $N \rightarrow \infty$, we must have
    \begin{align*}
        \lim_{N\rightarrow \infty} \tau_{\Mmax(\bbf)-1}= 0.
    \end{align*}
\end{myproof}

\begin{myproof} [\text{Proof of Proposition \ref{prop:NE_indep}}]
    We first notice that $|M(s; \mathbf{f_N}) - \Mmax(\mathbf{f_N})|>1$ if and only if $M(s; \mathbf{f_N})< \Mmax(\mathbf{f_N})-1$ for any $s\in [0,1]$. By condition 2 of Theorem \ref{thm:equilibrium}, there exists thresholds $\tau_1, \tau_2, \cdots, \tau_{\Mmax(\bbf)}$ such that $M(s; \mathbf{f_N}) = m$ for $s\in [\tau_m, \tau_{m+1}]$ for $m = [\Mmax(\bbf)]$. Therefore, $M(s; \mathbf{f_N})< \Mmax(\mathbf{f_N})-1$ is equivalent to $S\in[0,\tau_{\Mmax(\mathbf{f_N})-1})$.
    \begin{align*}
        \Pr(|M(S; \mathbf{f_N}) - \Mmax(\mathbf{f_N})|>1) 
        & = \Pr(M(S; \mathbf{f_N})< \Mmax(\mathbf{f_N})-1)\\
        & = \Pr(S\in[0,\tau_{\Mmax(\mathbf{f_N})-1})).
    \end{align*}
    By lemma \ref{lemma:3.2.2}, $\lim_{N\rightarrow \infty} \tau_{\Mmax(\bbf)-1}= 0$. As a result, we must have $\lim_{N\rightarrow \infty} \Pr(S\in[0,\tau_{\Mmax(\mathbf{f_N})-1})) = 0$ and $\lim_{N\rightarrow \infty} \Pr(|M(S; \mathbf{f_N}) - \Mmax(\mathbf{f_N})|>1) = 0$ as desired.
\end{myproof}

\subsection{Proof of Theorem \ref{thm:SW1}}

\begin{myproof}
Let $(f_1^{Naive},\dots,  f_N^{Naive})$ be the strategy profile for the Naive solution, and let  $(f_1^{NE},\dots,  f_N^{NE})$ be a Nash equilibrium solution.
We will show $u(f_i^{NE}, f_{-i}^{NE}) > u(f_i^{Naive}, f_{-i}^{Naive})$ for every firm $i \in [N]$, which will imply $\SW_{\text{NE}} > \SW_\text{naive}$.
Let $s_c = \min \{s \in [0, 1]: f_1^{Naive}(s) = 1\}$ be the score threshold in which the naive strategy interviews everyone above $s_c$.

Fix a firm $i \in [N]$.
We will show the following two inequalities:
\begin{align} \label{eq:twoinequalities}
u(f_i^{NE}, f_{-i}^{NE}) 
\geq u(f_i^{Naive}, f_{-i}^{NE})
\geq u(f_i^{Naive}, f_{-i}^{Naive}).
\end{align}
The first inequality holds by definition of a Nash equilibrium.
The second inequality holds since if firm $i$ chooses the naive strategy $f_i^{Naive}$, its utility is the lowest when all other firms choose the naive strategy.
This is because for any score $s \geq s_c$, the total number of firms competing for applicant $s$ is $N$ under $(f_i^{Naive}, f_{-i}^{Naive})$, but may be smaller under $(f_i^{Naive}, f_{-i}^{NE})$.

We will show that at least one of the two inequalities in \eqref{eq:twoinequalities} is strict. Suppose, by contradiction, that both are equalities.
We will first show that $f_{-i}^{NE} = f_{-i}^{Naive}$.
Under $(f_i^{Naive}, f_{-i}^{Naive})$, every applicant that firm $i$ interviews is interviewed by $N$ firms in total.
For the equality $u(f_i^{Naive}, f_{-i}^{NE}) = u(f_i^{Naive}, f_{-i}^{Naive})$ to hold, it must be that under the strategy profile $(f_i^{Naive}, f_{-i}^{NE})$, every applicant that firm $i$ interviews is also interviewed by $N$ firms in total. If there was an applicant interviewed by strictly fewer firms, then it would be that $u(f_i^{Naive}, f_{-i}^{NE}) > u(f_i^{Naive}, f_{-i}^{Naive})$, since the utility function $U_n(s)$ are strictly monotonic in $n$.
Therefore, it must be that $f_{-i}^{NE} = f_{-i}^{Naive}$.

We will now show that firm $i$ can deviate from $f_i^{NE}$ to strictly improve their utility.
Since we assume $u(f_i^{NE}, f_{-i}^{NE}) 
= u(f_i^{Naive}, f_{-i}^{NE})$, and that $f_{-i}^{NE} = f_{-i}^{Naive}$, we will show that firm $i$ can strictly improve from $f_i^{Naive}$, when all other firms are at $f_{-i}^{Naive}$.
Let $s' < s_c$ be a score such that $U_{1}(s') > U_N(s_c)$, which exists since $U_{1}(s_c) > U_{N}(s_c)$ and $U_{1}(s)$ is a continuous and strictly increasing function in $s$.
Under $(f_i^{Naive}, f_{-i}^{Naive})$, no firm is interviewing scores in $[s', s_c]$.
Then, if firm $i$ interviews an applicant with score $s \in [s', s_c]$, their utility is strictly higher than $U_N(s_c)$.
Therefore, there exists an $\epsilon > 0$ where firm $i$ can deviate from the naive strategy by moving support from $[s_c, s_c+\epsilon]$ to within $[s', s_c]$, where they would earn strictly more utility.
This is a contradiction to $(f_i^{NE}, f_{-i}^{Ne})$ being a Nash equilibrium, and hence it must be that one of the two inequalities in \eqref{eq:twoinequalities} is strict.
Therefore, $\SW_{\text{NE}} > \SW_\text{naive}$.
\end{myproof}

\subsection{Proof of Theorem \ref{thm:SW_PoNS2}}
We will first prove the following lemma before proving the theorem.

\begin{lemma} \label{lemma:4.2.1}
For any distribution $\cD$ and decision rule $\theta\in\{\corr, \indep\}$, let $\cI(c,N) = (N, c, \cD, \theta)$ be the instance parameterized by capacity $c$ and number of firms $N$, and $\mathbf{f_c}$ be a Nash equilibrium for instance $\cI(c,N)$. If $Nc<\int_{0.5}^1 \varphi(s)ds$, we must have $\Mmax(\mathbf{f_c}) = 1$ and $\Pr(S\in[\tau_1, 1]) = Nc$. The social welfare under $\mathbf{f_c}$ is 
    \begin{align*} 
        \SW_\text{NE}(\cI(c)) & =\int_{\tau_1}^{1}U_1(s)\varphi(s)ds. 
    \end{align*}
\end{lemma}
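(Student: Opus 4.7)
The plan is to proceed by contradiction: assume $\Mmax(\mathbf{f_c}) \geq 2$ and derive a contradiction with the hypothesis $Nc < \int_{0.5}^1 \varphi(s) ds$. The key observation is that a small total capacity forces firms into a pure no-overlap regime, because the utility at the second threshold is bounded by $1/2$, which constrains where $\tau_1$ can lie. Once $\Mmax = 1$ is established, the two remaining claims follow directly from condition 1 of Theorem \ref{thm:equilibrium} and the definition of social welfare.

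\textbf{Step 1: Bound $\tau_1$ via the utility inequality.} Assume for contradiction that $\Mmax(\mathbf{f_c}) \geq 2$, so $\tau_2 < 1$. By condition 3 of Theorem \ref{thm:equilibrium}, $U_1(\tau_1) \leq U_2(\tau_2)$. In both schemes $U_1(s) = s$ (for $\indep$, $U_1(s) = (1-(1-s))/1 = s$), and $U_2(1) = 1/2$. Since $U_2(s)$ is strictly increasing in $s$, $\tau_2 < 1$ gives $U_2(\tau_2) < 1/2$, so $\tau_1 = U_1(\tau_1) \leq U_2(\tau_2) < 1/2$.

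\textbf{Step 2: Counting contradiction.} Let $p_m = \Pr(S \in [\tau_m, \tau_{m+1}))$. Condition 1 of Theorem \ref{thm:equilibrium} gives $\sum_i \Pr(f_i(S)=1) = Nc$, which by condition 2 rewrites as $\sum_{m=1}^{\Mmax} m\, p_m = Nc$. Since $m \geq 1$ in each term,
\begin{align*}
Nc = \sum_{m=1}^{\Mmax} m\, p_m \;\geq\; \sum_{m=1}^{\Mmax} p_m \;=\; \Pr(S \geq \tau_1) \;\geq\; \Pr(S \geq 0.5) \;=\; \int_{0.5}^1 \varphi(s)\, ds,
\end{align*}
where the second-to-last inequality uses $\tau_1 < 1/2$ from Step 1. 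This contradicts the hypothesis $Nc < \int_{0.5}^1 \varphi(s) ds$, so $\Mmax(\mathbf{f_c}) = 1$.

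\textbf{Step 3: Conclude.} With $\Mmax(\mathbf{f_c}) = 1$, condition 2 of Theorem \ref{thm:equilibrium} gives $M(s;\mathbf{f_c}) = 1$ on $[\tau_1,1]$ and $M(s;\mathbf{f_c}) = 0$ on $[0,\tau_1)$. Integrating the total-capacity identity $\int_0^1 M(s;\mathbf{f_c})\, \varphi(s)\, ds = Nc$ yields $\Pr(S \in [\tau_1,1]) = Nc$. Finally, swapping sum and integral in the definition of social welfare and using that $\sum_i f_i(s) = M(s;\mathbf{f_c})$ equals $1$ on $[\tau_1,1]$ and $0$ elsewhere gives $\SW_{\text{NE}}(\cI(c)) = \int_{\tau_1}^1 U_1(s)\, \varphi(s)\, ds$. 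The main subtlety is Step 1---confirming that the utility comparison forces $\tau_1 < 1/2$ uniformly across both decision schemes---after which the rest is bookkeeping using Theorem \ref{thm:equilibrium}.
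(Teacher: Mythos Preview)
Your proof is correct and rests on the same key observation as the paper---that any double-interview yields utility at most $U_2(1)=1/2$---but you execute it more rigorously. The paper's own proof is a one-sentence informal incentive argument (``the lowest possible utility a firm can derive without double-interviewing is $0.5$, which is higher than the highest possible utility from double-interviewing''), whereas you explicitly invoke the threshold characterization of Theorem~\ref{thm:equilibrium}: condition~3 forces $\tau_1 < 1/2$ when $\Mmax\ge 2$, and then the capacity count $Nc=\sum_m m\,p_m \ge \Pr(S\ge\tau_1) \ge \Pr(S\ge 0.5)$ delivers the contradiction. This is the same idea, just with the gap between intuition and proof closed.
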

\begin{myproof}
    Suppose $Nc<\int_{0.5}^1\varphi(s)ds$ and $\mathbf{f_c}$ is a Nash equilibrium. The lowest possible utility a firm can derive without double-interviewing any applicant is 0.5, which is higher than the highest possible utility a firm can derive by double-interviewing any applicant. As a result, we must have $\Mmax(\mathbf{f_c}) = 1$. Since $\mathbf{f_c}$ is a Nash equilibrium, by condition 1 and 2 of Theorem \ref{thm:equilibrium}, $M(s, \mathbf{f_c}) =1$ on $[\tau_1, 1]$ and $\Pr(S\in[\tau_1, 1]) = Nc$. Therefore, the social welfare under $\mathbf{f_c}$ is 
    \begin{align*} 
        \SW_\text{NE}(\cI(c)) & =\int_{\tau_1}^{1}U_1(s)\varphi(s)ds 
    \end{align*} as desired. 
    \end{myproof}

\begin{myproof} [\text{Proof of Proposition \ref{thm:SW_PoNS2}}]
    Let $\mathbf{f_c}$ be an Nash equilibrium of the instance $\cI(c)$. We will first show $\lim_{c \to 0^+} \PoNS(\cI(c)) = N$. If $N$ is fixed, as $c \to 0^+$, eventually we will have $c<\int_{0.5}^1\varphi(s)ds$. By lemma \ref{lemma:4.2.1}, $\lim_{c \to 0^+} \Mmax(\mathbf{f_c})  = 1$ and $\Pr(S\in[\tau_1, 1]) = Nc$. The social welfare under $\mathbf{f_c}$ is 
    \begin{align*} 
        \SW_\text{NE}(\cI(c)) & =\int_{\tau_1}^{1}U_1(s)\varphi(s)ds. 
    \end{align*}
    Let $s_c = \min \{s \in [0, 1]: f_1^{\text{naive}}(s) = 1\}$ be the score threshold in which the naive strategy interviews everyone above $s_c$. The social welfare under the naive solution is 
    \begin{align*}
        \SW_\text{naive}(\cI(c)) & =N \int_{s_c}^1 U_N(s)\varphi(s)ds. 
    \end{align*}
    
    We also notice that $\lim_{c \to 0^+}U_n(s) = \frac{1}{n}$. Therefore, when $c \to 0^+$ we have 
    \begin{align*} 
        \lim_{c \to 0^+}\PoNS(\cI(c))  = \lim_{c \to 0^+}\frac{\SW_\text{NE}(\cI(c))}{\SW_\text{naive}(\cI(c))} = \frac{\int_{\tau_1}^{1} \varphi(s)ds}{N\int_{s_c}^{1} \frac{1}{N} \varphi(s)ds} = \frac{\Pr(S\in[\tau_1, 1])}{\Pr(S\in[s_c, 1])}=\frac{Nc}{c} = N.
    \end{align*}

    Next, we will show $\lim_{c \to 1^-} \PoNS(\cI(c)) = 1$. As $c \to 1^-$, by condition 1 of Theorem \ref{thm:equilibrium}, $\lim_{c \to 1^-}\Pr(f_i(S)=1) = 1$. Therefore, each firm will interview all the applicants, and we must have $\tau_1 = \cdots = \tau_N = 0$. Hence,  $\lim_{c \to 1^-} \SW_\text{NE}(\cI(c)) = \int_{0}^{1} U_N(s) \varphi(s)ds$. Similarly, $s_c = 0$ and $\lim_{c \to 1^-} \SW_\text{naive}(\cI(c)) = \int_{0}^{1} U_N(s) \varphi(s)ds$. Therefore, we have
    \begin{align*}
        \lim_{c \to 1^-} \PoNS(\cI(c))  = \lim_{c \to 1^-}\frac{\SW_\text{NE}(\cI(c))}{\SW_\text{naive}(\cI(c))} =  \frac{\int_{0}^{1} U_N(s) \varphi(s)ds}{\int_{0}^{1} U_N(s) \varphi(s)ds} = 1.
    \end{align*}
\end{myproof}

\subsection{Proof of Theorem \ref{thm:SW_PoNS4}}
\begin{myproof}
    Let $\cI(N) = (N, c, \cD, \indep)$ be an instance with $N$ firms and the independent decision rule. Let $s_c = \min \{s \in [0, c]: f_1^{\text{naive}}(s) = 1\}$ be the score threshold in which the naive strategy interviews everyone above $s_c$. 
    \begin{align*}
        \PoNS(\cI(N)) &= \frac{\SW_{\text{NE}}(\cI(N))}{\SW_{\text{naive}}(\cI(N))} \\
        &= \frac{\int_{\tau_1}^{\tau_2} U_1(s)\varphi(s)ds + 2\int_{\tau_2}^{\tau_3}U_2(s) \varphi(s)ds +\cdots n\int_{\tau_n}^{1}U_{\Mmax(\bbf)}(s) \varphi(s)ds}{N\int_{s_c}^1 U_N(s)\varphi(s)ds}.
    \end{align*}
    By lemma \ref{lemma:3.2.2}, $\lim_{N\rightarrow \infty}\tau_{\Mmax(\bbf)-1}\rightarrow 0$ and $U_N(s)\rightarrow \frac{1}{N}$. Therefore  
    \begin{align*}
        \lim_{N\rightarrow \infty}\PoNS(\cI(N)) 
        &= \frac{(\Mmax(\bbf)-1)\int_{0}^{\tau_{\Mmax(\bbf)}}\frac{1}{\Mmax(\bbf)-1} \varphi(s)ds + \Mmax(\bbf)\int_{\tau_{\Mmax(\bbf)}}^{1}\frac{1}{\Mmax(\bbf)} \varphi(s)ds}{N\int_{s_c}^1 \frac{1}{N}\varphi(s)ds}\\
        & = \frac{\int_{0}^1 \varphi(s)ds}{\int_{s_c}^1 \varphi(s)ds}\\
        &= \frac{1}{c}.
    \end{align*}

\end{myproof}

\subsection{Proof of Theorem \ref{thm:SW_PoNS3}}
Before proving Theorem \ref{thm:SW_PoNS3}, we first present the following intermediate result. 

\begin{lemma} \label{lemma:4.4.1}
    For any distribution $\cD$ and decision rule $\theta = \{\corr, \indep\}$, let $\cI(N) = (N, c, \cD, \corr)$ is the instance with $N$ firms and $\mathbf{f_N}$ be the nash equilibrium of the instance. Let $\tau_1$ be the threshold defined in Theorem \ref{thm:equilibrium}, $\lim_{N\rightarrow \infty} \tau_1 = 0$.
\end{lemma}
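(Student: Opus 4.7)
The plan is to combine Proposition \ref{prop:NE_shared} with Lemma \ref{lemma:3.2.1} to pinch $\tau_1$ between $0$ and a vanishing upper bound. Under the correlated decision rule, Proposition \ref{prop:NE_shared} guarantees that for any Nash equilibrium $\mathbf{f_N}$ of the instance $\cI(N)$, the thresholds satisfy $\tau_m \geq m\tau_1$ for every $m \in [\Mmax(\mathbf{f_N})]$. Taking $m = \Mmax(\mathbf{f_N})$ and using the fact that $\tau_{\Mmax(\mathbf{f_N})} \leq \tau_{\Mmax(\mathbf{f_N})+1} = 1$, I obtain the bound
\begin{align*}
\tau_1 \leq \frac{1}{\Mmax(\mathbf{f_N})}.
\end{align*}

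Next, I invoke Lemma \ref{lemma:3.2.1}, which states that $\Mmax(\mathbf{f_N}) \geq Nc$, so that $\lim_{N \to \infty} \Mmax(\mathbf{f_N}) = \infty$. Combining this with the bound above yields
\begin{align*}
0 \leq \tau_1 \leq \frac{1}{\Mmax(\mathbf{f_N})} \leq \frac{1}{Nc},
\end{align*}
and sending $N \to \infty$ forces $\tau_1 \to 0$, which completes the argument.

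There is no substantive obstacle here: the two ingredients (Proposition \ref{prop:NE_shared} and Lemma \ref{lemma:3.2.1}) do essentially all the work, and the claim follows by a direct squeeze. The only caveat is the mild notational inconsistency in the lemma statement between the phrase ``$\theta = \{\corr,\indep\}$'' and the instance $\cI(N) = (N, c, \cD, \corr)$; the proof is written for the correlated rule, which is the regime in which $\tau_m \geq m\tau_1$ actually holds (under $\indep$, Proposition \ref{prop:NE_indep} shows the thresholds bunch up near $0$, which yields the conclusion even more directly but via a different route).
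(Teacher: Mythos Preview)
Your proof is correct and follows essentially the same line as the paper's: both arguments establish $\tau_1 \leq 1/\Mmax(\mathbf{f_N})$ and then invoke Lemma~\ref{lemma:3.2.1} to send the right-hand side to zero. The only cosmetic difference is that you obtain the bound by citing Proposition~\ref{prop:NE_shared} (which packages the inequality $m\tau_1 \leq \tau_m$), whereas the paper derives it directly from condition~3 of Theorem~\ref{thm:equilibrium} together with $U_1(\tau_1)=\tau_1$ and $U_{\Mmax}(1)=1/\Mmax$; the latter route has the small advantage that both identities hold verbatim under $\theta=\indep$ as well, so it covers both decision rules in one stroke without appealing to Proposition~\ref{prop:NE_indep} separately.
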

\begin{myproof} 
    By condition 3 of Theorem \ref{thm:equilibrium}, $U_1(\tau_1)\leq U_{\Mmax(\mathbf{f_N})}(\tau_{\Mmax(\mathbf{f_N})})$. Since $U_n(s)$ is strictly increasing in $s$, we have $U_{\Mmax(\mathbf{f_N})}(\tau_{\Mmax(\mathbf{f_N})})<U_{\Mmax(\mathbf{f_N})}(1)$. Given that $U_1(\tau_1) = \tau_1$ and $U_{\Mmax(\mathbf{f_N})}(1) = \frac{1}{\Mmax(\mathbf{f_N})}$, we have
    \begin{align*}
        \tau_1 \leq U_{\Mmax(\mathbf{f_N})}(\tau_{\Mmax(\mathbf{f_N})}) < \frac{1}{\Mmax(\mathbf{f_N})}.
    \end{align*}
    By lemma \ref{lemma:3.2.1}, $\lim_{N\rightarrow \infty} \frac{1}{\Mmax(\mathbf{f_N})} = 0$. As a result, $\lim_{N\rightarrow \infty} \tau_1 = 0$.
\end{myproof}

Next, we present the following convergence result for general distribution $\cD$.
\begin{lemma}
    \label{lemma:SW_PONS3}
    For any distribution $\cD$, capacity $c$, if $\cI(N) = (N, c, \cD, \corr)$ is the instance with $N$ firms and the correlated decision rule, then $\PoNS(\cI(N))$ increases when $N$ increases, and $\lim_{N\rightarrow \infty}\PoNS(\cI(N)) = \frac{\mathbb{E}S}{\int_{s_c}^1 s\varphi(s)ds}$.
\end{lemma}

    \begin{myproof}
        Let $\cI(N) = (N, c, \cD, \corr)$ be an instance with $N$ firms and the correlated decision rule. 

    We will first show that $\PoNS(\cI(N))$ increases when $N$ increases. Under the correlated decision rule, the utility function takes the form $U_n(s) = s/n$. The social welfare under the Nash equilibrium can be written as:
    \begin{align} 
        \SW_\text{NE}(\cI(N)) & =\int_{\tau_1}^{\tau_2}U_1(s)\varphi(s)ds + \cdots + \Mmax \int_{\tau_{\Mmax}}^{1}U_{\Mmax}(s)\varphi(s)ds \nonumber\\
        & = \int_{\tau_1}^{\tau_2} s \varphi(s)ds + \cdots + \Mmax \int_{\tau_{\Mmax}}^{1}\frac{s}{\Mmax}\varphi(s)ds \nonumber\\
        & =  \int_{\tau_1}^{\tau_2} s \varphi(s)ds + \cdots +  \int_{\tau_{\Mmax}}^{1}s\varphi(s)ds \nonumber\\
        & =  \int_{\tau_1}^{1} s \varphi(s)ds. \label{eq:SW_corr}
    \end{align}
    Let $s_c = \min \{s \in [0, c]: f_1^{\text{naive}}(s) = 1\}$ be the score threshold in which the naive strategy interviews everyone above $s_c$. The social welfare under the naive solution is 
    \begin{align} \label{sw_correlated_naive_formula}
        \SW_\text{naive}(\cI(N)) & =\int_{s_c}^1 U_N(s)\varphi(s)ds 
         =  N\int_{s_c}^{1} \frac{s}{N}\varphi(s)ds
         =  \int_{s_c}^{1} s \varphi(s)ds.
    \end{align}
    
    Therefore, we can express $\PoNS(\cI(N))$ as follows:
    \begin{align} \label{eq:PoNS}
        \PoNS(\cI(N))  = \frac{\SW_\text{NE}(\cI(N))}{\SW_\text{naive}(\cI(N))} = \frac{\int_{\tau_1}^{1} s \varphi(s)ds}{\int_{s_c}^{1} s \varphi(s)ds}.
    \end{align}

    By lemma \ref{lemma:4.4.1}, $\tau_1$ decreases as $N$ increases. Since $s\varphi(s)>0$,  $\int_{\tau_1}^{1} s \varphi(s)ds$ increases as $\tau_1$ decreases. Given that the denominator of  (\ref{eq:PoNS}) is fixed, we can conclude that $\PoNS(\cI(N)) $ increases as $N$ increases. 

    Next, we want show that in the limit when $N$ goes to infinity,  $\PoNS(\cI(N))$ converges to $\frac{\mathbb{E}S}{\int_{s_c}^1 s\varphi(s)ds}$. By lemma \ref{lemma:4.4.1}, $\lim_{N\rightarrow \infty} \tau_1 = 0$. Therefore, the numerator of the equation (\ref{eq:PoNS}) converges to $\int_{0}^{1} s \varphi(s)ds = \mathbb{E} S$. We conclude that 
    \begin{align}\label{equ:pons_IN}
        \lim_{N\rightarrow \infty}\PoNS(\cI(N)) = \frac{\mathbb{E}S}{\int_{s_c}^1 s\varphi(s)ds}.
    \end{align}
    \end{myproof}

\begin{myproof} [\text{Proof of Theorem \ref{thm:SW_PoNS3}}]
    Next, we will consider the special case where $\cD = \unif$. Under $\cD = \unif$, $\mathbb{E}S = 0.5$ and equation (\ref{equ:pons_IN}) can be written as 
    \begin{align*}
        \lim_{N\rightarrow \infty}\PoNS(\cI(N)) = \frac{\frac{1}{2}}{\int_{1-c}^1 sds} = \frac{1}{2c-c^2}. 
    \end{align*}
    Therefore, under the uniform distribution, $\lim_{N\rightarrow \infty}\PoNS(\cI(N)) = \frac{1}{2c-c^2} $. We only need to show that if $Nc\leq 0.5$, then $\PoNS(\cI(N)) = 1.5$. By lemma \ref{lemma:4.2.1}, when $Nc\leq \int_{0.5}^1 \varphi(s)ds = 0.5$, $\Mmax(\mathbf{f_N}) = 1$ and $\Pr(S\in [\tau_1, 1]) = Nc$. Since $\cD = \unif$, $\tau_1= 1-Nc$. By equation (\ref{eq:PoNS}), 
    \begin{align*}
        \PoNS(\cI(N))  &= \frac{1-\tau_1^2}{1-(1-c)^2} \\
        & = \frac{2Nc-N^2c^2}{2c-c^2}\\
        & = \frac{2N-N^2c}{2-c}\\
        &\geq \frac{2N-N/2}{2-c} \\
        &= \frac{3N}{2(2-c)} \\
        &\geq \frac{6}{2(2-c)} \\
        &\geq \frac{3}{2} = 1.5.
    \end{align*}
    Hence, under the uniform distribution, if $Nc\leq 0.5$, $ \PoNS(\cI(N))\geq 1.5$.
\end{myproof}

\subsection{Proof of Theorem \ref{thm:PoA}}
We will first derive an expression for the highest possible social welfare before proving the theorem.
\begin{lemma} \label{lemma:sw_max}
For any distribution $\cD$,
    let $\cI(c, N) = (N, c, \cD, \corr)$ be the instance parameterized by the capacity $c$ and the number of firms $N$. Suppose $\Pr(S\in [s_{Nc},1])=Nc$, then
    \begin{align*}
    \SW_{\max}(\cI(c,N))=
    \begin{cases}
        \int_{s_{Nc}}^{1} s \varphi(s)ds, & \text{if } Nc\leq 1\\
        \E[S].  &o.w.
        \end{cases}
    \end{align*}
    
\end{lemma}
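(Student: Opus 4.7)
The plan is to exploit the special structure of $\SW$ under the correlated decision rule. Since $U_n(s) = s/n$, the total utility contributed by an applicant with score $s$ who is interviewed by $n \geq 1$ firms equals $n \cdot (s/n) = s$, regardless of $n$. Summing firm utilities therefore collapses to
\[
\SW(\mathbf{f}) = \int_{0}^{1} \mathbbm{1}[M(s;\mathbf{f}) \geq 1]\, s\, \varphi(s)\, ds.
\]
Hence maximizing social welfare reduces to choosing \emph{which} applicants get at least one interview, ignoring how many; the only coupling to the capacity constraints comes through the total interview volume.

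Next, I would derive the mass budget. By linearity and the per-firm constraint $\Pr(f_i(S) = 1) \leq c$,
\[
\int_{0}^{1} M(s;\mathbf{f})\, \varphi(s)\, ds = \sum_{i=1}^{N} \int_{0}^{1} f_i(s)\, \varphi(s)\, ds \;\leq\; Nc.
\]
Because $\mathbbm{1}[M(s) \geq 1] \leq M(s)$, the $\varphi$-mass of the set $A(\mathbf{f}) := \{s : M(s;\mathbf{f}) \geq 1\}$ is at most $\min(Nc, 1)$. A standard rearrangement observation then says: among measurable sets $A \subseteq [0,1]$ with $\Pr(S \in A) \leq \alpha$, the quantity $\int_A s\, \varphi(s)\, ds$ is maximized by taking $A$ to be the top $\alpha$-quantile, i.e.\ $[s_\alpha, 1]$. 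Applied with $\alpha = \min(Nc, 1)$ this yields the upper bounds $\SW \leq \int_{s_{Nc}}^{1} s\, \varphi(s)\, ds$ when $Nc \leq 1$ and $\SW \leq \int_0^1 s\, \varphi(s)\, ds = \E[S]$ when $Nc > 1$.

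For achievability I would exhibit explicit feasible strategy profiles. When $Nc \leq 1$, pick thresholds $s_{Nc} = t_0 < t_1 < \cdots < t_N = 1$ with $\Pr(S \in [t_{i-1}, t_i]) = c$ (possible by \cref{lemma:A1} together with $\varphi > 0$), and let firm $i$ interview exactly $[t_{i-1}, t_i]$. Then $M(s;\mathbf{f}) = 1$ on $[s_{Nc}, 1]$ and $0$ elsewhere, attaining the upper bound. When $Nc > 1$, I need every $s \in [0,1]$ covered by at least one firm while each firm uses mass $c$. Letting $F$ be the CDF of $\varphi$, a cyclic construction works: define $f_i(s) = 1$ iff $F(s) \in \bigl[(i-1)/N,\ (i-1)/N + c\bigr] \pmod 1$. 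Each firm has mass exactly $c$, and since consecutive intervals of length $1/N$ already cover $[0,1]$ and $c \geq 1/N$, the union covers $[0,1]$; each $f_i$ is a union of at most two intervals, so the feasibility constraint is met. This gives $\mathbbm{1}[M(s) \geq 1] = 1$ almost everywhere and hence $\SW = \E[S]$.

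The main obstacle is the case $Nc > 1$ achievability, where a naive partition does not produce every applicant with at least one interview while respecting each firm's capacity $c < 1$ and the requirement that $\{f_i = 1\}$ be a finite union of intervals; the cyclic shift in CDF-space above is the key device that resolves this cleanly. Everything else reduces to the collapsing identity for $\SW$ and a one-line rearrangement argument.
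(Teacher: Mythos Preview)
Your argument is correct and follows essentially the same route as the paper: the collapsing identity $n\,U_n(s)=s$ reduces $\SW(\mathbf{f})$ to $\int_{\{M\geq 1\}} s\,\varphi(s)\,ds$, the mass bound plus rearrangement gives the upper bound, and explicit constructions give achievability.

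One remark on what you call ``the main obstacle'': the $Nc>1$ achievability is actually simpler than you make it. Since the capacity constraint is an \emph{inequality} $\Pr(f_i(S)=1)\leq c$, you do not need every firm to use mass exactly $c$. The paper just partitions $[0,1]$ into $N$ intervals of $\varphi$-mass $1/N$ each (which is $<c$ because $Nc>1$) and assigns one to each firm; this already covers $[0,1]$ with a single interval per firm. Your cyclic CDF-shift construction works too, but it is more machinery than the problem requires.
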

\begin{myproof}
    Let $\cI(c, N) = (N, c, \cD, \theta)$ be the instance parameterized by the capacity $c$ and number of firms $N$. Let $s_{Nc}\in [0,1]$ such that $ \Pr(S\in [s_{Nc}, 1]) = Nc$. Our goal is to show that for any arbitrary strategy profile with the same capacity and number of firms, the social welfare is at most $\int_{s_{Nc}}^{1} s \varphi(s)ds$ when $Nc\leq1$, and $\E[S]$ when $Nc> 1$.
    
    Let's first consider the case when $Nc\leq 1$. We will first show that $\SW_{\text{max}}(\cI(c,N))\geq \int_{s_{Nc}}^1s\varphi(s)ds$. First, since  $\Pr(S\in [s_{Nc}, 1]) = Nc$ and $Nc\leq 1$, there exists $s_{Nc} = s_1<s_2\cdots <s_N<s_{N+1}=1$ such that $\Pr(S\in[s_i, s_{i+1}]) = c$. Let $\bbf$ be a strategy such that $f_i(s)=1$ if and only if $s\in [s_i, s_{i+1}]$ for $i\in [N]$. Under $\bbf$, we have $M(s, \bbf) = 1$ for $s\in[s_{Nc}, 1]$ and  $M(s, \bbf) = 0$ otherwise. Therefore, the social welfare is $ \int_{s_{Nc}}^1s\varphi(s)ds$. Hence we must have $\SW_{\text{max}}(\cI(c,N))\geq \int_{s_{Nc}}^1s\varphi(s)ds$.

    We will then show $\SW_{\text{max}}(\cI(c,N))\leq \int_{s_{Nc}}^1s\varphi(s)ds$. Let $\mathbf{f_\text{arb}}$ be an arbitrary strategy profile with $N$ firms each with capacity $c$. Define $S_n = \{s\in[0,1], M(s;\mathbf{f_\text{arb}}) = n\}$. The social welfare under $\mathbf{f_\text{arb}}$ is
    \begin{align}
        \SW_{\text{arb}}(\cI(c,N)) & = N\int_{S_N} U_N(s) \varphi(s)ds + (N-1)\int_{S_{N-1}} U_{N-1}(s) \varphi(s)ds +\cdots + \int_{S_1} s \varphi(s)ds \nonumber\\
        &= N\int_{S_N} \frac{s}{N} \varphi(s)ds + (N-1)\int_{S_{N-1}} \frac{s}{N-1} \varphi(s)ds +\cdots + \int_{S_1} s \varphi(s)ds \nonumber\\ 
        & = \int_{\cup S_i}s \varphi(s)ds. \label{equ:SWmax_corr}
    \end{align}
    Since $\Pr(S\in \cup S_i)\leq Nc$ and $s$ is strictly increasing, we must have $\SW_{\text{arb}}(\cI(c,N))\leq \int_{s_{Nc}}^1s\varphi(s)ds$. Therefore, $\SW_{\text{max}}(\cI(c,N))\leq \int_{s_{Nc}}^1s\varphi(s)ds$. We've shown that $\SW_{\text{max}}(\cI(c,N))\geq \int_{s_{Nc}}^1s\varphi(s)ds$ and $\SW_{\text{max}}(\cI(c,N))\leq \int_{s_{Nc}}^1s\varphi(s)ds$. As a result, $\SW_{\text{max}}(\cI(c,N))= \int_{s_{Nc}}^1s\varphi(s)ds$ when $Nc\leq 1$. 

    Next, consider the case when $Nc>1$. By equation (\ref{equ:SWmax_corr}), for any arbitrary strategy profile $\mathbf{f_\text{arb}}$ the social welfare is 
    \begin{align*}
        \SW_{\text{arb}}(\cI(c,N)) = \int_{\cup S_i}s \varphi(s)ds \leq \int_0^1s \varphi(s)ds = \E [S].
    \end{align*}
    Hence, $\SW_{\text{max}}(\cI(c,N))\leq \E [S]$. We will then show $\SW_{\text{max}}(\cI(c,N))\geq \E [S]$. Since $Nc>1$, there exists $0 = s_1\leq s_2\cdots \leq s_N \leq s_{N+1}=1$ with $p\leq N$ such that $\Pr(S\in[s_i, s_{i+1}]) \leq c$. Let $\bbf$ be a strategy such that $f_i(s) = 1$ if and only if $s\in [s_i, s_{i+1}]$ for $i\in [N]$. In this case, $\cup S_i = [0,1]$ and $\Pr(S\in \cup S_i) = 1$. Hence, by equation (\ref{equ:SWmax_corr}), the social welfare under $\bbf$ is $\int_0^1s \varphi(s)ds = \E [S]$. Hence, $\SW_{\text{max}}(\cI(c,N))\geq \E [S]$. As a result, we must have $\SW_{\text{max}}(\cI(c,N))= \E [S]$ when $Nc>1$.

\end{myproof}

\begin{lemma} \label{lemma:sw_max_indep}
For any distribution $\cD$,
    let $\cI(c, N) = (N, c, \cD, \indep)$ be the instance parameterized by the capacity $c$ and the number of firms $N$. Suppose $\Pr(S\in [s_{Nc},1])=Nc$, then 
    \begin{align*}
        \lim_{c\rightarrow 0} \SW_{\max}(\cI(c,N)) = \int_{s_{Nc}}^{1} s \varphi(s)ds.
    \end{align*}
\end{lemma}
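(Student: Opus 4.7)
The plan is to prove the stronger claim that there exists $c_0>0$ such that $\SW_{\max}(\cI(c,N)) = \int_{s_{Nc}}^{1} s\,\varphi(s)\,ds$ holds \emph{exactly} for all $c\leq c_0$; the stated limit follows immediately. The argument splits into a matching construction for the lower bound and an LP-style relaxation together with a swap argument for the upper bound.

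For the lower bound, take $c_0 = 1/N$. For $c\leq c_0$, the interval $[s_{Nc},1]$ has $\varphi$-mass exactly $Nc$, so I partition it into $N$ consecutive intervals $X_1=[s_{Nc},t_1],\ldots,X_N=[t_{N-1},1]$ with $\Pr(S\in X_i)=c$ for each $i$ (the breakpoints $t_i$ exist by continuity of the score distribution, as in \cref{lemma:A1}). Firm $i$ plays $f_i(s)=\mathbbm{1}[s\in X_i]$, which satisfies the per-firm capacity constraint. Then $M(s;\bbf)=1$ on $[s_{Nc},1]$ and $0$ elsewhere, and since $U_1(s)=s$ under the independent rule, the social welfare equals $\int_{s_{Nc}}^1 s\,\varphi(s)\,ds$.

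For the upper bound, I rewrite $\SW(\bbf)=\int_0^1\bigl(1-(1-s)^{M(s;\bbf)}\bigr)\varphi(s)\,ds$ (using $M\cdot U_M(s)=1-(1-s)^M$ under the independent rule) and relax to the problem of maximizing this quantity over arbitrary $M:[0,1]\to\{0,1,\dots,N\}$ subject only to the aggregate capacity $\int M(s)\varphi(s)\,ds\leq Nc$. Any feasible $\bbf$ induces a feasible $M$, so the relaxation upper-bounds $\SW_{\max}$. The core claim is: whenever $Nc<\Pr(S\geq 1/2)$, every optimizer $M^\star$ of the relaxation satisfies $M^\star(s)\leq 1$ almost everywhere. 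Given this, the problem collapses to choosing a set $A$ with $\varphi$-mass at most $Nc$ that maximizes $\int_A s\,\varphi(s)\,ds$; by a standard rearrangement, the solution is $A=[s_{Nc},1]$ with value $\int_{s_{Nc}}^1 s\,\varphi(s)\,ds$.

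The main technical step is proving the no-double-interview claim. Suppose toward contradiction that $M^\star\geq 2$ on a set $B$ of positive $\varphi$-measure. Since $\int M^\star\varphi\,ds\leq Nc<\Pr(S\geq 1/2)$, there is a positive-measure set $C\subseteq[1/2,1]$ on which $M^\star=0$. Pick subsets $B'\subseteq B$ and $C'\subseteq C$ of equal small $\varphi$-mass $\epsilon$, and define $M'$ by decreasing $M^\star$ by one on $B'$ and increasing it by one on $C'$; the total capacity is preserved. At any $s_0\in B'$ the loss is $(1-s_0)^{M^\star(s_0)-1}s_0\leq(1-s_0)s_0\leq 1/4$ (since $M^\star(s_0)\geq 2$), while at any $s^*\in C'$ the gain is $s^*\geq 1/2$, so the swap strictly improves the objective, contradicting optimality. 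The main subtlety is that the swap is carried out inside the relaxation, so we do not need to re-decompose $M'$ into per-firm strategies; this is fine since the relaxation is used only for an upper bound. As $c\to 0$ the condition $Nc<\Pr(S\geq 1/2)$ is eventually met, exact equality holds for all sufficiently small $c$, and the claimed limit follows.
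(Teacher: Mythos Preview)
Your proof is correct and follows essentially the same approach as the paper: both the lower bound (partition $[s_{Nc},1]$ into $N$ equal-mass intervals) and the upper bound (a swap argument showing that an optimizer cannot have $M\geq 2$ anywhere, by reallocating capacity to an uninterviewed region inside $[1/2,1]$) are the same ideas. The only cosmetic differences are that you work explicitly in the relaxation over $M$ rather than over strategy profiles, and you remove one interview at a time (bounding the marginal loss by $(1-s)s\leq 1/4$) whereas the paper removes all $i$ interviews from $S_i$ at once (bounding the total contribution by $\Pr(S\in S_i)$); both variants yield the same contradiction.
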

\begin{myproof}
    Let $\mathbf{f_\text{max}}$ be the strategy profile for instance $\cI(c, N)$ that maximizes the social welfare. We will first show that $\Mmax(\mathbf{f_\text{max}}) =1$. Define $S_n = \{s\in[0,1], M(s;\mathbf{f_\text{max}}) = n\}$. We want to show that $S_i = \emptyset$ for all $i>1$. Suppose, by contradiction, $S_i \neq \emptyset$ for some $i>1$. We want to show that there exist some set $S'_1\notin \cup_i S_i$ such that $i\int_{S_i}\varphi(s)ds = \int_{S'_1}\varphi(s)ds$ and $i\int_{S_i}U_i(s)\varphi(s)ds < \int_{S'_1}s\varphi(s)ds$.  Under the utility constraint, we have $\sum_{i=1}^{\Mmax(\mathbf{f_\text{max}})} i\Pr(S\in S_i)\leq Nc$. Hence, $\Pr(S\in \cup_i S_i)\leq \sum_{i=1}^{\Mmax(\mathbf{f_\text{max}})} \Pr(S\in S_i)\leq \sum_{i=1}^{\Mmax(\mathbf{f_\text{max}})} i\Pr(S\in S_i)\leq Nc$. Since $c\rightarrow 0$, $Nc\rightarrow 0$ as well. When $Nc$ is small enough, we can always find some $S'_1 \subset (0.5, 1]\setminus \cup_i S_i$ such that $i\int_{S_i}\varphi(s)ds = \int_{S'_1}\varphi(s)ds$. Therefore, the total social welfare derived by interviewing applicants in $S_i$ is
    \begin{align*}
        i\int_{S_i}U_i(s)\varphi(s)ds \leq  i\int_{S_i}\frac{1}{i}\varphi(s)ds
        = \frac{1}{i} \int_{S'_1}\varphi(s)ds
        \leq \frac{1}{2} \int_{S'_1}\varphi(s)ds
        < \int_{S'_1}s\varphi(s)ds,
    \end{align*}
    which is strictly less than the social welfare that can be derived by interviewing applicants in $S'_1$. Therefore, for each $S_i \neq \emptyset$, we can always move the support from $S_i$ to some set $S'_1\notin \cup_i S_i$ to yield a higher utility. This contradicts the assumption that $\mathbf{f_\text{max}}$ maximizes the social welfare. Hence, $S_i=\emptyset$ for all $i>1$ and $\Mmax(\mathbf{f_\text{max}}) =1$. Since $\Pr(S\in S_1)\leq Nc$, $U_1(s) = s$ is strictly increasing and $\varphi(s)>0$, we have $\int_{S_1}s\varphi(s)ds \leq \int_{s_{Nc}}^1s\varphi(s)ds$ where $\Pr(S\in[s_{Nc},1]) = Nc$. Therefore, $\SW_{\max}(\cI(c,N)) = \int_{s_{Nc}}^{1} s \varphi(s)ds$ when $c$ is small enough. Hence, we conclude that $\lim_{c\rightarrow 0} \SW_{\max}(\cI(c,N)) = \int_{s_{Nc}}^{1} s \varphi(s)ds$.
    
\end{myproof}

\begin{myproof}[Proof of Theorem \ref{thm:PoA}.]

We first show the first part of the statement is true. That is, for any $N$, we have $\lim_{c\rightarrow 0} \PoA(\cI(c, N)) =1$. Let $\cI(c,N)$ be an instance with $N$ fixed. Let $\mathbf{f_c}$ be a Nash equilibrium of the instance $\cI(c,N)$ with thresholds $\tau_1, \cdots, \tau_{\Mmax(\mathbf{f_c})}$ defined as in Theorem \ref{thm:equilibrium}. As $c\rightarrow 0$, we have $Nc\rightarrow 0$. Hence, we must have $Nc<\int_{0.5}^1 \varphi(s)ds$ for small enough $c$. By lemma \ref{lemma:4.2.1}, the social welfare under $\mathbf{f_c}$ is $\lim_{c\rightarrow 0}\SW_\text{NE}(\cI(c,N))  =\int_{\tau_1}^{1}s\varphi(s)ds$.
and $\Pr(S\in[\tau_1, 1]) = Nc$. Next we will show $\lim_{c\rightarrow 0} \SW_{\max}(\cI(c,N)) = \int_{\tau_1}^{1}s\varphi(s)ds$ for $\theta = \{\corr, \indep\}$. If $\theta = \corr$, since $Nc\rightarrow 0 $ and $\Pr(S\in[\tau_1, 1]) = Nc$, by lemma \ref{lemma:sw_max}, $\lim_{c\rightarrow 0}\SW_{\max}(\cI(c,N)) = \int_{\tau_1}^{1} s \varphi(s)ds$ as desired. If $\theta = \indep$, since $\Pr(S\in[\tau_1, 1]) = Nc$, by lemma \ref{lemma:sw_max_indep}, $\lim_{c\rightarrow 0}\SW_{\max}(\cI(c,N))  = \int_{\tau_1}^{1} s \varphi(s)ds$ as well. Hence, $\lim_{c\rightarrow 0} \SW_{\max}(\cI(c,N)) = \int_{\tau_1}^{1}s\varphi(s)ds$ for $\theta = \{\corr, \indep\}$. Therefore,
\begin{align*}
    \lim_{c\rightarrow 0} \PoA(\cI(c)) = \lim_{c\rightarrow 0}\frac{\SW_{\max}(\cI(c))}{\SW_\text{NE}(\cI(c))} = \frac{\int_{\tau_1}^{1}s\varphi(s)ds }{\int_{\tau_1}^{1} s \varphi(s)ds} = 1.
\end{align*}

Next, we will show for any $c \in (0, 1)$, we have $\lim_{N\rightarrow \infty} \PoA(\cI(c, N)) =1$. We will first prove the case when $\theta = \corr$. By equation (\ref{eq:SW_corr}), when hiring decisions are correlated, $\SW_\text{NE}(\cI(c,N))  = \int_{\tau_1}^{1} s \varphi(s)ds$. Moreover, by lemma \ref{lemma:4.4.1}, $\lim_{N\rightarrow \infty}\tau_1 = 0$. Therefore, we must have 
    \begin{align*}
        \lim_{N\rightarrow \infty}\SW_\text{NE}(\cI(c)) =\lim_{\tau_1\rightarrow 0} \int_{\tau_1}^{1} s \varphi(s)ds = \int_{0}^{1} s \varphi(s)ds = \E [S].
    \end{align*}
When $N$ is large enough, we must have $Nc>1$. By lemma \ref{lemma:sw_max}, $\SW_{\max}(\cI(c)) = \E [S]$. Therefore, 

    \begin{align*}
    \lim_{N\rightarrow \infty} \PoA(\cI(c,N)) = \lim_{c\rightarrow 0}\frac{\SW_{\max}(\cI(c))}{\SW_\text{NE}(\cI(c))} = \frac{\E [S] }{\E [S]} = 1.
    \end{align*}
If $\theta = \indep$, by lemma \ref{lemma:3.2.2}, $\lim_{N\rightarrow \infty}\tau_{\Mmax(\mathbf{f_c})-1}\rightarrow 0$ and $U_N(s)\rightarrow \frac{1}{N}$. Therefore  
    \begin{align*}
        &\lim_{N\rightarrow \infty}\SW_\text{NE}(\cI(c,N))
        \\&=(\Mmax(\mathbf{f_c})-1)\int_{0}^{\tau_{\Mmax(\mathbf{f_c})}}\frac{1}{\Mmax(\mathbf{f_c})-1} \varphi(s)ds + \Mmax(\mathbf{f_c})\int_{\tau_{\Mmax(\mathbf{f_c})}}^{1}\frac{1}{\Mmax(\mathbf{f_c})} \varphi(s)ds\\
        & = \int_0^1 \varphi(s)ds = 1.
    \end{align*}
We will then show $\lim_{N\rightarrow \infty}\SW_{\max}(\cI(c, N))=1$ as well. We will first show $\SW_{\max}(\cI(c, N))\leq1$. Let $\mathbf{f_\text{arb}}$ be an arbitrary strategy profile and define $S_n = \{s\in[0,1], M(s;\mathbf{f_\text{arb}}) = n\}$ for $n\in [\Mmax(\mathbf{f_\text{arb}})]$. The social welfare under $\mathbf{f_\text{arb}}$ is 
    \begin{align}
        \SW_{\text{arb}}(\cI(c,N)) & = N\int_{S_N} U_N(s) \varphi(s)ds + (N-1)\int_{S_{N-1}} U_{N-1}(s) \varphi(s)ds +\cdots + \int_{S_1} s \varphi(s)ds \nonumber\\
        &\leq N\int_{S_N} \frac{1}{N} \varphi(s)ds + (N-1)\int_{S_{N-1}} \frac{1}{N-1} \varphi(s)ds +\cdots + \int_{S_1} \varphi(s)ds \nonumber\\
        & = \int_{\cup_{i=1}^N S_i} \varphi(s)ds \nonumber\\
        &= \Pr(S\in \cup_{i=1}^N S_i). \label{eq:SWmax_indep_ub}
    \end{align}
    By equation (\ref{eq:SWmax_indep_ub}), we have $\SW_{\text{arb}}(\cI(c,N))\leq \Pr(S\in \cup_{i=1}^N S_i)\leq 1$. Hence, $\SW_{\text{max}}(\cI(c,N))\leq 1$. As a result, $\lim_{N\rightarrow \infty}\SW_{\text{max}}(\cI(c,N))\leq 1$. Moreover, by definition, $\lim_{N\rightarrow \infty}\SW_{\text{max}}(\cI(c,N))\geq \lim_{N\rightarrow \infty}\SW_{\text{NE}}(\cI(c,N))=1$. Therefore, $\lim_{N\rightarrow \infty}\SW_{\text{max}}(\cI(c,N))\geq 1$. As a result, we conclude $\lim_{N\rightarrow \infty}\SW_{\text{max}}(\cI(c,N))=1$. Therefore,
    \begin{align*}
    \lim_{N\rightarrow \infty} \PoA(\cI(c,N)) = \lim_{N\rightarrow \infty}\frac{\SW_{\max}(\cI(c,N))}{\SW_\text{NE}(\cI(c,N))} = 1
    \end{align*} 
    for $\theta = \{\corr, \indep\}$.

\end{myproof}

\subsection{Proof of Theorem \ref{thm:congestion}}
We will first prove that the best response dynamics terminates at a Nash equilibrium for the discretized version of the game.

\begin{lemma}[Convergence of Serial Best-Response in Discretized Game]
\label{thm:discrete-convergence}
Under Assumption~\ref{ass:quantization}, the game is a finite exact potential game. The best-response dynamics terminates in finitely many steps at a Nash equilibrium.
\end{lemma}

\begin{myproof}
We first observe that the strategy set of each firm is finite. Each firm selects at most $L$ bins out of $K$, so the number of strategies per firm is finite:
\[
\sum_{\ell=0}^{L} \binom{K}{\ell} \le (K+1)^L.
\]
With $n$ firms, the total number of profiles $S = (S_1,\dots,S_n)$ is finite.

    We will then verify that the game is an exact potential game. Suppose there are $N$ firms in the market and the strategy profile is $\bbf$. Each firm selects a subset $S_i \subseteq \{1,\dots,K\}$ of at most $L$ bins to interview, satisfying the capacity constraint. For any strategy profile $S = (S_1,\dots,S_n)$, let $M_k(S)$ denote the number of firms selecting bin $k$.

Then the utility of firm $i$ is
\[
u_i(S) = \sum_{k \in S_i} \int_{I_k} U_{M_k(S)}(s)\, \varphi(s)\, ds,
\]
and define the potential function
\[
\Phi_K(S) = \sum_{k=1}^{K} \sum_{j=1}^{M_k(S)} \int_{I_k} U_j(s)\, \varphi(s)\, ds.
\]
Suppose firm $i$ unilaterally deviates from $S_i$ to $S_i'$ (all other firms fixed). Let $S' = (S_i', S_{-i})$ be the new profile. The only bins whose occupancy changes are those in $D := S_i \cup S_i'$. Therefore, we have
\begin{align*}
    \Delta \Phi_K &= \sum_{k \in D} \left( \sum_{j=1}^{M_k(S')} \int_{I_k} U_j(s)\, \varphi(s)\, ds - \sum_{j=1}^{M_k(S)} \int_{I_k} U_j(s)\, \varphi(s)\, ds \right)\\
    & = \sum_{k \in S_i\setminus S'_i} \left( \sum_{j=1}^{M_k(S')} \int_{I_k} U_j(s)\, \varphi(s)\, ds - \sum_{j=1}^{M_k(S)} \int_{I_k} U_j(s)\, \varphi(s)\, ds\right) \\
    &+ \sum_{k \in S'_i\setminus S_i} \left( \sum_{j=1}^{M_k(S')} \int_{I_k} U_j(s)\, \varphi(s)\, ds - \sum_{j=1}^{M_k(S)} \int_{I_k} U_j(s)\, \varphi(s)\, ds\right)\\
    & = \sum_{k \in S_i\setminus S'_i} \left( \sum_{j=1}^{M_k(S)-1} \int_{I_k} U_j(s)\, \varphi(s)\, ds - \sum_{j=1}^{M_k(S)} \int_{I_k} U_j(s)\, \varphi(s)\, ds\right) \\
    &+ \sum_{k \in S'_i\setminus S_i} \left( \sum_{j=1}^{M_k(S)+1} \int_{I_k} U_j(s)\, \varphi(s)\, ds - \sum_{j=1}^{M_k(S)} \int_{I_k} U_j(s)\, \varphi(s)\, ds\right)\\
    & = \sum_{k \in S'_i\setminus S_i}\int_{I_k} U_{M_k(S')}(s)\, \varphi(s)\, ds - \sum_{k \in S_i\setminus S'_i}\int_{I_k} U_{M_k(S)}(s)\, \varphi(s)\, ds\\
    & = u_i(S') -u_i(S).
\end{align*}

We've verified that $\Phi_K$ is an exact potential., and the game is an exact potential game. Each time a firm changes strategy, it strictly increases its utility, and thus strictly increases $\Phi_K$. 

We also observe that the potential function can only take finitely many values: for each such $k$ and $j \le M_k(S)$, the value
\[
\int_{I_k} U_j(s)\, \varphi(s)\, ds
\]
is a fixed constant (depending only on the functions $U_j$ and $\varphi$, and the partition). Therefore, for fixed $K$, the set of possible values of $\Phi_K(S)$ is finite (at most $(n+1)^K$ values).

Since the strategy space is finite and $\Phi_K$ takes only finitely many values, the process must terminate in finitely many steps. When the process terminates, no firm has a strict profitable deviation. Therefore, the final profile is a Nash equilibrium.
\end{myproof}
\begin{corollary}[$\varepsilon$-Approximation to Continuous Game]
 For any $\varepsilon > 0$, there exists $K_0 \in \mathbb{N}$ such that for all $K \ge K_0$, the following holds:

Let $S = (S_1,\dots,S_n)$ be any pure-strategy Nash equilibrium in the $K$-quantized game. Define the corresponding continuous strategy profile $f = (f_1,\dots,f_n)$ by setting $f_i = \bigcup_{k \in S_i} I_k$.

Then $f$ is an $\varepsilon$-Nash equilibrium in the original continuous game: for all firms $i$ and all admissible deviations $f_i' \subseteq [0,1]$ with $\int_{f_i'} \varphi(s) ds \le c$,
\[
u_i(f_i, f_{-i}) \ge u_i(f_i', f_{-i}) - \varepsilon.
\]
\end{corollary}

\begin{myproof}
Since $\varphi$ and $U_j$ are continuous on $[0,1]$ and the interval is compact, they are uniformly continuous and bounded. The bins $I_k$ shrink in width as $K \to \infty$, and the Riemann sums
\[
\sum_{k \in S_i} \int_{I_k} U_{r_k(i)}(s) \varphi(s)\,ds
\]
converge uniformly to the integral
\[
\int_{f_i} U_{M_k(s)}(s) \varphi(s)\,ds = u_i(f).
\]
Similarly, for any continuous deviation $f_i'$, we can approximate it by a union of quantized bins $f_i^{(K)}$ such that the utility difference is less than $\varepsilon/2$.

Because $S$ is a pure Nash equilibrium of the quantized game, firm $i$ has no profitable deviation among the quantized bin unions. Therefore, the gain from any deviation in the continuous game is at most $\varepsilon$ for $K$ sufficiently large.
\end{myproof}

\subsection{Proof of Theorem \ref{thm:convergence_corr}}
    We first prove some intermediate results before proving the main theorem. 
    \begin{lemma}\label{lemma:conv1}
        Suppose there is a $\delta > 0$ such that $\varphi(s) \geq \delta \; \forall s \in [0,1]$ and there are $N$ firms in the market that form an equal-utility Nash equilibrium $\mathbf{f_N}$ with $\theta = \corr$. Let $0=\tau_0\leq \cdots \leq\tau_{N+1}=1$ be the set of thresholds corresponds to $\mathbf{f_N}$. For a new firm $N+1$, as long as $c \leq 0.5 \delta$ for every firm $i$, there exists $0=\tau'_0\leq \cdots\leq \tau'_{N+2}=1$ such that $\tau_m < \tau'_{m+1}\leq \tau_{m+1}$ for each $i\in [\Mmax(\mathbf{f_N})]$, $\sum_{m=0}^N \int_{\tau'_{m+1}}^{\tau_{m+1}} \varphi(s)ds = c$ and $U_n(\tau'_n) = U_m(\tau'_m)$ for all $0<\tau'_m<\tau'_n<1$. Let $f_{N+1}(s) = 1$ if and only if $s\in\cup_{m=0}^{N} [\tau'_{m+1}, \tau_{m+1})$. Then $f_{N+1}$ is the unique best response for firm $N+1$ and the strategy profile forms an equal-utility nash equilibrium.  
        
    \end{lemma}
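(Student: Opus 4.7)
The plan is to construct firm $N+1$'s best response as the unique function that interviews applicants whose marginal utility exceeds a cutoff $c_0$, and then verify that the augmented profile satisfies the four conditions of \cref{thm:equilibrium} with every threshold utility equal to $c_0$. Write $M^\star := \Mmax(\mathbf{f_N})$. By \cref{prop:NE_shared}, the original thresholds satisfy $\tau_m = m\tau_1$ for $m \in [M^\star]$, with common threshold utility $\tau_1$. Two facts about the $N$-firm equal-utility NE will be used: (a) $M^\star\tau_1 < 1$ strictly, because the top tier $[\tau_{M^\star},1]$ has positive mass; and (b) $\tau_1 \ge 1/(M^\star+1)$, because otherwise a firm could strictly gain by creating an $(M^\star+1)$-tier at $s=1$, contradicting condition 3 of \cref{thm:equilibrium}.

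Given the other firms' strategies, firm $N+1$'s marginal utility from interviewing $s \in [\tau_m,\tau_{m+1})$ is $U_{m+1}(s)=s/(m+1)$, strictly increasing in $s$. Any best response therefore takes the form ``interview iff $s/(M(s;\mathbf{f_N})+1) \ge c_0$'' for some cutoff $c_0$; within tier $m$ this selects $s \in [\max(\tau_m,(m+1)c_0),\tau_{m+1})$. Let $g(c_0)$ denote the total mass of such applicants; then $g$ is continuous and strictly decreasing in $c_0$ (using $\varphi > 0$), with $g(\tau_1)=0$. The main technical step is the lower bound
\begin{equation*}
g\!\left(\tfrac{M^\star\tau_1}{M^\star+1}\right) \;\ge\; \delta\!\left[\sum_{m=0}^{M^\star-1}\tfrac{(m+1)\tau_1}{M^\star+1} + (1-M^\star\tau_1)\right] \;=\; \delta\!\left(1-\tfrac{M^\star\tau_1}{2}\right) \;>\; \tfrac{\delta}{2} \;\ge\; c,
\end{equation*}
obtained by interval-by-interval estimates using $\varphi \ge \delta$ with widths $(m+1)\tau_1/(M^\star+1)$ for $m<M^\star$ and $1-M^\star\tau_1$ for $m=M^\star$; the strict inequality uses fact (a), and the final inequality uses $c \le 0.5\delta$. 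By monotone continuity there is a unique $c_0 \in (M^\star\tau_1/(M^\star+1),\tau_1)$ with $g(c_0)=c$. Setting $\tau'_{m+1} := \min((m+1)c_0,\,1)$ for $m\le M^\star$ and $\tau'_{m+1}:=1$ otherwise yields $\tau_m<\tau'_{m+1}\le\tau_{m+1}$ for every $m\in[M^\star]$ and $U_m(\tau'_m)=c_0$ at every interior threshold.

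The constructed $f_{N+1}$ coincides with the best-response set and uses capacity exactly $c$; uniqueness follows from strict monotonicity of the marginal utility within each tier together with $\varphi>0$, so any positive-measure modification would swap higher-utility applicants for strictly lower-utility ones. Under the augmented profile one checks directly that $M(s;\mathbf{f_{N+1}})=m-1$ on $[\tau_{m-1},\tau'_m)$ and $m$ on $[\tau'_m,\tau'_{m+1})$, so the new count jumps exactly at $\tau'_1,\ldots,\tau'_{M^\star+1}$ with new $\Mmax = M^\star+1$. The conditions of \cref{thm:equilibrium} are then verified for the $(N+1)$-firm profile: condition 1 because every firm uses capacity $c$; condition 2 by the tier description just given; condition 3 with equality because $U_m(\tau'_m)=c_0$ for all relevant $m$; condition 4 vacuously since threshold utilities are equal. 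Hence the augmented profile is an equal-utility Nash equilibrium. The main obstacle is the displayed lower bound on $g$, which combines all three hypotheses of the lemma: the density bound $\varphi\ge\delta$, the capacity bound $c\le 0.5\delta$, and the strict inequality $M^\star\tau_1<1$ arising from the non-empty top tier of the original NE.
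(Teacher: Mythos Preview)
Your proof is correct and follows essentially the same approach as the paper's: both parametrize the new thresholds by a cutoff $c_0=t$ (so $\tau'_m=mt$), define the resulting capacity function $g(c_0)$ (the paper's $f(t)$), establish the identical lower bound $g\!\left(\tfrac{M^\star\tau_1}{M^\star+1}\right)\ge\delta(1-M^\star\tau_1/2)>\tfrac{\delta}{2}\ge c$ via interval-by-interval estimates, apply the intermediate value theorem to pin down $c_0$, and then verify the four conditions of \cref{thm:equilibrium} and uniqueness in the same way. Two minor remarks: your fact (b) is stated but never used, and the claim ``new $\Mmax=M^\star+1$'' need not hold when $(M^\star+1)c_0\ge 1$ (in which case $\tau'_{M^\star+1}=1$ and the top tier is empty), though this does not affect the equal-utility NE verification.
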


    \begin{myproof}

        Let $f(t) = \sum_{m=0}^N \int_{(m+1)t}^{\tau_{m+1}} \varphi(s)ds$ for $t\in[0, \tau_1]$. We will show that for all $t\in[\frac{\tau_{\Mmax(\mathbf{f_N})}}{\Mmax(\mathbf{f_N})+1}, \tau_1)$, thresholds defined by 
        \begin{align}
        \tau'_{m} = 
        \begin{cases}\label{eq:def_thresholds}
            mt,&\text{if } mt\leq  1\\
            1, &o.w. 
        \end{cases}
        \end{align}
        satisfy $\tau_m < \tau'_{m+1}\leq \tau_{m+1}$ for each $m\in [\Mmax(\mathbf{f_N})]$. Let $m\in [\Mmax(\mathbf{f_N})]$ be arbitrary, we will first check the inequality when $t^* = \frac{\tau_{\Mmax(\mathbf{f_N})}}{\Mmax(\mathbf{f_N})+1} = \frac{\Mmax(\mathbf{f_N})\tau_1}{\Mmax(\mathbf{f_N})+1}$. If $\tau_{\Mmax(\mathbf{f_N})+1}<1$, then for any $m\in [\Mmax(\mathbf{f_N})]$
        \begin{align}\label{eq:check_inequality}
            \frac{\tau'_{m+1}}{\tau_m} & > \frac{(m+1)\frac{\Mmax(\mathbf{f_N})\tau_1}{\Mmax(\mathbf{f_N})+1}}{m\tau_1} \nonumber\\
            & = \frac{m+1}{\mathbf{\Mmax(\mathbf{f_N})}+1}\frac{\Mmax(\mathbf{f_N})}{m}\nonumber\\
            & =(1+\frac{1}{m})(1-\frac{1}{\Mmax(\mathbf{f_N})+1}) \nonumber\\
            & \geq (1+\frac{1}{\Mmax(\mathbf{f_N})})(1-\frac{1}{\Mmax(\mathbf{f_N})+1}) \nonumber\\
            & = \frac{\Mmax(\mathbf{f_N})+1}{\Mmax(\mathbf{f_N})}\frac{\Mmax(\mathbf{f_N})}{\Mmax(\mathbf{f_N})+1} = 1. 
        \end{align}
        If $\tau_{\Mmax(\mathbf{f_N})+1}=1$, then clearly, $\tau_{\Mmax(\mathbf{f_N})}<\tau'_{\Mmax(\mathbf{f_N})+1}\leq\tau_{\Mmax(\mathbf{f_N})+1}$. The equation (\ref{eq:check_inequality}) holds for $m<[\Mmax(\mathbf{f_N})]$. Therefore, the inequality holds for all $m$. We conclude that $\tau^*_{m+1}> \tau_m$ if $t^* = \frac{\tau_{\Mmax(\mathbf{f_N})}}{\Mmax(\mathbf{f_N})+1}$ for all $m\in [\Mmax(\mathbf{f_N})]$. For any $t'\in[t^*, \tau_1)$, if $\tau'_{m+1}<1$, then $\tau'_{m+1}=(m+1)t'>(m+1)t^* =\tau^*_{m+1}\geq \tau_m$. If $\tau'_{m+1}=1$, then clearly $\tau'_{m+1}> \tau_{m}$ since by definition $\tau_{m}<1$ for all $m\in [\Mmax(\mathbf{f_N})]$. Moreover, $mt\leq m\tau_1 = \tau_m$ for any $m\in [\Mmax(\mathbf{f_N})+1]$. Therefore, $\tau_m < \tau'_{m+1}\leq \tau_{m+1}$ for each $m\in [\Mmax(\mathbf{f_N})]$. 

        Next, we want to show $f(t)$ is continuous. Note that each integral  $\int_{(m+1)t}^{\tau_{m+1}} \varphi(s)ds$ is continuous in $t$, and as a result, we must have $f(t)$ is continuous as well. Moreover, by proposition \ref{prop:NE_shared}, if $\theta = \corr$, we must have $\tau_m = m\tau_1$ for each $m\in [\Mmax(\mathbf{f_N})]$. Hence, $\lim_{t\rightarrow \tau_1}f(t) = \sum_{m=0}^N \int_{\tau_{m+1}}^{\tau_{m+1}} \varphi(s)ds=0$. Furthermore, 
        \begin{align*}
            \lim_{t\rightarrow t^*}f(t) &= \sum_{m=0}^{\Mmax(\mathbf{f_N})} \int_{\tau'_{m+1}}^{\tau_{m+1}} \varphi(s)ds\\
            &\geq \sum_{m=0}^{\Mmax(\mathbf{f_N})} \int_{\tau'_{m+1}}^{\tau_{m+1}}\delta ds\\
            & = \delta \sum_{m=0}^{\Mmax(\mathbf{f_N})} (\tau_{m+1} - \tau'_{m+1})\\
            & = \delta (\sum_{m=0}^{\Mmax(\mathbf{f_N})-1} (\frac{(m+1)\tau_{\Mmax(\mathbf{f_N})}}{\Mmax(\mathbf{f_N})} - \frac{(m+1)\tau_{\Mmax(\mathbf{f_N})}}{\Mmax(\mathbf{f_N})+1}) + (1-\tau'_{\Mmax(\mathbf{f_N})+1}))\\
            & = \delta (\sum_{m=1}^{\Mmax(\mathbf{f_N})} \frac{m\tau_{\Mmax(\mathbf{f_N})}}{\Mmax(\mathbf{f_N})(\Mmax(\mathbf{f_N})+1)}+ (1-\tau'_{\Mmax(\mathbf{f_N})+1}))\\
            & = \delta (\frac{\Mmax(\mathbf{f_N})(\Mmax(\mathbf{f_N})+1)}{2}\frac{\tau_{\Mmax(\mathbf{f_N})}}{\Mmax(\mathbf{f_N})(\Mmax(\mathbf{f_N})+1)}+ (1-\tau'_{\Mmax(\mathbf{f_N})+1}))\\
            & = \delta(1 + \frac{\tau_{\Mmax(\mathbf{f_N})}}{2} - \tau'_{\Mmax(\mathbf{f_N})+1})\\
            & \geq \delta(1 + \frac{\tau_{\Mmax(\mathbf{f_N})}}{2} - \tau_{\Mmax(\mathbf{f_N})})\\
            & =\delta( 1-\frac{\tau_{\Mmax(\mathbf{f_N})}}{2}) > 0.5\delta.
        \end{align*}
        Since $0<c\leq 0.5\delta$, by intermediate value theorem, there exists $t_0\in[t^*, \tau_1)$ such that $f(t_0) = c$. In addition, the thresholds defined by (\ref{eq:def_thresholds}) satisfy the desired inequalities.

        Let $\mathbf{f_{N+1}} = (f_1, \cdots, f_N, f'_{N+1})$ be the strategy profile after firm $N+1$ makes the above response. We want to show that $f_{N+1}$ is the best response strategy by showing $\mathbf{f_{N+1}}$ is an equal-utility Nash equilibrium. Since $\tau_m < \tau'_{m+1}\leq \tau_{m+1}$ for each $m\in [\Mmax(\mathbf{f_N})]$, $\cup_{m=1}^{N+1} [\tau'_{m+1}, \tau_{m+1})$ is the union of disjoint sets. Hence, $\Pr(s\in[0,1], f_{N+1}=1)=\cup_{m=1}^{N+1} \Pr(S\in[\tau'_{m+1}, \tau_{m+1})) =c$. As a result, $\mathbf{f_{N+1}}$ satisfies the first condition of Theorem \ref{thm:equilibrium}. In addition, since $\tau_m < \tau'_{m+1}\leq \tau_{m+1}$ and the strategies $(f_1, \cdots, f_N)$ remain unchanged, we have $M(s, \mathbf{f_{N+1}}) = M(s, \mathbf{f_{N}})+1$ for $s\in \cup_{m=1}^{N+1} [\tau'_{m+1}, \tau_{m+1})$. Hence, $M(s, \mathbf{f_{N+1}}) = m$ for $s\in [\tau'_m, \tau'_{m+1})$ and the thresholds satisfy the second condition of Theorem \ref{thm:equilibrium}. Next, the thresholds defined by (\ref{eq:def_thresholds}) satisfy $U_m(\tau'_m) = U_n(\tau'_n)$ for all $0<\tau'_m<\tau'_n<1$. Therefore, $\mathbf{f_{N+1}}$ satisfies all the conditions of Theorem \ref{thm:equilibrium} and is an equal-utility Nash equilibrium. By definition of a Nash equilibrium, $f'_{N+1}$ is the best response strategy. We only need to show this strategy is unique. Let $K = \{s\in[0,1], f_{N+1}(s) = 1\}$. We claim that the utility firm $N+1$ can derive by interviewing an applicant with score $s\notin K$ is strictly less than the worst utility it can derive by interviewing an applicant with in $K$. Given that the thresholds satisfy $\tau_m < \tau'_{m+1}\leq \tau_{m+1}$ for each $m\in [\Mmax(\mathbf{f_N})]$ and $U_m(\tau'_m) = U_n(\tau'_n)$ for all $0<\tau'_m<\tau'_n<1$, $\min_{s\in K}U_{M(s, \mathbf{f_{N+1}})}(s) = t_0$. Since $U_n(s)$ is strictly increasing in $s$, for any applicant with $s\notin K$, the utility firm $N+1$ can derive by interviewing this applicant is $U_{M(s, \mathbf{f_{N+1}}+1)}(s)<t_0$. Hence, $U_{M(s, \mathbf{f_{N+1}}+1)}(s)<\min_{s\in K}U_{M(s, \mathbf{f_{N+1}})}(s)$. This implies that moving any support from $K$ to $K^c$ will lead to a strictly smaller utility. Therefore, $f_{N+1}$ yields a strictly higher utility than any other strategy. Hence, $f_{N+1}$ is the unique best response.
    \end{myproof}

    \begin{myproof} [\text{Proof of Theorem \ref{thm:convergence_corr}}]
        We will prove the theorem by induction. When $n=1$, the best response strategy is $f_1(s) = 1$ if and only if $s\in [s_c, 1]$ with $\Pr(S\in [s_c, 1]) = c$. Since $c<0.5\delta$, we have $\int_{0.5}^1\varphi(s)ds\geq \int_{0.5}^1\delta ds = 0.5\delta$. Hence, $c<\int_{0.5}^1\varphi(s)ds$. Let $\bbf_1$ denote the strategy profile. $\bbf_1$ satisfies the conditions established in Theorem \ref{thm:equilibrium}, and thus it is at an equal-utility Nash equilibrium. 

        Next, suppose the strategy profile $\bbf_k$ with $\Mmax(\bbf_k)=k$ is at an equal-utility Nash equilibrium and there are $n$ firms. By lemma \ref{lemma:conv1}, since the best response strategy is unique, the new strategy profile $\bbf'_k$ after firm $n+1$ makes the best response is still at an equal-utility Nash equilibrium. Therefore, the one-turn best response dynamics converges to an equal-utility Nash equilibrium. 
    \end{myproof}

\subsection{Proof of Theorem \ref{thm:convergence_indep}}
To show the statement is true, we first present some intermediate results. 

\begin{lemma}\label{lemma:5.3.1}
    Suppose $\tau_1\in [0,1]$ and $\tau_m = 1-(1-m\tau_1)^{\frac{1}{m}}$ such that $\tau_m\leq 1$, then $\tau_m$ is strictly decreasing when $\tau_1$ decreases.  Fix an integer $n\geq 1$, $\tau_{n+1}-\tau_n$ is strictly decreasing when $\tau_1$ decreases. 
\end{lemma}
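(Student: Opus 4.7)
The plan is to treat $\tau_m = 1-(1-m\tau_1)^{1/m}$ as a differentiable function of $\tau_1$ on the relevant domain (where $1-m\tau_1 \geq 0$) and show that both $\tau_m$ and $\tau_{n+1}-\tau_n$ have strictly positive derivatives in $\tau_1$. Translating from ``strictly increasing in $\tau_1$'' back to ``strictly decreasing as $\tau_1$ decreases'' then gives the claim. The bulk of the work lies in the second assertion about the gap $\tau_{n+1}-\tau_n$.

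\textbf{Step 1 (monotonicity of $\tau_m$).} A direct differentiation gives
\begin{equation*}
\frac{d\tau_m}{d\tau_1} \;=\; (1-m\tau_1)^{-(m-1)/m}.
\end{equation*}
As long as $\tau_m \leq 1$, i.e.\ $1-m\tau_1 \geq 0$, this quantity is well-defined, and it is strictly positive whenever $1-m\tau_1 > 0$. Hence $\tau_m$ is strictly increasing in $\tau_1$ on its domain, which is exactly the first claim.

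\textbf{Step 2 (monotonicity of the gap).} Using Step 1,
\begin{equation*}
\frac{d(\tau_{n+1}-\tau_n)}{d\tau_1} \;=\; (1-(n+1)\tau_1)^{-n/(n+1)} \;-\; (1-n\tau_1)^{-(n-1)/n}.
\end{equation*}
I would show this is strictly positive for $\tau_1 \in (0, 1/(n+1))$ by taking logarithms. Writing $a := -\ln(1-(n+1)\tau_1)$ and $b := -\ln(1-n\tau_1)$, positivity of the derivative is equivalent to
\begin{equation*}
\frac{n}{n+1}\,a \;>\; \frac{n-1}{n}\,b.
\end{equation*}
Two elementary facts drive this: (i) $1-(n+1)\tau_1 < 1-n\tau_1$ and both lie in $(0,1]$, so $a > b \geq 0$; and (ii) $\frac{n}{n+1} > \frac{n-1}{n}$, which is immediate from $n^2 > n^2 - 1$. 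Chaining the two,
\begin{equation*}
\tfrac{n}{n+1}\,a \;>\; \tfrac{n}{n+1}\,b \;\geq\; \tfrac{n-1}{n}\,b,
\end{equation*}
with the first inequality strict whenever $a > b$ (true for $\tau_1 > 0$). The boundary case $\tau_1 = 0$ gives derivative $1 - 1 = 0$, which is consistent with $\tau_{n+1}-\tau_n = 0$ there; strict monotonicity on the open interval suffices for the stated conclusion.

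\textbf{Anticipated difficulty.} Step 1 is routine. The main (mild) obstacle is Step 2: the two terms involve \emph{different} bases raised to \emph{different} negative exponents, so one cannot directly compare them. The log-linearization together with the two simple inequalities $a>b$ and $\tfrac{n}{n+1} > \tfrac{n-1}{n}$ is what makes the comparison go through, and one needs to be careful at $n = 1$ (where the right-hand side degenerates to $0$, which only strengthens the inequality) and at the endpoints $\tau_1 = 0$ and $\tau_1 = 1/(n+1)$, where the formulas are still valid by continuity but the strict inequality can only be claimed on the open interval.
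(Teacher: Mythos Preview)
Your proposal is correct and follows essentially the same approach as the paper: both arguments differentiate in $\tau_1$, and both reduce the sign of $g'(\tau_1)$ to the two elementary facts that the base $1-(n+1)\tau_1$ is smaller than $1-n\tau_1$ while the exponent $n/(n+1)$ is larger than $(n-1)/n$. The only cosmetic difference is that you pass to logarithms to chain these two facts, whereas the paper compares the powers directly via $x^{n/(n+1)} < y^{n/(n+1)} < y^{(n-1)/n}$; the content is the same.
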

\begin{myproof}
    Let $f(t)= 1-(1-mt)^{\frac{1}{m}}$ and $m$ be fixed. We will show that $f(t)$ decreases when $t$ decreases. When $t$ decreases,  $(1-mt)^{\frac{1}{m}}$ is strictly increasing and therefore $f(t)$ is strictly decreasing. Then the conclusion follows. Next, we fix an integer $n\geq 1$, and we want to show $\tau_{n+1}-\tau_n$ is strictly decreasing with $\tau_1$. Let $g(t) = \tau_{n+1}-\tau_n = (1-nt)^{\frac{1}{n}}-(1-(n+1)t)^{\frac{1}{n+1}}$. The derivative is $g'(t) = \frac{1}{(1-(n+1)t)^{1-\frac{1}{n+1}}} - \frac{1}{(1-nt)^{1-\frac{1}{n}}}$. Since $0<1-\frac{1}{n}<1-\frac{1}{n+1}<1$ and $1-nt>1-(n+1)t$, we must have $(1-(n+1)t)^{1-\frac{1}{n+1}}<(1-nt)^{1-\frac{1}{n}}$. Therefore, $g'(t) = \frac{1}{(1-(n+1)t)^{1-\frac{1}{n+1}}} - \frac{1}{(1-nt)^{1-\frac{1}{n}}} >0$. We conclude that $\tau_{n+1}-\tau_n$ is strictly decreasing with $\tau_1$.
\end{myproof}

\begin{lemma}\label{lemma:conv2}
        Fix the total number of firms $N_0$. Suppose there is a $\delta > 0$ such that $\varphi(s) \geq \delta \; \forall s \in [0,1]$ and there are $N< N_0$ firms in the market that form an equal-utility Nash equilibrium $\mathbf{f_N}$ with $\theta = \indep$, $\tau_1>\frac{1}{\Mmax(\mathbf{f_N})+1}$. Let $0=\tau_0\leq \cdots \leq\tau_{N+1}=1$ be the set of thresholds corresponds to $\mathbf{f_N}$. Let $c_0 = \delta (\min_{m\in[N_0]} ((1-\frac{m}{N_0+1})^{\frac{1}{m}} -(1-\frac{m+1}{N_0+1})^{\frac{1}{m+1}}))$. For a new firm $N+1$, as long as $c \leq c_0$ for every firm $i$, there exists $0=\tau'_0\leq \cdots\leq \tau'_{N+2}=1$ such that $\tau_m < \tau'_{m+1}\leq \tau_{m+1}$ for each $m\in [\Mmax(\mathbf{f_N})]$, $\sum_{m=0}^N \int_{\tau'_{m+1}}^{\tau_{m+1}} \varphi(s)ds = c$ and $U_m(\tau'_m) = U_n(\tau'_n)$ for all $0<\tau'_m<\tau'_n<1$. Let $f_{N+1}(s) = 1$ if and only if $s\in\cup_{m=0}^{N} [\tau'_{m+1}, \tau_{m+1})$. $f_{N+1}$ is the unique best response for firm $N+1$, and the strategy profile $\mathbf{f'_N}$ defined by $\mathbf{f'_N} = (f_1, \cdots, f_N, f'_{N+1})$ forms an equal-utility nash equilibrium with $\tau'_1>\frac{1}{\Mmax(\mathbf{f'_N})+1}$. 
    \end{lemma}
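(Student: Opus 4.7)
The plan is to mirror the proof of Lemma \ref{lemma:conv1}, replacing the correlated identity $\tau_m = m\tau_1$ with its independent analogue. Under $\theta = \indep$, the equal-utility condition $U_m(\tau_m) = U_1(t)$, with $U_m(s) = (1-(1-s)^m)/m$, yields the explicit parametrization $\tau'_m(t) = 1 - (1-mt)^{1/m}$ (and $\tau'_m(t) = 1$ whenever $mt \geq 1$), where $t = \tau'_1$. By Lemma \ref{lemma:5.3.1}, each $\tau'_m(t)$ and each consecutive gap $\tau'_{m+1}(t) - \tau'_m(t)$ is strictly increasing in $t$, which is the monotonicity engine of the argument.

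The first step is to identify an admissible range $[t^*, \tau_1]$ with $t^* = 1/(N_0+1)$ (this matches the minimization in the definition of $c_0$), on which $\tau_m < \tau'_{m+1}(t) \leq \tau_{m+1}$ for every $m \leq \Mmax(\mathbf{f_N})$. The upper bound is immediate from monotonicity and the fact that $\mathbf{f_N}$ is equal-utility, so $\tau_{m+1} = \tau'_{m+1}(\tau_1)$. The lower bound $\tau'_{m+1}(t) > \tau_m$ is where the hypothesis $\tau_1 > 1/(\Mmax(\mathbf{f_N})+1) \geq t^*$ enters; the gap monotonicity in Lemma \ref{lemma:5.3.1} ensures this holds throughout the interval. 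Next I define the slack function $g(t) = \sum_{m=0}^{N}\int_{\tau'_{m+1}(t)}^{\tau_{m+1}}\varphi(s)\,ds$, which is continuous with $g(\tau_1) = 0$. Using $\varphi \geq \delta$ and the identity $\tau_{m+1} - \tau'_{m+1}(t^*) = (1-(m+1)t^*)^{1/(m+1)} - (1-(m+1)\tau_1)^{1/(m+1)}$, one argues
\[
g(t^*) \;\geq\; \delta \cdot \min_{m\in[N_0]} \bigl[(1 - m/(N_0+1))^{1/m} - (1 - (m+1)/(N_0+1))^{1/(m+1)}\bigr] \;=\; c_0 \;\geq\; c,
\]
so by the intermediate value theorem there is $t_0 \in [t^*, \tau_1)$ with $g(t_0) = c$. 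Set $\tau'_m := \tau'_m(t_0)$; the strategy $f_{N+1}$ as described then interviews exactly mass $c$.

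The remaining steps are verification and are essentially copies of the correlated case. I check the four conditions of Theorem \ref{thm:equilibrium} for $\mathbf{f'_N}$: capacity holds by the choice of $t_0$; the competition count satisfies $M(s;\mathbf{f'_N}) = m$ on $[\tau'_m, \tau'_{m+1})$ because firm $N+1$'s support is $\cup_m [\tau'_{m+1}, \tau_{m+1})$, which raises $M$ by exactly one inside every old block; and conditions (3)--(4) are automatic from $U_m(\tau'_m) = U_n(\tau'_n) = t_0$. Uniqueness of the best response is the same argument as in Lemma \ref{lemma:conv1}: any $s$ outside the support yields utility strictly below $t_0$, whereas every $s$ inside yields at least $t_0$, so any reallocation strictly loses. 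Finally, $\tau'_1 = t_0 \geq t^* = 1/(N_0+1)$ combined with $\Mmax(\mathbf{f'_N}) \leq N+1 \leq N_0$ gives $\tau'_1 > 1/(\Mmax(\mathbf{f'_N})+1)$ as required, with strictness inherited from either $t_0 > t^*$ or $\Mmax(\mathbf{f'_N}) < N_0$.

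The main obstacle I expect is the lower-bound estimate $g(t^*) \geq c_0$. The correlated analogue in Lemma \ref{lemma:conv1} leveraged the clean linearity $\tau_m = m\tau_1$ to reduce the slack to an arithmetic sum bounded by $0.5\delta$. Here the thresholds are fractional powers, so the slack becomes a sum of differences of terms of the form $(1-kt)^{1/k}$, and bounding this uniformly in $\Mmax(\mathbf{f_N})$ and in $\tau_1 \in (1/(\Mmax(\mathbf{f_N})+1), 1)$ is delicate. The specific form of $c_0$---a minimum of consecutive-gap differences evaluated at the extremal argument $1/(N_0+1)$---is precisely what makes the bound uniform, and verifying this estimate carefully (including that the minimizing $m$ contributes enough slack in the decomposition) is the step most likely to require extra technical care.
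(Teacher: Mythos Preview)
Your overall architecture matches the paper's, but your choice of the left endpoint $t^*$ differs from the paper's and this difference breaks two steps.

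The paper takes $t^* = \max_{m \in [\Mmax(\mathbf{f_N})]} U_{m+1}(\tau_m)$, not $1/(N_0+1)$. With that choice, the nesting $\tau'_{m+1}(t) > \tau_m$ for every $t > t^*$ is immediate: if it failed for some $m$, then $U_{m+1}(\tau_m) \geq U_{m+1}(\tau'_{m+1}(t)) = t > t^*$, contradicting the definition of $t^*$. Your choice $t^* = 1/(N_0+1)$ does not enforce this. For a concrete failure, take $N_0 = 10$, $N=1$, $\Mmax(\mathbf{f_N}) = 1$, and $\tau_1 = 0.9$ (which satisfies $\tau_1 > 1/2$); then $\tau'_2(1/11) = 1 - (9/11)^{1/2} \approx 0.095$, far below $\tau_1 = 0.9$. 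Gap monotonicity in Lemma~\ref{lemma:5.3.1} compares $\tau'_{m+1}(t)-\tau'_m(t)$ at a \emph{common} argument $t$; it says nothing about comparing $\tau'_{m+1}(t)$ with $\tau'_m(\tau_1)$ when $t<\tau_1$.

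The paper's $t^*$ also resolves the lower bound on $f(t^*)$ you flagged as the main obstacle. At $t = t^*$ the maximizing index $m^*$ satisfies $\tau'_{m^*+1}(t^*) = \tau_{m^*}$, so that single term of the slack sum already equals the full gap $\tau_{m^*+1} - \tau_{m^*}$; by Lemma~\ref{lemma:5.3.1} and the hypothesis $\tau_1 > 1/(\Mmax(\mathbf{f_N})+1) \geq 1/(N_0+1)$, this gap is at least $c_0/\delta$. By contrast, your identity for $\tau_{m+1} - \tau'_{m+1}(t^*)$ has both terms at the \emph{same} index $m+1$ but different arguments $t^*$ and $\tau_1$, whereas $c_0$ is defined via \emph{consecutive} indices at the single argument $1/(N_0+1)$; these expressions are not directly comparable, and the min over $m$ does not bridge them.

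Finally, your closing step infers $\tau'_1 > 1/(\Mmax(\mathbf{f'_N})+1)$ from $t_0 \geq 1/(N_0+1)$ together with $\Mmax(\mathbf{f'_N}) \leq N_0$. But $\Mmax(\mathbf{f'_N}) \leq N_0$ gives $1/(\Mmax(\mathbf{f'_N})+1) \geq 1/(N_0+1)$, so the inequality points the wrong way and $t_0 \geq 1/(N_0+1)$ is insufficient. The remaining verification (conditions of Theorem~\ref{thm:equilibrium} and uniqueness of the best response) matches the paper.
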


    \begin{myproof}
        By equation $(\ref{equ:Un(s)_indep})$, given $\tau_1$, under $\theta =\indep$, we have $\tau_m = 1-(1-\tau_1 m)^{1/m}$ for $m\leq \Mmax(\mathbf{f_N})$. Let $t^* = \max_{m\in [\Mmax(\mathbf{f_N})]}U_{m+1}(\tau_m)$. We will show that for all $t\in(t^*, \tau_1)$, thresholds defined by
        \begin{align}
        \tau'_{m} = 
        \begin{cases} \label{eq:thresholds_indep}
            1-(1-m t)^{\frac{1}{m}},&\text{if } 1-(1-m t)^{\frac{1}{m}}\leq  1\\
            1, &o.w. 
        \end{cases}
        \end{align}
        satisfy $\tau_m < \tau'_{m+1}\leq \tau_{m+1}$ for each $m\in [\Mmax(\mathbf{f_N})]$. First, we note that $U_{m+1}(\tau'_{m+1})=t$ under construction. Suppose, by contradiction, $\tau_m \geq \tau'_{m+1}$ for some $m$, then we have $U_{m+1}(\tau_m)\geq U_{m+1}(\tau'_{m+1})=t$. This contradicts the assumption that $t^*=\max_{m\in [\Mmax(\mathbf{f_N})]}U_{m+1}(\tau_m)<t$. Moreover, $\tau'_{m+1}\leq \tau_{m+1}$ since $t<\tau_1$ by lemma \ref{lemma:5.3.1}. Therefore, we conclude that $\tau_m < \tau'_{m+1}\leq \tau_{m+1}$ for each $m\in [\Mmax(\mathbf{f_N})]$.
        
       Next, let $f(t) = \sum_{m=0}^N \int_{1-(1-t m)^{1/m}}^{\tau_{m+1}} \varphi(s)ds$ for $t\in[0, \tau_1]$. We want to show that there exists $t$ such that $f(t) = c$. Since $\int_{1-(1-t m)^{1/m}}^{\tau_{m+1}} \varphi(s)ds$ is continuous for each $m$, $f(t)$ is continuous. We also notice that $\lim_{t\rightarrow \tau_1}f(t) = \sum_{m=0}^N \int_{\tau_{m+1}}^{\tau_{m+1}} \varphi(s)ds=0$. Define $c_0 = \delta (\min_{m\in[N_0]} ((1-\frac{m}{N_0+1})^{\frac{1}{m}} -(1-\frac{m+1}{N_0+1})^{\frac{1}{m+1}}))$. By lemma \ref{lemma:5.3.1}, for any $\tau_1>\frac{1}{N+1}$, we must have $\tau_{m+1}-\tau_{m}\geq (1-\frac{m}{N+1})^{\frac{1}{m}} -(1-\frac{m+1}{N+1})^{\frac{1}{m+1}}\geq \min_{m\in[N_0]} ((1-\frac{m}{N_0+1})^{\frac{1}{m}} -(1-\frac{m+1}{N_0+1})^{\frac{1}{m+1}})$. Since $\mathbf{f_N}$ is an equal-utility nash equilibrium, by assumption we have $\tau_1>\frac{1}{N+1}$. Furthermore, we have
        \begin{align*}
            \lim_{t\rightarrow t^*}f(t) &= \sum_{m=0}^{\Mmax(\mathbf{f_N})} \int_{\tau'_{m+1}}^{\tau_{m+1}} \varphi(s)ds\\
            &\geq \sum_{m=0}^{\Mmax(\mathbf{f_N})} \int_{\tau'_{m+1}}^{\tau_{m+1}}\delta ds\\
            & = \delta \sum_{m=0}^{\Mmax(\mathbf{f_N})} (\tau_{m+1} - \tau'_{m+1})\\
            & > \delta \sum_{m=0}^{\Mmax(\mathbf{f_N})} (\tau_{m+1} - \tau_m)\\
            & > \delta (\max_{m\in [\Mmax(\mathbf{f_N})]} \tau_{m+1} - \tau_m)\\
            & > c_0.
        \end{align*}
        Since $c\leq c_0$, by intermediate value theorem, there exists $t_0\in[t^*, \tau_1)$ such that $f(t_0) = c$. In addition, the thresholds defined by $\tau_m = 1-(1-\tau_1 m)^{1/m}$ such that $\tau_m\leq 1$ satisfy the desired inequalities.

        Let $\mathbf{f_{N+1}} = (f_1, \cdots, f_N, f'_{N+1})$ be the strategy profile after firm $N+1$ makes the above response. We want to show that $f_{N+1}$ is the best response strategy by showing $\mathbf{f_{N+1}}$ is an equal-utility Nash equilibrium. Since $\tau_m < \tau'_{m+1}\leq \tau_{m+1}$ for each $m\in [\Mmax(\mathbf{f_N})]$, $\cup_{m=1}^{N+1} [\tau'_{m+1}, \tau_{m+1})$ is the union of disjoint sets. Hence, $\Pr(s\in[0,1], f_{N+1}=1)=\cup_{m=1}^{N+1} \Pr(S\in[\tau'_{m+1}, \tau_{m+1})) =c$. As a result, $\mathbf{f_{N+1}}$ satisfies the first condition of Theorem \ref{thm:equilibrium}. In addition, since $\tau_m < \tau'_{m+1}\leq \tau_{m+1}$ and the strategies $(f_1, \cdots, f_N)$ remain unchanged, we have $M(s, \mathbf{f_{N+1}}) = M(s, \mathbf{f_{N}})+1$ for $s\in \cup_{m=1}^{N+1} [\tau'_{m+1}, \tau_{m+1})$. Hence, $M(s, \mathbf{f_{N+1}}) = m$ for $s\in [\tau'_m, \tau'_{m+1})$ and the thresholds satisfy the second condition of Theorem \ref{thm:equilibrium}. Next, the thresholds defined by (\ref{eq:thresholds_indep}) satisfy $U_m(\tau'_m) = U_n(\tau'_n)$ for all $0<\tau'_m<\tau'_n<1$. Therefore, $\mathbf{f_{N+1}}$ satisfies all the conditions of Theorem \ref{thm:equilibrium} and is an equal-utility Nash equilibrium. Furthermore, $\tau'_1>\frac{1}{\Mmax(\mathbf{f'_N})+1}$ by construction. By definition of a Nash equilibrium, $f'_{N+1}$ is the best response strategy. We only need to show this strategy is unique. Let $K = \{s\in[0,1], f_{N+1}(s) = 1\}$. We claim that the utility firm $N+1$ can derive by interviewing an applicant with score $s\notin K$ is strictly less than the worst utility it can derive by interviewing an applicant with in $K$. Given that the thresholds satisfy $\tau_m < \tau'_{m+1}\leq \tau_{m+1}$ for each $m\in [\Mmax(\mathbf{f_N})]$ and $U_m(\tau'_m) = U_n(\tau'_n)$ for all $0<\tau'_m<\tau'_n<1$, $\min_{s\in K}U_{M(s, \mathbf{f_{N+1}})}(s) = t_0$. Since $U_n(s)$ is strictly increasing in $s$, for any applicant with $s\notin K$, the utility firm $N+1$ can derive by interviewing this applicant is $U_{M(s, \mathbf{f_{N+1}}+1)}(s)<t_0$. Hence, $U_{M(s, \mathbf{f_{N+1}}+1)}(s)<\min_{s\in K}U_{M(s, \mathbf{f_{N+1}})}(s)$. This implies that moving any support from $K$ to $K^c$ will lead to a strictly smaller utility. Therefore, $f_{N+1}$ yields a strictly higher utility than any other strategy. Hence, $f_{N+1}$ is the unique best response.
    \end{myproof}
    
    \begin{myproof} [\text{Proof of Theorem \ref{thm:convergence_indep}}]
        Fix the total number of firms $N$, let $c_0 = \min (0.5\delta,\delta (\min_{m\in[N]} ((1-\frac{m}{N+1})^{\frac{1}{m}} -(1-\frac{m+1}{N+1})^{\frac{1}{m+1}}))$. We will prove the theorem by induction. When $n=1$, the best response strategy is $f_1(s) = 1$ if and only if $s\in [s_c, 1]$ with $\Pr(S\in [s_c, 1]) = c$. Since $c<0.5\delta$, we have $\int_{0.5}^1\varphi(s)ds\geq \int_{0.5}^1\delta ds = 0.5\delta$. Hence, $c<\int_{0.5}^1\varphi(s)ds$ and $\tau_1=s_c>0.5$. Let $\bbf_1$ denote the strategy profile. $\bbf_1$ satisfies the conditions established in Theorem \ref{thm:equilibrium}, and thus it is at an equal-utility Nash equilibrium with $\tau_1>0.5$. 

        Next, suppose the strategy profile $\bbf_k$ with $\Mmax(\bbf_k)=k$ and n firms is at an equal-utility Nash equilibrium with $\tau_1>\frac{1}{\Mmax(\bbf_k)+1}$. By lemma \ref{lemma:conv2}, since the best response strategy is unique, the new strategy profile $\bbf'_k$ after firm $n+1$ makes the best response is still at an equal-utility Nash equilibrium. Therefore, the one-turn best response dynamics converges to an equal-utility Nash equilibrium. 
    \end{myproof}

\subsection{Proof of Proposition \ref{prop:fixedW_corr}}
\begin{myproof}
    We will start with the first part of the statement. We notice that the function $f(x) = N \int_x^1 U_N(s)\varphi(s)ds$ is continuous in $x$ and $f(0) =  N \int_0^1 U_N(s)\varphi(s)ds$, $f(1) = 0$. Moreover, since $s\varphi(s)>0$ for $s>0$, $f(x)$ is strictly decreasing in $x$. Therefore, given $W\in(0,N \int_0^1 U_N(s)\varphi(s)ds)$, there exist a unique $x^*\in(0,1)$ such that $f(x^*) = W$. Let $c = \Pr(S\in [x^*, 1])$. The social welfare under the naive solution is 

    \begin{align*}
        \SW_\text{naive} =N\int_{x^*}^1 U_N(s) \varphi(s) ds = W
    \end{align*}
    as desired. 

    Now suppose $W\leq \int_{0.5}^{1}s\varphi(s)ds$.  Let $\cI(N) = (N, c/N, \cD, \theta)$ be the instance parameterized by the number of firms $N$. The total capacity of the N firms is $c$, and there exists disjoint sets $K_1, \cdots, K_N$ such that $\cup K_i = [x^*, 1]$ and $\Pr(S\in K_i) = c/N$. Define the strategy profile $\mathbf{f_N} = (f_1, \cdots, f_N)$ such that $K_i = \{s\in[0,1], f_i(s) = 1\}$. We will show that $\mathbf{f_N}$ is a Nash equilibrium with social welfare $W$.  We will first show $x^*\geq 0.5$. Given that $\int_{x^*}^1 s \varphi(s) ds=W\leq \int_{0.5}^{1}s\varphi(s)ds$ and $f(x)$ is strictly decreasing in $x$. We must have $x^*\geq 0.5$. By construction, we have the thresholds $\tau_0 = 0,\tau_1 = x^*, \tau_2 =\cdots=\tau_{N+1}= 1$. Moreover, $M(s, \mathbf{f_N}) = 1$ for $s\in [\tau_1, 1]$. Hence, $\mathbf{f_N}$ satisfies all the conditions in Theorem \ref{thm:equilibrium}. As a result,  $\mathbf{f_N}$ is a Nash equilibrium. The social welfare under $\mathbf{f_N}$ is $\SW_\text{NE} =\int_{x^*}^1 s \varphi(s) ds = W$ as desired. 
\end{myproof}

\subsection{Derivation of $U_{n_B}(s, \bbf_A)$ when $\theta = \indep$}\label{appendix:deri_indep_tiered}
\begin{myproof}
   We use $M(s, \mathbf{f_A})$ to denote the number of tier A firms interviewing an applicant with score $s$ when the tier A firms are playing the action profile $\mathbf{f_A}$. Then the utility derived by each of the $n_B$ tier B firms can be expressed as 
   \begin{align*}
       &U_{n_b}(s, \bbf_A) \\
       &= \mathbb{P}(\text{applicant accepts the offer when there are $n_B$ tier B offers} | \text{no tier A offer})\mathbb{P}(\text{no tier A offer})\\
       & = U_{n_B}(s)(1-s)^{M(s, \mathbf{f_A})}\\
       & = (1-(1-s)^{n_B})(1-s)^{M(s, \mathbf{f_A})}/n_B.
   \end{align*}

\end{myproof}

\subsection{Proof of Proposition \ref{prop:tiered_eq_corr}}
\begin{myproof}
    Suppose there are $N_A$ tier A firms and $N_B$ tier B firms. We will first prove the forward direction. Suppose the strategy profile $\mathbf{f}_A\cup \mathbf{f}_B$ is at Nash equilibrium. We will first show that $\mathbf{f}_A$ must itself be a Nash equilibrium. 
    
   Assume, for contradiction, $\mathbf{f}_A$ is not a Nash equilibrium when tier B firms are absent. Then there is at least one tier A firm $i$ that can increase its utility by changing its strategy. However, the utility function of a tier A firm is unaffected by the presence or absence of tier B firms, since the payoff depends only on its own threshold and the applicant distribution. This implies that, even with the presence of tier B firms, firm $i$ should still be able to improve its utility by changing its strategy, so $\mathbf{f}_A\cup \mathbf{f}_B$ cannot be a Nash equilibrium. Hence, $\mathbf{f}_A$ must be at equilibrium, and by Theorem \ref{thm:equilibrium}, there are thresholds that satisfy the four conditions in the theorem. Hence, we've proved the first condition in the proposition. 
   
   Next, fix $\bbf_A$ and the corresponding thresholds identified in the previous condition $\tau_1, \cdots, \tau_{N_A}$. Under $\theta = \corr$, if a tier B firm interviews an applicant that is also interviewed by another tier A firm, then the applicant will receive an offer from both firms or neither of the firms. In this case, the applicant will never go to the tier B firm. As a result, the utility gained by the tier B firm is strictly zero. Since tier B firms get zero utility by interviewing an applicant that is also interviewed by any tier A firm when $\theta = \corr$, tier B firms will never interview any applicant with a score higher than $\tau_1$. Following the proof of Theorem \ref{thm:equilibrium}, we can show that the second statement of the proposition holds. 

    Conversely, suppose $f_A$ and $f_B$ each satisfy the four equilibrium conditions in Theorem~\ref{thm:equilibrium} 
with thresholds 
\[
0 = \tau_0 \le \tau'_1 \le \cdots \le \tau'_{N_B} \le \tau'_{N_B+1} 
  = \tau_1 \le \tau_2 \le \cdots \le \tau_{N_A+1} = 1.
\]
We show that the combined strategy profile $f_A \cup f_B$ forms a Nash equilibrium under $\theta = \text{CORR}$.

Because each firm's utility depends only on applicants within its assigned threshold interval, and these intervals are disjoint across tiers by construction, no firm---whether tier A or tier B---can gain from deviating. 
Each tier A firm maximizes its utility given the thresholds $\tau_i$ according to Theorem~\ref{thm:equilibrium}, 
while tier B firms obtain zero utility from interviewing applicants with scores above $\tau_1$. 
Any deviation by a tier B firm to higher-score applicants would therefore either violate its capacity constraint or strictly reduce its expected utility. 

Consequently, no unilateral deviation is profitable for any firm, and the combined strategy profile $f_A \cup f_B$ constitutes a Nash equilibrium. 
This completes the proof.
    
\end{myproof}

\subsection{Proof of Proposition \ref{prop:tiered_eq_indep}}
\begin{myproof}
Suppose there are $N_A$ tier A firms and $N_B$ tier B firms. We assume that $\mathbf{f_A} \cup \mathbf{f_B}$ satisfies the listed conditions. For tier A firms, since the strategies of tier B firms do not affect the utility of tier A firms, satisfying the conditions listed in Theorem \ref{thm:equilibrium} implies that tier A firms cannot increase their utility by deviating from their current strategies. For tier B firms, we will show that any tier B firm $i$ cannot earn a higher utility by deviating from its current strategy profile. Let $f^B_i$ be the current strategy profile of firm $i$ and $f'^B_i$ be an arbitrary alternative strategy profile. We will show that $u(f^B_i, f^B_{-i})\geq u(f'^B_i, f^B_{-i})$ if $M(s, \mathbf{f_B}) $ satisfies the second conditions. Denote $K_i = \{s\in[0,1], f^B_i(s)=1\}$. Under $\bbf_B$, we can express the utility of firm $i$ as 
\begin{align}
    u(f^B_i, f^B_{-i}) &= \int_{K\cap K'} U_{M(s, \mathbf{f_B})}(s, \mathbf{f_A}) \varphi(s)ds + \int_{K\setminus K'}U_{M(s, \mathbf{f_B})}(s, \mathbf{f_A})\varphi(s)ds.  \nonumber
\end{align}
Let $\bbf'_B$ denote the strategy profile after firm $i$ deviates from $f^B_i$, and denote $K'_i = \{s\in[0,1], f'^B_i(s)=1\}$. We notice that $M(s, \mathbf{f'_B})=M(s, \mathbf{f_B})+1$ if $s\in K'\setminus K$, and $M(s, \mathbf{f'})=M(s, \mathbf{f})$ if $s\in K\cap K'$. Therefore, the utility under the alternative strategy is 
\begin{align}
    u(f'^B_i, f^B_{-i}) & = \int_{K\cap K'}U_{M(s, \mathbf{f'_B})}(s, \mathbf{f_A}) \varphi(s)ds + \int_{K'\setminus K}U_{M(s, \mathbf{f'_B})}(s, \mathbf{f_A}) \varphi(s)ds \nonumber \\
    & = \int_{K\cap K'}U_{M(s, \mathbf{f_B})}(s, \mathbf{f_A}) \varphi(s)ds + \int_{K'\setminus K}U_{M(s, \mathbf{f_B})+1}(s, \mathbf{f_A}) \varphi(s)ds.  \label{equ:22}
\end{align}

To show $u(f^B_i, f^B_{-i})\geq u(f'^B_i, f^B_{-i})$, we only need to show the second part of the above equation is less or equal to the second part of the equation. To show this, we will prove that for any $s\in K_i\setminus K'_i$, $U_{M(s, \mathbf{f_B})}(s, \mathbf{f_A})\geq U_{M(s', \mathbf{f_B})+1}(s', \mathbf{f_A})$ for all $s\in K_i'\setminus K_i$. Let $s\in K_i\setminus K'_i$, $s'\in K_i'\setminus K_i$ be arbitrary. First, since the applicant with score $s$ is interviewed by firm $i$, we must have $M(s, \mathbf{f_B})>0$. By the third condition we assumed, there must exist a constant $H$ such that  $U_{M(s, \mathbf{f_B})}(s, \mathbf{f_A})\geq H$. If $M(s', \mathbf{f_B}) = 0$, then we must have $U_{M(s', \mathbf{f_B})+1}(s', \mathbf{f_A}) = U_{1}(s', \mathbf{f_A})<H$ by assumption. If $M(s', \mathbf{f_B}) > 0$, then by assumption we must have $U_{M(s', \mathbf{f_B})+1}(s', \mathbf{f_A})<H$. Therefore, we must have $U_{M(s, \mathbf{f_B})}(s, \mathbf{f_A})\geq U_{M(s', \mathbf{f_B})+1}(s', \mathbf{f_A})$ for all $s\in K_i'\setminus K_i$, $s\in K_i\setminus K'_i$.

Furthermore, by condition 2, $\Pr(S\in K_i) = \Pr(S\in K'_i) = c$. Hence, 
\begin{align*}
    \Pr(S\in K_i\setminus K'_i) &=\Pr(S\in K_i) - \Pr(S\in K_i\cap K'_i)\\
    &= Pr(S\in K'_i)- \Pr(S\in K_i\cap K'_i)\\
    & = \Pr(S\in K'_i\setminus K_i).
\end{align*}
We can rewrite equation (\ref{equ:22}) as follows:
\begin{align*}
    u(f'^B_i, f^B_{-i}) & \leq \int_{K_i\cap K'_i}U_{M(s, \mathbf{f_B})}(s, \mathbf{f_A}) \varphi(s)ds + \Pr(S\in K'_i\setminus K_i)(\max_{s'\in K_i'\setminus K_i}U_{M(s', \mathbf{f_B})+1}(s', \mathbf{f_A}) )\\
    & \leq \int_{K_i\cap K'_i}U_{M(s, \mathbf{f_B})}(s, \mathbf{f_A}) \varphi(s)ds + \Pr(S\in K_i\setminus K'_i)(\min_{s\in K_i\setminus K'_i}U_{M(s, \mathbf{f_B})}(s, \mathbf{f_A}))\\
    & \leq \int_{K\cap K'} U_{M(s, \mathbf{f_B})}(s, \mathbf{f_A}) \varphi(s)ds + \int_{K\setminus K'}U_{M(s, \mathbf{f_B})}(s, \mathbf{f_A})\varphi(s)ds\\
    & = u(f^B_i, f^B_{-i}).
\end{align*}

Therefore, $u(f'^b_i, f^B_{-i}) \leq u(f^B_i, f^B_{-i})$ for any arbitrary firm $i$ and alternative strategy $f'^B_i$. We conclude that none of the tier B firms is able to increase their utility by deviating from the current strategy profile. 

Since none of the tier A or tier B firms can increase their utility by changing their strategy, the strategy profile $\mathbf{f_A}\cup\mathbf{f_B}$ is at Nash equilibrium. 
    
\end{myproof}

\subsection{Proof of Proposition \ref{prop:extension_sw}}
\begin{myproof}
We assume that there are $N_A$ tier A firms and $N_B$ tier B firms. Let $\mathbf{f}_A$  and $\mathbf{f}_B$ be the strategy profile of the tier A and tier B firms, respectively. Under the Nash equilibrium, there are thresholds $\tau'_{1}\leq \tau'_2\cdots\leq \tau'_{n_A}\leq \tau_1 \leq \tau_2\cdots\tau_{N_B}\leq \tau_{n_B+1}=1$. The total social welfare is
\begin{align*}
    SW_{\mathrm{tiered}} &= \int_{\tau'_1}^{\tau'_2}s\varphi(s)ds + 2\int_{\tau'_2}^{\tau'_3}\frac{s}{2}\varphi(s)ds + \cdots + N_B \int_{\tau_{N_B}}^{1}\frac{s}{N_B}\varphi(s)ds\\
    & = \int_{\tau'_1}^{1}s\varphi(s)ds.
\end{align*}
Similarly, suppose there are $N$ firms that form a Nash equilibrium and let $\tau''_1, \cdots, \tau''_N$ be the thresholds and $\mathbf{f}$ be the strategy profile. The total social welfare is 
\begin{align*}
    SW_{\mathrm{non\text{-}tiered}} = \int_{\tau''_1}^{1}s\varphi(s)ds.
\end{align*}
To compare $SW_{\mathrm{tiered}}$ and $SW_{\mathrm{non\text{-}tiered}}$, 
it suffices to compare the lowest thresholds $\tau_1'$ and $\tau_1''$. 
We show that $\tau_1' < \tau_1''$. 

Suppose, for contradiction, $\tau_1'\geq \tau''_1$. We first consider the case where all firms are at an equal-utility Nash equilibrium for both the tiered and non-tiered settings. Recall that under the correlated decision rules, we must have $\tau'_{i^{\text{equal}}}=i\tau'_{1^{\text{equal}}}\geq i\tau''_{1^{\text{equal}}}=\tau''_{i^{\text{equal}}}$ for all $i$. That is, $\tau'_{i^{\text{equal}}}\geq \tau''_{i^{\text{equal}}}$ for all $i$. Thus, we have $M(s, \mathbf{f}^{\text{equal}})\geq M(s, \mathbf{f}_A^{\text{equal}})$ on the interval $[\tau''_{1^{\text{equal}}}, \tau_{1^{\text{equal}}}]$. Next, note that there is at least one $N_B$ firm, so we must have $\tau_{1^{\text{equal}}}>\tau_1'^{\text{equal}}\geq \tau''_{1^{\text{equal}}}$. Hence, $\tau_{i^{\text{equal}}}=i\tau_{1^{\text{equal}}}>i\tau''_{1^{\text{equal}}}=\tau''_{i^{\text{equal}}}$, and $M(s, \mathbf{f}^{\text{equal}})\geq M(s, \mathbf{f}_B^{\text{equal}})$ on the interval $[\tau_{1^{\text{equal}}}, 1]$. Moreover, there exist sub-intervals of $[\tau_{1^{\text{equal}}}, 1]$ such that on those sub-intervals we have $M(s, \mathbf{f}^{\text{equal}})> M(s, \mathbf{f}_B^{\text{equal}})$. We also observe that the total capacity of all firms is the same under the tiered and the non-tiered case. That is, 
\begin{align*}
    Nc &= \int_{\tau''_{1^{\text{equal}}}}^{\tau_{1^{\text{equal}}}}M(s, \mathbf{f}^{\text{equal}})\varphi(s)ds + \int_{\tau_{1^{\text{equal}}}}^{1}M(s, \mathbf{f}^{\text{equal}})\varphi(s)ds \\
    & =\int_{\tau'_{1^{\text{equal}}}}^{\tau_{1^{\text{equal}}}}M(s, \mathbf{f}_B^{\text{equal}})\varphi(s)ds + \int_{\tau_{1^{\text{equal}}}}^{1}M(s, \mathbf{f}_A^{\text{equal}})\varphi(s)ds = Nc,
\end{align*}
which is impossible. Therefore, we must have $\tau_1'^{\text{equal}}<\tau''_{1^{\text{equal}}}$. 

Now suppose either the tier A firms or the tier B firms are at a variable-utility Nash equilibrium under the tiered case. We will show that $\tau_1^\text{var}<\tau_{1^{\text{equal}}}$. Suppose, for contradiction, $\tau_{1^\text{var}}\geq\tau_{1^{\text{equal}}}$. Since we have $U_n(s)<U_m(s)$ for some $n<m$ under the variable-utility setting, we must have $\tau'_{i^\text{var}}>\tau'_{i^{\text{equal}}}$ or $\tau_i^\text{var}>\tau_{i^{\text{equal}}}$ for some index $i$. That is, $M(s, \mathbf{f}_{B^{\text{equal}}})> M(s, \mathbf{f}^{\text{var}_B})$ or $M(s, \mathbf{f}_{A^{\text{equal}}})> M(s, \mathbf{f})^{\text{var}_A})$ for some $s$. Then we have 
\begin{align*}
    Nc &= \int_{\tau'_{1^\text{var}}}^{\tau_{1^\text{var}}}M(s, \mathbf{f}^\text{var}_A)\varphi(s)ds + \int_{\tau_{1^\text{var}}}^{1}M(s, \mathbf{f}^\text{var}_B)\varphi(s)ds \\
    & <\int_{\tau'_{1^{\text{equal}}}}^{\tau_{1^{\text{equal}}}}M(s, \mathbf{f}^{\text{equal}}_A)\varphi(s)ds + \int_{\tau_{1^{\text{equal}}}}^{1}M(s, \mathbf{f}^{\text{equal}}_B)\varphi(s)ds = Nc,
\end{align*}
which again is a contradiction. Therefore, we must have $\tau'_{1^\text{var}}<\tau'_{1^{\text{equal}}}<\tau''_{1^{\text{equal}}}$. 

Finally, we only need to show the case where the non-tiered firms form a variable-utility Nash equilibrium. That is, we will show $\tau'_{1^{\text{equal}}}<\tau''_{1^\text{var}}$. Suppose, for contradiction,  $\tau'_{1^{\text{equal}}}\geq \tau''_{1^\text{var}}$. We will first define the following set of thresholds: $\{\tau''_{i^\text{ext}}\}$ with $i\in [n^\text{var}]$ such that $\tau''_{i^\text{ext}} = i\tau''_{i^\text{var}}$. Define $\mathbf{f}^\text{ext}$ to be the strategy profile corresponding to the set of thresholds $\tau''_{i^\text{ext}}$. That is, there are exactly $i$ firms on the interval $[\tau''_{i^\text{ext}}, \tau''_{{i+1}^\text{ext}]}$. 

Next, let's first consider the case where $\tau''_{2^\text{var}}< \tau_{1^{\text{equal}}}$. Then on the interval $[\tau_{1^{\text{equal}}},1]$, we have $M(s, \mathbf{f}^\text{var})-M(s, \mathbf{f}_A^{\text{equal}})\geq 1$, and the inequality is strictly on some sub-intervals of $[\tau_{1^{\text{equal}}},1]$. Therefore, we must have 
\begin{align*}
    \int_{\tau_{1^{\text{equal}}}}^{1}M(s, \mathbf{f}^\text{var})\varphi(s)ds - \int_{\tau_{1^{\text{equal}}}}^{1}M(s, \mathbf{f}^{\text{equal}}_A)\varphi(s)ds > \int_{\tau_{1^{\text{equal}}}}^{1}\varphi(s)ds = c,
\end{align*}
because there is at least one tier A firm. Therefore, the non-tiered NE has at least c more capacity on this interval than the tiered NE under this setting. Next, consider the interval $[\tau''_{1^\text{var}}, \tau_{1^{\text{equal}}}]$. By assumption, we have $\tau'_{i^{\text{equal}}}=i\tau'_{1_{\text{equal}}}>i\tau''_{1^\text{var}}=\tau''_{i^\text{ext}}$. Let $I$ be the set of indices such that $\tau''_{i^\text{var}}>\tau''_{i^\text{ext}}$. Then the total capacity under the tiered setting can exceed the total capacity under the non-tiered setting on the interval $[\tau''_{1^\text{var}}, \tau_{1^{\text{equal}}}]$ by at most
\begin{align*}
    &\sum_{i\in I, i\leq n_B}\int_{\tau'_{i^{\text{equal}}}}^{\tau''_{i^\text{var}}} \varphi(s)ds - \sum_{i\notin I, i\leq n_B}\int_{\tau''_{i^\text{var}}}^{\tau'_{i^{\text{equal}}}} \varphi(s)ds \\
    & \leq \sum_{i\in I, i\leq n_B}\int_{\tau'_{i^{\text{equal}}}}^{\tau''_{i^\text{var}}} \varphi(s)ds\\
    & \leq \sum_{i\in I, i\leq n_B}\int_{\tau''_{i^\text{ext}}}^{\tau''_{i^\text{var}}} \varphi(s)ds\\
    & \leq  \sum_{i\in I}\int_{\tau''_{i^\text{ext}}}^{\tau''_{i^\text{var}}} \varphi(s)ds \leq c.
\end{align*}
The last inequality holds because for the firm that interviews the applicant with score $\tau''_{1^\text{var}}$ must also interview everyone in $\cup_{i\in I}[\tau''_{i^\text{ext}}, \tau''_{i^\text{var}}]$ and its capacity cannot exceed $c$. Therefore, combining the two results above, we conclude that the tiered setting will have strictly less total capacity than the non-tiered setting, which is not possible. 

Finally, we consider the case where $\tau''_{2^\text{var}}\geq \tau_{1^{\text{equal}}}>\tau''_{1^\text{var}}$. Again, assume that for contradiction, $\tau'_{1^{\text{equal}}}\geq \tau''_{1^\text{var}}$. We first notice that we've already shown that $\tau'_{1^{\text{equal}}}<\tau''_{1^{\text{equal}}}$. Then, under the assumption we made,  we have $\tau''_{1^\text{ext}}<\tau'_{1^{\text{equal}}}<\tau''_{1^{\text{equal}}}$. The total capacity under $\mathbf{f}^\text{var}$ exceeds the total capacity under ${\mathbf{f}_A}^{\text{equal}}\cup {\mathbf{f}_B}^{\text{equal}}$ by at least
\begin{align}
    &\sum_{1\leq i\leq n_B}\int_{\tau''_{i^\text{ext}}}^{\tau'_{i^{\text{equal}}}} \varphi(s)ds + \sum_{i\geq 2}\int_{\tau''_{i^\text{ext}}}^{\tau_{i^{\text{equal}}}} \varphi(s)ds - \sum_{i\geq 1}\int_{\tau''_{i^\text{ext}}}^{\tau''_{i^\text{var}}} \varphi(s)ds  \nonumber\\
    & = \sum_{1\leq i\leq n_B}\int_{\tau''_{i^\text{ext}}}^{\tau'_{i^{\text{equal}}}} \varphi(s)ds -\int_{\tau''_{1^\text{ext}}}^{\tau''_{1^\text{var}}} \varphi(s)ds + \sum_{i\geq 2}\int_{\tau''_{i^\text{var}}}^{\tau_{i^{\text{equal}}}} \varphi(s)ds  \nonumber\\
    & = \sum_{1\leq i\leq n_B}\int_{\tau''_{i^\text{ext}}}^{\tau'_{i^{\text{equal}}}} \varphi(s)ds -\int_{\tau''_{1^\text{ext}}}^{\tau''_{1^\text{var}}} \varphi(s)ds
    -\int_{\tau''_{1^\text{var}}}^{\tau_{1^{\text{equal}}}} \varphi(s)ds
    + \sum_{i\geq 1}\int_{\tau''_{i^\text{var}}}^{\tau_{i^{\text{equal}}}} \varphi(s)ds \nonumber\\
    & = \sum_{1\leq i\leq n_B}\int_{\tau''_{i^\text{ext}}}^{\tau'_{i^{\text{equal}}}} \varphi(s)ds -\int_{\tau''_{1^\text{ext}}}^{\tau_{1^{\text{equal}}}} \varphi(s)ds + \text{total capacity of tier B firms} \nonumber\\
    & = \sum_{2\leq i\leq n_B}\int_{\tau''_{i^\text{ext}}}^{\tau'_{i^{\text{equal}}}} \varphi(s)ds -\int_{\tau'_{1^{\text{equal}}}}^{\tau_{1^{\text{equal}}}} \varphi(s)ds + \text{total capacity of tier B firms} \label{eq:capacity_prop}
\end{align}
We observe that the $\text{total capacity of tier B firms}\geq \int_{\tau'_{1^{\text{equal}}}}^{\tau_{1^{\text{equal}}}} \varphi(s)ds$. If $n_B > 1$, then clearly the equation \ref{eq:capacity_prop} is greater than 0. If $n_B = 1$, then we have $M(s, \mathbf{f}^\text{var})\geq M(s, \mathbf{f}_A^{\text{equal}}\cup \mathbf{f}_B^{\text{equal}}$) for all $s\in [0,1]$. Specifically, since $\tau''_{1^\text{var}}<\tau'_{1^{\text{equal}}}$, the inequality is strict. Hence, $\mathbf{f}^\text{var}$ must have strictly more capacity. This is again impossible. Hence, we've shown that in all the possible cases, we must have $\tau'_{1^{\text{equal}}}<\tau''_{1^\text{var}}$. Since we've shown $\tau'_{1^\text{var}}\leq\tau'_{1^{\text{equal}}}$, we conclude that $\tau'_1<\tau''_1$ under any possible scenario. Hence, the social welfare under the tiered setting is always higher than the social welfare under the non-tiered setting. 
\end{myproof}

\section{Extension: Mixed Strategies}\label{appendix:B}

In our model, we have assumed each firm $i$ decides on a deterministic function $f_i(s):[0,1]\rightarrow\{0,1\}$. In this section, we extend our model to allow for randomized strategies. Specifically, instead of consistently choosing a single action $f_i$, Firm $i$ now selects an action $f_{i_j}$ from the finite set of pure strategies $\{f_{i_1}, f_{i_2},\cdots, f_{i_{K_i}}\}$ with probability $p_j$ where $j\in [K_i]$, $K_i\in \mathbb{Z}$, and $\sum_{j=1}^K p_j = 1$. 

We show that even in this more general setting with randomized firm decisions, a pure strategy Nash equilibrium always exists.  Thus, our earlier results naturally extend to this broader framework.

\begin{theorem} \label{thm:mix_strategy}
    In any Nash equilibrium, each firm's strategy must be deterministic. 
\end{theorem}

Before proving the theorem, we first present the following lemma:
\begin{lemma}\label{lemma:mix_strategy}
    Let $\mathbf{f}$ be an arbitrary action profile. For any constant t and distribution $\mathcal{D}$, $\mathbb{P}_{S\sim \mathcal{D}}( U_{M(s, \mathbf{f})}(s) = t) = 0$.
\end{lemma}
\begin{myproof}
Let $t$ be an arbitrary constant and $S = \{s\in [0,1]: U_{M(s, \mathbf{f})}(s) = t\}$. We will show that the set $S$ contains finitely many points. Suppose $\{f_{i_1}, f_{i,2}\cdots, f_{i,K_i}\}$ is the strategy set for firm $i$. By assumption, under each strategy $f_{i_j}$, applicants interviewed by firm $i$ can be written as a union of finite intervals. Let $S_{i_j}=\{s\in[0,1]:f_{i_j}(s)=1\}$, $\cup_{j=1}^{K_i}S_{i_j}$ is the union of a finite number of intervals. Similarly, $\cup_{i=1}^N\cup_{j=1}^{K_i}S_{i_j}$ is also a union of finitely many intervals. Recall that $M(s, \mathbf{f})$ is the expected number of firms that interview an applicant of score $s$. We conclude that there are finitely many possible values for $M(s, \mathbf{f})$. We note that for each fixed $n$, $U_n(s)$ is strictly increasing in $s$. Therefore, for each value of $M(s, \mathbf{f})$, $U_{M(s, \mathbf{f})}(s) = t$ has a unique solution. Since there are finitely many values for $M(s, \mathbf{f})$, the set $S$ is also finite. Therefore, we must have $\mathbb{P}(s\in S)=0$. 
\end{myproof}

\begin{myproof}[\text{Proof of Theorem }\ref{thm:mix_strategy}]
    We want to show that each firm's action must be deterministic under Nash equilibrium. Suppose,  by contradiction, there is an action profile $\mathbf{f} = (f_1, \cdots, f_N)$ that is a Nash equilibrium with at least one firm's action $f_i$ being non-deterministic. For the randomized strategy $f_i$, we will show that there exists some pure strategy $f'_i$ that yields strictly higher utility. Let the mixed strategy of firm $i$ be given by the finite set of pure strategies 
$\{ f_{i1}, f_{i2}, \ldots, f_{iK_i} \}$, 
where firm $i$ plays $f_{ij}$ with probability $p_j$, 
$j \in [K_i]$, $\sum_{j=1}^{K_i} p_j = 1$. 

Fix all other firms' strategies $f_{-i}$, and consider firm $i$’s utility under each pure strategy $f_{ij}$. 
Let $S_j = \{ s \in [0,1] : f_{ij}(s) = 1 \}$ 
and $\hat{S} = \bigcup_{j \in [K_i]} S_j$ be the union of all score intervals where firm $i$ interviews applicants under at least one of its pure strategies. 
Define the threshold score
\[
t^* = \max \{ t : \mathbb{P}(s \in\hat S : U_{M(s,f)}(s) > t) =c\}
\]
and the set $$S^* = \{s\in S: U_{M(s, \mathbf{f})}(s)>t^*\}.$$
We then define a pure strategy 
$f_i'(s) = 1$ whenever $U_{M(s,f)}(s) \ge t^*$ 
and $f_i'(s) = 0$ otherwise.
Intuitively, firm $i$ interviews exactly those applicants whose expected marginal utility exceeds $t^*$.

Next, we will show that $u(f'_i, f_{-i})\geq u(f_{i_j}, f_{-i})$ for each $j\in[k]$.  We first observe that, following the proof of condition 1 in Theorem \ref{thm:equilibrium} , any strategy profile $S_j$ that is a Nash equilibrium must satisfy $P(s\in S_j) = c$. Otherwise, we can always expand the set $S_j$ to increase the utility. As a result, we must have $\mathbb{P}(s\in S^*) = \mathbb{P}(s\in S_j)$ for any arbitrary $j\in[K_i]$. Equivalently, we have $\mathbb{P}(s\in S^*\setminus S_j) = \mathbb{P}(s\in S_j \setminus S^*)$ for any $j\in[K_i]$. Therefore,

\begin{align*}
    u(f'_i, f_{-i}) &= \int_{S^*}U_{M(s, \mathbf{f})}(s) \varphi(s) ds\\
    & =  \int_{S^*\cap S_j}U_{M(s, \mathbf{f})}(s) \varphi(s) ds +  \int_{S^*\setminus S_j}U_{M(s, \mathbf{f})}(s) \varphi(s) ds\\
     & \geq  \int_{S^*\cap S_j}U_{M(s, \mathbf{f})}(s) \varphi(s) ds +  \Pr(s\in S^*\setminus S_j)t^*\\
     & \geq  \int_{S^*\cap S_j}U_{M(s, \mathbf{f})}(s) \varphi(s) ds +  \int_{S_j\setminus S^*}U_{M(s, \mathbf{f})}(s) \varphi(s) ds\\
     & = u(f_{i_j}, f_{-i})
\end{align*}
For the equality to hold, we must have $U_{M(s, \mathbf{f})}(s) = t^*$ for both $S^*\setminus S_j$ and $S_j\setminus S^*$. The inequality is strict whenever $\varphi(s) > 0$ on 
$(S^* \setminus S_j) \cup (S_j \setminus S^*)$, 
which must occur if $f_i$ is non-deterministic, since by Lemma~\ref{lemma:mix_strategy} 
the set $\{ s : U_{M(s,f)}(s) = t^* \}$ has measure zero. 
Hence there exists at least one $j$ such that
\[
u(f_i', f_{-i}) > u(f_{ij}, f_{-i}).
\]

The total utility from firm $i$’s mixed strategy is
\[
u(f_i, f_{-i}) 
= \sum_{j \in [K_i]} p_j \, u(f_{ij}, f_{-i})
< \sum_{j \in [K_i]} p_j \, u(f_i', f_{-i})
= u(f_i', f_{-i}).
\]
Thus, the pure strategy $f_i'$ yields strictly higher utility than any mixed combination, contradicting the assumption that $f_i$ is a best response.
\end{myproof}

\end{document}